\documentclass[a4paper,UKenglish,cleveref,autoref,thm-restate]{lipics-v2021}
\nolinenumbers
\usepackage[utf8]{inputenc}
\usepackage{amsmath,amssymb}
\allowdisplaybreaks[1]
\usepackage{lmodern}
\usepackage[T1]{fontenc}
\usepackage{stmaryrd}
\usepackage{microtype}
\makeatletter
\newenvironment{compactenum}[1][]{\if!#1!\enumerate\else\enumerate[#1]\fi}{\endenumerate}

\makeatother
\usepackage{tikz}
\usepackage{bm}
\usepackage{etoolbox}
\usepackage{algorithm}
\usepackage{algpseudocode}
\usepackage{prftree}

\let\phi\varphi
\newcommand*{\Nat}{\mathbb{N}}
\renewcommand*{\Re}{{\mathbb{R}_{\geq 0}}}
\newcommand*{\Int}{\mathbb{Z}} 
\newcommand*{\Events}{\mathcal{P}}
\newcommand*{\tms}{\tau}
\newcommand*{\tmsdiff}{\delta}
\newcommand*{\ev}{\sigma}

\newcommand*{\len}[1]{|#1|}

\newcommand*{\Fintraces}{\mathcal{T}}
\newcommand*{\Netraces}{\mathcal{T}^+}

\newcommand*{\Monitor}{M}

\newcommand*{\States}{Q}
\newcommand*{\Init}{q_0}
\newcommand*{\Step}{\mu}
\newcommand*{\Out}{y}
\newcommand*{\Tuple}[1]{\langle #1\rangle}
\newcommand*{\Sem}[1]{\llbracket #1\rrbracket}
\newcommand*{\Change}{C}

\newcommand*{\Extract}[1]{\mathord{\downarrow} #1}
\newcommand*{\Build}[1]{\mathord{\uparrow} #1}
\newcommand*{\BE}[1]{\mathord{\updownarrow} #1}

\newcommand*{\BEone}[1]{\mathord{\updownarrow\mkern -3mu\updownarrow} #1}

\newcommand*{\bisim}{\simeq}
\newcommand*{\deepbisim}{\cong}
\newcommand{\Sf}{\mathsf{sf}}

\newcommand*{\hypermodels}{\mathrel{\bm{|}}\mathrel{\mkern-4mu}\mathrel{\bm{=}}}
\newcommand*{\tforall}{\bm{\forall}}
\newcommand*{\texists}{\bm{\exists}}

\newcommand*{\Props}{\mathcal{P}}

\newcommand*{\HAlways}{\blacksquare}

\newcommand*{\Once}{\blacklozenge}
\newcommand*{\Prev}{\mathord{\raisebox{-0.3pt}{\tikz{\fill (0,0) circle[radius=3.3pt];}}}}

\newcommand*{\Since}{\mathbin{\mathsf{S}}}

\newcommand*{\True}{\mathsf{true}}
\newcommand*{\False}{\mathsf{false}}
\newcommand*{\Imp}{\rightarrow}
\newcommand*{\Iff}{\leftrightarrow}
\newcommand*{\sattptl}{\textsc{sat1TPTL}}

\newcommand*{\HTrue}{\textsf{\bfseries true}}

\newcommand*{\HNeg}{\bm{\lnot}}
\newcommand*{\HImp}{\mathrel{\bm{\rightarrow}}}
\newcommand*{\HIff}{\mathrel{\bm{\leftrightarrow}}}
\newcommand*{\HConj}{\mathbin{\bm{\land}}}
\newcommand*{\HDisj}{\mathbin{\bm{\lor}}}
\newcommand*{\HSince}{\mathbin{\textsf{\bfseries \normalshape S}}}
\newcommand*{\HLift}[2]{\{#1\}_{#2}}

\newcommand*{\Embed}[1]{\llparenthesis #1\rrparenthesis}

\DeclareMathOperator{\img}{img}

\definecolor{red}{RGB}{200,80,80}
\newrobustcmd{\todo}[1]{\textcolor{red}{[TODO: {#1}]}}

\title{Policy Change for Treelike Monitors}

\author{François Hublet}{Institute of Information Security, Department of Computer Science, ETH Zurich, Switzerland \and \url{https://hblt.eu/sci}}{francois.hublet@inf.ethz.ch}{https://orcid.org/0000-0001-5419-3125}{}
\author{Dhruv Nevatia}{Institute of Information Security, Department of Computer Science, ETH Zurich, Switzerland \and \url{https://notdhruv.github.io}}{dhruv.nevatia@inf.ethz.ch}{https://orcid.org/0009-0008-0845-6754}{}
\author{Joshua Schneider}{Institute of Information Security, Department of Computer Science, ETH Zurich, Switzerland}{}{}{}

\authorrunning{F. Hublet, D. Nevatia and J. Schneider}

\Copyright{François Hublet, Dhruv Nevatia and Joshua Schneider}

\ccsdesc[500]{Theory of computation~Logic and verification}
\ccsdesc[300]{Security and privacy~Formal methods and theory of security}

\keywords{Policy change, Monitoring, Online algorithms, Metric temporal logic, Hyperlogics}

\relatedversion{}

\begin{document}

\maketitle

\begin{abstract}
  We study the policy change problem that arises in the runtime verification of long-running systems.
  The online monitors typically used in this context are generally \emph{treelike}, in that they
  maintain substates that monitor subformulae of the target policy.
  We consider when and how the policy can be changed while the monitored system is running
  by only exploiting the information stored in the monitor's state.
  This is relevant, for example, to account for new system functionality or changes in
  regulatory requirements.
  We formally define the policy change problem in a general setting,
  independent of any specific (treelike) monitor implementation.
  We then show that policy change for past-time metric temporal logic (pMTL) is decidable but has tight non-primitive recursive lower and upper bounds,
  while with discrete-time semantics it is EXPSPACE-complete.
\end{abstract}

\section{Introduction}

Runtime verification is a lightweight approach to formal verification.
Its primary tool is the \emph{monitor}, which checks individual runs of a system for compliance with a \emph{policy}.
Policies are typically trace properties expressed using a suitable policy language.
\emph{Online} monitors run simultaneously to the monitored system without requiring that all
system events be stored in a log file.
They are thus especially useful if the system runs indefinitely, as is common for, e.g., web servers.

Most practically used monitoring algorithms and tools~\cite{conf/fm/PnueliZ06,journals/entcs/ThatiR05,journals/jacm/BasinKMZ15,journals/fmsd/BasinBKT19,conf/atva/RaszykBKT19,fmsd/HavelundPU20,conf/tacas/LimaHRTY23} operate on a treelike data structure
that maintains a monitor substate for every subformula of the target policy. At every point in time, the satisfaction
of every subformula is computed bottom-up, and substates are updated to contain enough information to continue monitoring the trace
in the future. 
These monitors, which we call \emph{treelike}, are the primary focus of
this paper.

Long-running systems may be reconfigured to a substantial degree at runtime.
For example, consider a registry, operated by some public authority,
that stores records in a database.
Initially, our registry may be required by law to notify users about certain administrative actions within
30~days.
If that legal deadline is shortened to 15~days, this should be reflected in the policy so that the monitor can detect if the system fails to implement the new requirement.
A~trivial approach to changing the policy consists in stopping the old monitor and starting a new one, initialized using the existing trace.
This has two major drawbacks:
first, it assumes that a sufficient history of events has been saved; second, it requires re-processing a potentially large fragment of the event trace.

\looseness=-1
\emph{Problem statement.}  We say that a monitor supports \emph{policy change} from policy $\phi_1$ to policy $\phi_2$ if,
after monitoring a trace prefix against $\phi_1$, it can continue monitoring the trace's remainder against $\phi_2$ using only the information in its state.
This avoids the two drawbacks mentioned above.
For example, the policy change corresponding to shortening the registry's notification deadline is always possible.
When an administrative action has not been reported yet, the monitor must keep track of the time elapsed since the action's most recent occurrence, up to the old deadline.
This information is sufficient to judge whether the new, shorter deadline is violated.
However, policy change is not always possible, as an efficient monitor's state typically does not contain enough information to reconstruct all past events.
As an extreme example, consider the case where $\phi_1$ is the trivially satisfied policy,
i.e., $\phi_1 = \True$. Clearly, a monitor for $\phi_1$ need not store any information in its
state, and hence this state could not be used to construct a valid monitor state for any non-trivial $\phi_2$.
This leads us to the following research question: under which condition is policy change possible?
In this paper, we answer 
this question for past-time metric temporal logic (pMTL) with both real-time and discrete-time semantics~\cite{journals/rts/Koymans90,journals/iandc/AlurH93} as the policy language.

\looseness=-1
\emph{Related work.} The importance of policy change has been recognized in the context of self-adaptive systems,
which use monitors to enact adaptation procedures if a violation of the system's requirements is detected~\cite{conf/iwssd/FeatherFLP98}.
Carwehl et~al.~\cite{conf/seams/CarwehlVRG23} present a policy change approach for monitors constructed from timed automata templates, which correspond to a fragment of future-time MTL.
They work with a manually crafted catalog of so-called \emph{property adaptation patterns} that directly modify the monitor's automaton and current state.
Yuan~\cite{Yuan22} presents a sound but incomplete algorithm that can perform changes between some pairs of first-order policies in the MonPoly tool~\cite{journals/jacm/BasinKMZ15}.
In contrast to these works, we consider a decision procedure that determines, for every policy pair, whether adaptation is possible, independent of the specific monitoring algorithm used.
Changing specifications at runtime has also been studied from the perspective of controller synthesis from modal sequence diagrams~\cite{conf/icse/GhezziGM12}.
There, the main problem is to delay the change until it can be done safely, whereas monitors are expected to respond to whatever changes occur in their environment.

\emph{Contributions and outline}. After reviewing runtime monitoring with treelike monitors and pMTL (Section~\ref{sec:prelim}), we make the following contributions:
\pagebreak[3]
\begin{compactenum}
  \item We give a generalised definition of policy change for treelike monitors independent of any specific monitor implementation (Section~\ref{sec:change}).
  \item We solve the decision problem associated with policy change for real-time pMTL with non-primitive recursive complexity by reducing policy change to 1-TPTL satisfiability over finite words~\cite{book/Haase10}. We prove that such policy change is NPR-hard via a reduction from pMTL satisfiability on finite traces (Section~\ref{sec:continuous}). 
  \item We prove that policy change for discrete-time pMTL is EXPSPACE-complete
    and policy change for pLTL is PSPACE-complete (Section~\ref{sec:discrete}).
  \item We show that if we remove the assumption that monitors are treelike, discrete-time policy change for pMTL is still decidable and in 2EXPSPACE, while policy change for pLTL is in EXPSPACE (Section~\ref{sec:general}).
\end{compactenum}
The following table summarizes the results proven in Sections~\ref{sec:continuous} and~\ref{sec:discrete}. Proofs of all lemmata and theorems are provided in Appendix~\ref{apx:proofs}.

\vspace{2mm}
\resizebox{0.95\textwidth}{!}{
  \begin{tabular}{|c|c|c|}
    \cline{2-3}
    \multicolumn{1}{c|}{} & \textbf{Discrete time} & \textbf{Real time} \\
    \cline{1-3}
    \textbf{pMTL} & EXPSPACE-complete (Theorems~\ref{thm:pc ub discrete}, \ref{thm:pc lb discrete}) & NPR, NPR-hard (Theorems~\ref{thm:upper}, \ref{theorem:pc pmtl real-time lb}) \\
    \cline{1-3}
    \textbf{pLTL} & PSPACE-complete (Theorems~\ref{thm:pc ub discrete}, \ref{thm:pc lb discrete}) & $-$ \\
    \hline
  \end{tabular}}
\vspace{2mm}

\section{Preliminaries}
\label{sec:prelim}

\subsection{Past-time Metric Temporal Logic}

Traces represent sequences of actions or states that occur during a system's execution.
We consider discrete systems that can be described by a finite set of instantaneous actions or state changes modeled as events.
In runtime verification, traces are obtained by observing the running system and are thus finite.
We therefore model traces as finite timed words, using non-negative real numbers for the \emph{real} time domain and natural numbers for the \emph{discrete} time domain. Henceforth, we shall only consider real-time semantics unless stated otherwise.

Let $\Events$ be an alphabet of \emph{predicates}. A~trace $s$ of length $\len{s}$ is a pair $(\ev,\tmsdiff)$, where $\ev$ is a sequence of predicate sets $(\ev_i)_{0 \leq i < \len{s}} \in (2^{\Events})^{|\sigma|}$, and $\tmsdiff$ is a sequence of \emph{delays} $(\tmsdiff_i)_{0 \leq i < \len{s}} \in \Re^{|\sigma|}$. We call the index $i$ the \emph{time-point}, and the corresponding tuple $(\ev_i,\tmsdiff_i)$ an \emph{event}.
Each delay measures the amount of time that has passed since the preceding event or since the start of time for the first event.%
\footnote{The logics we study (pMTL and pLTL) cannot express properties relating to absolute time.
  Therefore, using delays simplifies policy change slightly since monitors need not store the most recent absolute timestamp.}
If the sequence $(\tmsdiff_i)$ is clear from the context, the \emph{(absolute) timestamp} $\tms_i$ is defined by $\sum_{j\leq i} \tmsdiff_j$.
We write $\Fintraces$ ($\Netraces$, resp.) for the set of all (non-empty, resp.) traces and $s \cdot s'$ for their concatenation.

A~\emph{policy} is a set of traces.
A~trace in the set \emph{satisfies} the policy, otherwise it \emph{violates} it.
We consider policies definable using \emph{past-time metric temporal logic} (pMTL).
The pMTL formulae over predicates $\Props = \{P,Q,\dots\}$ are defined as
\begin{align*}
  \phi ::= \True \mid P \mid \lnot \phi \mid \phi_1 \lor \phi_2 \mid \Prev_I \phi \mid \phi_1 \Since_I \phi_2,
\end{align*}
where $P$ ranges over $\Props$ and $I$ ranges over intervals of $\Nat$.
An~interval is either empty ($\emptyset$), represented by its endpoints in the bounded
case ($[a,b]$ with $a \leq b$, $(a, b)$, $[a, b)$ or $(a, b]$ with $a < b$, $a,b \in \Nat$), 
or by its lower bound in the unbounded case ($[a,\infty)$ or $(a,\infty)$, $a \in \Nat$).
An~omitted interval subscript defaults to $[0,\infty)$. 
We define other operators as abbreviations: $\False \equiv \lnot \True$, $(\phi_1 \land \phi_2) \equiv \lnot (\lnot \phi_1 \lor \lnot \phi_2)$, $(\phi_1 \Imp \phi_2) \equiv (\lnot \phi_1 \lor \phi_2)$, $\Once_I \phi \equiv (\True \Since_I \phi)$, and $\HAlways_I \phi \equiv \lnot (\True \Since_I \lnot \phi)$.
We use the convention that unary operators ($\lnot,\Prev,\Once,\HAlways$) bind the strongest, followed by $\Since$, conjunction, disjunction, and implication. The size $|\phi|$ of a formula is its total number of symbols, using binary encoding for interval bounds.

\begin{example}
  Consider the policy ``whenever the system processes data, it must notify the user within 30 days.''
  Compliance with this policy can be checked by monitoring the policy
  $\phi_A = \neg (\neg \mathsf{notify} \Since_{(30,\infty)} \mathsf{process})$, which reads
  ``it is not the case that we can say that there has been a $\mathsf{process}$ event more than 30 days ago, and, since then,
  there has not been any $\mathsf{notify}$ event.'' If the delay is reduced to 15 days, the formula becomes
  $\phi_B = \neg (\neg \mathsf{notify} \Since_{(15,\infty)} \mathsf{process})$.
  To specify ``whenever the system processes data, it must emit a notification every 30 days,''
  we can write $\phi_C = \neg (\neg \mathsf{notify} \Since_{(30,\infty)} \mathsf{notify})$.
\end{example}

Let $s$ be a trace over $\Props$ and $i < \len{s}$. 
The relation $\Tuple{s,i} \models \phi$, specifying the \emph{satisfaction} of $\phi$ on $s = (\ev,\tmsdiff)$ at $i$,
is the least relation closed under the following rules:
\begin{align*}
  \Tuple{s,i} &\models \True\\
  \Tuple{s,i} &\models P &&\text{if } P \in \ev_i \\
  \Tuple{s,i} &\models \lnot \phi &&\text{if } \Tuple{s,i} \not\models \phi\\
  \Tuple{s,i} &\models \phi_1 \lor \phi_2 &&\text{if } \Tuple{s,i} \models \phi_1 \text{ or } \Tuple{s,i} \models \phi_2\\
  \Tuple{s,i} &\models \Prev_I \phi &&\text{if $i > 0$, $\tau_i - \tau_{i-1} \in I$, and $\Tuple{s,i-1} \models \phi$} \\
  \Tuple{s,i} &\models \phi_1 \Since_I \phi_2 &&\text{if there exists $j \leq i$ such that $\tau_j - \tau_i \in I$,} \\[-\jot]
      &&&\text{$\Tuple{s,j} \models \phi_2$, and $\Tuple{s,k} \models \phi_1$ for all $k$ with $j < k \leq i$}
\end{align*}
The policy $\Sem{\phi}$ consists of all non-empty traces $s$ such that $\Tuple{s,\len{s}-1} \models \phi$.

Past-time linear temporal logic (pLTL) can be viewed as the fragment of pMTL where all intervals of $\Prev$ and $\Since$ operators
are $[0,\infty)$. On this fragment, the truth value of a formula does not depend on the relative or absolute timestamps.

\subsection{Online Monitors}
\label{sec:online monitors}

An~\emph{online monitor} for a policy receives a (possibly unbounded) trace incrementally and determines whether the current prefix satisfies the policy.
We formalize an online monitor $\Monitor^\phi$ for a policy $\phi$ as a deterministic transducer $M^\phi = \Tuple{\States^\phi,\Init^\phi,\Step^\phi,\Out^\phi}$, where $\States^\phi$ is the monitor's state space, $\Init^\phi \in \States^\phi$ is the initial state, $\Step^\phi : \States^\phi \times (2^\Props \times \Re) \to \States^\phi$ is the step function, and $\Out^\phi : \States^\phi \times (2^\Props \times \Re) \to \{0,1\}$ specifies the output. 
We canonically lift $\Step^\phi$ and $\Out^\phi$ to traces such that $\Out^\phi$ is defined only for non-empty traces and returns the last output.
In the following, we refer to ``online monitors'' as just ``monitors''.
A~\emph{generic monitor} $\Monitor$ is a mapping from policies $\phi$ to monitors $\Monitor^\phi$.
The generic monitor is correct iff it satisfies
$\Out^\phi(\Init^\phi,s) = (\text{if } s \in \Sem{\phi}\text{ then }1\text{ else } 0)$
for all pMTL formulae $\phi$ and non-empty traces $s$ over $\Props$.
In this paper, all (generic) monitors are assumed to be correct.

\subsection{Treelike Monitors}
\label{sec:treelike}

Most monitors found in existing work rely on trees of \emph{temporal testers}~\cite{conf/fm/PnueliZ06} associated with
the subformulae of the target policy.
Temporal testers compute a sequence of truth values that indicate whether the subformula they monitor is satisfied or not
at each time-point.
Thati and Ro\c{s}u's dynamic programming algorithm~\cite{journals/entcs/ThatiR05} is constructed in this way;
when restricted to pMTL, it is an online monitor as in Section~\ref{sec:online monitors}. 
Similarly, most state-of-the-art tools such as Aerial~\cite{journals/fmsd/BasinBKT19}, Hydra~\cite{conf/atva/RaszykBKT19}, DejaVu~\cite{fmsd/HavelundPU20}, Explanator~\cite{conf/tacas/LimaHRTY23}, and the MFOTL monitor MonPoly~\cite{journals/jacm/BasinKMZ15} use
temporal testers internally when operating on pMTL specifications.

Next, we define substates.
Let $X,Y$ be two predicate variables.
The set of all pMTL operators is $\Omega = \{\True, \lnot X, X \lor Y\} \cup \Props \cup \bigcup_I \{\Prev_I X, X \Since_I Y\}$, where we represent the operands of unary and binary operators by $X$ and $Y$.
The syntax tree of every pMTL formula can be seen as a finite, rooted tree labeled by operators in $\Omega$.
A~\emph{monitor tree $T^\phi$ for $\phi$} is a tree isomorphic to $\phi$'s syntax tree, where each vertex is labeled by a temporal tester (a deterministic transducer) for the corresponding operator when viewed as a pMTL formula over $\Props \cup \{X,Y\}$. 
The monitor tree $T^\phi$ gives rise to a monitor $\Tuple{\States^{T^\phi},\Init^{T^\phi},\Step^{T^\phi},\Out^{T^\phi}}$.
Its states are trees of the same shape as $T^\phi$, labeled by the current states of the transducers
for the respective vertices. The transducer states are called the \emph{substates} of the tree. Given a vertex $v$ in the syntax tree of $\phi$ and a state $q \in Q^{T^\phi}$ in $T^\phi$, $T^\phi(v)$ denotes the monitor that corresponds to the subtree of temporal testers rooted at that vertex, and $q(v)$ denotes the corresponding subtree of states in $q$.
Each new event $(\ev_i,\tmsdiff_i)$ received by the monitor is input to all leaves, whose outputs are propagated bottom-up within the tree together with the delay $\tmsdiff_i$.
The monitor's output is obtained at the root.
Intuitively, a transducer tree is a monitor whose state reflects the syntax of the monitored formula.

\begin{example}
  Figure~\ref{fig:tree} shows monitor trees for  $\phi_A$,  $\phi_B$, and  $\phi_C$. These formulae contain atoms $\mathsf{notify}$ and  $\mathsf{process}$, 
  unary $\neg$ operators,
  and binary $\Since_{(x,\infty)}$ operators.
  The temporal tester for $\mathsf{ev} \in \{\mathsf{notify},\mathsf{process}\}$ is stateless
  and its output function is $y_{\mathsf{ev}}(q,\tau,s)=\mathsf{ev} \in s$.
  Similarly, 
  the temporal tester for $\neg$ is stateless and outputs the negation of the
  Boolean $b$   it receives from its successor.
  The temporal tester for $\Since_{(x,\infty)}$ is more complex. Its state contains two reals $(q_1,q_2)$, where
  $q_2$ can take the special value $\bot$. The real $q_1$ stores the latest timestamp in the trace. The real
  $q_2$ stores the timestamp of the \emph{earliest timepoint} at which the LHS was true and the RHS has been true since then.
  If such a timepoint does not exist, then $q_2$ is $\bot$. The step function $\mu_{\Since_{(x,\infty)}}$ uses the
  function $\mathsf{upd}$ to update $q_2$ for
  every new timepoint. The corresponding output function is $y_{\Since_{(x,\infty)}}(q_1,q_2)=q_1-q_2 > x$.
\end{example}

We do not restrict our analysis to monitor trees since this would be overly restrictive, excluding,
e.g., implementations that reuse state across identical subformulae.
Instead, we only require that there be a \emph{homomorphism} between the monitor's states and the states of some suitable monitor tree.
\begin{definition}\label{def:treelike}
  A~\emph{homomorphism} between a monitor $M^\phi =\Tuple{\States^\phi,\Init^\phi,\Step^\phi,\Out^\phi}$ and a monitor tree $T^\phi$ for $\phi$ is a function $f^\phi : Q^\phi \to Q^{T^\phi}$ satisfying
  $f^\phi(\Init^\phi) = \Init^{T^\phi}$,
  $f^\phi(\Step^\phi(q,x)) = \Step^{T^\phi}(f^\phi(q),x)$, and
  $\Out^\phi(q,x) = \Out^{T^\phi}(f^\phi(q),x)$
  for all $q \in Q^\phi$ and $x \in 2^\Props \times \Re$.
  The corresponding~\emph{treelike monitor} is the triple $\Tuple{\Monitor^\phi,T^\phi,f^\phi}$.
\end{definition}

A \emph{generic treelike monitor} is a triple of mappings $\Tuple{\Monitor,T,f}$ such that $\Monitor$ is a generic monitor
and for all $\phi$, the triple $\Tuple{\Monitor^\phi,T^\phi,f^\phi}$ is a treelike monitor. 

\begin{figure}[t]
  \centering
  \setlength{\belowcaptionskip}{-12pt}
  \scalebox{.9}{
  \begin{tikzpicture}
    \draw[draw=none,fill=black!10] (-1.5,-2.8) rectangle (1.5,0.5);
    \node[anchor=north west] at (-1.5,0.5)  {$M^{\phi_A}$};
    \node (phi) at (0,0) {$t_{\neg}$};
    \node (since) at (0,-0.8) {$t_{\Since_{(30,\infty)}}$};
    \node (neg) at (-1,-1.6) {$t_{\neg}$};
    \node (notify) at (-1,-2.4) {$t_{\mathsf{notify}}$};
    \node (process) at (1,-1.6) {$t_{\mathsf{process}}$};
    \draw[<-] (phi) to node[left] {\scriptsize $b$} (since);
    \draw[<-] (since) to node[left] {\scriptsize $b_l$} (neg);
    \draw[<-] (neg) to node[left] {\scriptsize $b$} (notify);
    \draw[<-] (since) to node[right] {\scriptsize $b_r$} (process);
  \end{tikzpicture}}
  \scalebox{.9}{
  \begin{tikzpicture}
    \draw[draw=none,fill=black!10] (-1.5,-2.8) rectangle (1.5,0.5);
    \node[anchor=north west] at (-1.5,0.5)  {$M^{\phi_B}$};
    \node (phi) at (0,0) {$t_{\neg}$};
    \node (since) at (0,-0.8) {$t_{\Since_{(15,\infty)}}$};
    \node (neg) at (-1,-1.6) {$t_{\neg}$};
    \node (notify) at (-1,-2.4) {$t_{\mathsf{notify}}$};
    \node (process) at (1,-1.6) {$t_{\mathsf{process}}$};
    \draw[<-] (phi) to node[left] {\scriptsize $b$} (since);
    \draw[<-] (since) to node[left] {\scriptsize $b_l$} (neg);
    \draw[<-] (neg) to node[left] {\scriptsize $b$} (notify);
    \draw[<-] (since) to node[right] {\scriptsize $b_r$} (process);
  \end{tikzpicture}}
  \scalebox{.9}{
  \begin{tikzpicture}
  \draw[draw=none,fill=black!10] (-1.5,-2.8) rectangle (1.5,0.5);
    \node[anchor=north west] at (-1.5,0.5)  {$M^{\phi_C}$};
    \node (phi) at (0,0) {$t_{\neg}$};
    \node (since) at (0,-0.8) {$t_{\Since_{(30,\infty)}}$};
    \node (neg) at (-1,-1.6) {$t_{\neg}$};
    \node (notify) at (-1,-2.4) {$t_{\mathsf{notify}}$};
    \node (process) at (1,-1.6) {$t_{\mathsf{notify}}$};
    \draw[<-] (phi) to node[left] {\scriptsize $b$} (since);
    \draw[<-] (since) to node[left] {\scriptsize $b_l$} (neg);
    \draw[<-] (neg) to node[left] {\scriptsize $b$} (notify);
    \draw[<-] (since) to node[right] {\scriptsize $b_r$} (process);
  \end{tikzpicture}}
  \vspace{1mm}

  \begin{tikzpicture}
    \node[anchor=west] at (2,0) {\scriptsize $t_{\neg} = \langle \lbrace \bot\rbrace, \bot, (\lambda q.~q), (\lambda (q,\tau,b).~\neg b) \rangle $};
    \node[anchor=west] at (7.8,0) {\scriptsize$t_{\Since_{(x,\infty)}} = \langle \Re \times (\Re \cup \{\bot\}), (0, \bot),$};
    \node[anchor=west] at (9.2,-0.3) {\scriptsize$(\lambda((q_1,q_2), \tau, b_l, b_r).~(\tau, \mathsf{upd}(q_1,q_2,b_l,b_r)),$};
    \node[anchor=west] at (9.2,-0.6) {\scriptsize$(\lambda (q_1,q_2).~q_1-q_2 > x)\rangle$};
    \node[anchor=west] at (7.8,-1.5) {\scriptsize$\mathsf{upd}(q_1,q_2,b_l,b_r) = \begin{cases} q_1 & \text{if }q_1 = \bot \land b_r = 1 \\ q_2  & \text{if } q_2 \neq \bot \land b_l = 1 \\ \bot & \text{otherwise} \end{cases}$};
    \node[anchor=west] at (2,-0.6) {\scriptsize$\forall \mathsf{ev} \in \{ \mathsf{notify}, \mathsf{process} \}.$};
    \node[anchor=west] at (2.3,-0.9) {\scriptsize$t_{\mathsf{ev}} = \langle \lbrace \bot\rbrace, \bot, (\lambda q.~q), (\lambda (q,\tau,s).~\mathsf{ev} \in s) \rangle$};
  \end{tikzpicture}
  \vspace{-6mm}
  \caption{Monitor trees $M^{\phi_A}$, $M^{\phi_B}$, and $M^{\phi_C}$\label{fig:tree}}
  \vspace{-2mm}
\end{figure}

\section{Policy Change}
\label{sec:change}

The \emph{policy change problem} for a generic monitor $\Monitor$ consists of transforming the state of the monitor after monitoring a policy $\phi_1$ on an arbitrary trace to a state for a different policy $\phi_2$.
The new state must be such that all future verdicts are the same as if the monitor had been checking $\phi_2$ since the beginning of the trace.
In this section, we formally define the associated decision problem, which asks whether it is always possible to change from $\phi_1$ to $\phi_2$.
We first outline the problem in a general setting (Section~\ref{sec:change approach}), and then specialize the problem statement
to the case of treelike monitors (Section~\ref{sec:treelike pc}).
\looseness=-1

\subsection{First Approach}
\label{sec:change approach}

For any monitor $\Monitor^\phi$, define \emph{bisimilarity} ($\bisim_{\Monitor^\phi}$) between two states of $\Monitor^\phi$ as
\begin{align*}
  q \bisim_{\Monitor^\phi} q' \iff \forall s \in \Netraces.\; \Out^\phi(q,s) = \Out^\phi(q',s).
\end{align*}
Observe that $\bisim_{\Monitor^\phi}$ is an equivalence relation.
Intuitively, bisimilar states are indistinguishable under any possible future input to the monitor.
We now say that an algorithm $\Change: Q^{\phi_1} \rightarrow Q^{\phi_2}$ implements policy change from $\phi_1$ to $\phi_2$ iff
\begin{equation}\label{eq:pc1}
  q \bisim_{\Monitor^{\phi_1}} \Step^{\phi_1}(\Init^{\phi_1},s) \implies \Change(q) \bisim_{\Monitor^{\phi_2}} \Step^{\phi_2}(\Init^{\phi_2},s)
\end{equation}
for all finite traces $s$ and states $q \in \States^{\phi_1}$.
Suppose that we started monitoring $\phi_1$ on the trace prefix $s_1$ using monitor $\Monitor^{\phi_1}$.
Its state at the end of the prefix is $q = \Step^{\phi_1}(\Init^{\phi_1},s_1)$, which trivially satisfies the precondition of \eqref{eq:pc1}.
We then compute $\Change(q)$ and replace the running monitor with $\Monitor^{\phi_2}$, using $\Change(q)$ as its initial state.
Monitoring some continuation $s_2$ of the trace yields
\[
  \Out^{\phi_2}(\Change(q),s_2) \stackrel{\text{(\ref{eq:pc1}), def. $\bisim$}}{=} \Out^{\phi_2}(\Step^{\phi_2}(\Init^{\phi_2},s_1),s_2) \stackrel{\text{Prop.~\ref{prop:monitor:simps}}}{=} \Out^{\phi_2}(\Init^{\phi_2},s_1\cdot s_2),
\]
which is the output we would have obtained by running $\Monitor^{\phi_1}$ from the trace's beginning.
The first equality follows from \eqref{eq:pc1} and the definition of bisimilarity.

One may wonder whether the condition on $\Change$ could be strengthened by requiring that $q$ be equal, rather than just bisimilar, to $\Step^{\phi_1}(\Init^{\phi_1},s_1)$, as in our example.
However, this would prevent us from performing multiple policy changes on the same trace.
E.g., the state $q' = \Step^{\phi_2}(\Change(q),s_2)$ obtained after monitoring the trace's continuation $s_2$ generally only satisfies $q' \bisim_{\Monitor^{\phi_2}} \Step^{\phi_2}(\Init^{\phi_2},s_1 \cdot s_2)$, but the states are not necessarily equal.
Hence, we could not perform a further policy change, say to $\phi_3$.
In contrast, our definition allows us to continue with the state $\Change'(q')$ whenever $\Change'$ implements policy change from $\phi_2$ to $\phi_3$.

The above definition is still limited as it relies on the syntactic structure of the monitor rather than its semantics. For instance,
let $\phi_1 \equiv (P \Since Q) \land \lnot \Once R$ and $\phi_2 \equiv P \Since Q$.
We can construct $\Monitor^{\phi_1}$ such that it enters and remains in a single unique state whenever it encounters an event satisfying $R$, since $\phi_1$ is necessarily violated from that point on.
Yet there are two classes of traces, characterized by the truth value of $\phi_2$ at the last position, that are distinguishable by $\Monitor^{\phi_2}$.
Hence, policy change from $\phi_1$ to $\phi_2$ is not possible according to our current definition.

\subsection{Policy Change for Treelike Monitors}
\label{sec:treelike pc}

If it is treelike, the monitor $\Monitor^{\phi_1}$ from our previous example must maintain a correct \emph{substate}
for the subformula $P \Since Q$. Using this substate, policy change to $\phi_2$ is possible.
Next, we generalize this observation by introducing a notion of deep bisimilarity between monitor trees and, by extension,
between treelike monitors.
Intuitively, deep bisimilarity is bisimilarity for each pair of matching substates.

\begin{definition}
  Fix a treelike monitor $\Tuple{\Monitor^\phi,T^\phi,f^\phi}$.
  Two states $q,q' \in \States^{T^\phi}$ are \emph{deeply bisimilar}, denoted $q \deepbisim_{T^\phi} q'$, iff for each vertex $v$ in the syntax tree of $\phi$, $q(v)$ and $q'(v)$ are bisimilar under $\bisim_{T^\phi(v)}$.
  Two states $q,q' \in \States^\phi$ are \emph{deeply bisimilar under $f^\phi$}, denoted $q \deepbisim_{f^\phi} q'$, iff $f^\phi(q) \deepbisim_{T^\phi} f^\phi(q')$.
\end{definition}

In general, vertices implementing the non-temporal operators $\{\True, \lnot X, X \lor Y\} \cup \Props$ need not be stateful.
Therefore, assuming an efficient implementation of the monitors, the relation $\deepbisim_{f^\phi}$ essentially enforces that the substates corresponding to $\phi$'s \emph{temporal subformulae} be bisimilar.
We now redefine policy change in terms of deep bisimilarity. In our new definition, the policy change algorithm has access to possibly more information (from substates), but it is also required to output a deeply bisimilar state, which is a stronger requirement than in definition~\eqref{eq:pc1}.

\begin{definition}\label{def:pc}
  Let $\Tuple{\Monitor,T,f}$ be a generic treelike monitor, where $\Monitor = \Tuple{\States,\Init,\Step,\Out}$.
  A~function $\Change : Q^{\phi_1} \rightarrow Q^{\phi_2}$ implements \emph{policy change} from $\phi_1$ to $\phi_2$ iff
  \begin{equation*}\label{eq:pc2}
    q \deepbisim_{f^{\phi_1}} \Step^{\phi_1}(\Init^{\phi_1},s) \implies \Change(q) \deepbisim_{f^{\phi_2}} \Step^{\phi_2}(\Init^{\phi_2},s)
  \end{equation*}
  for all finite traces $s$ and states $q \in \States^{\phi_1}$.
\end{definition}

\begin{example}
  Policy change is possible between $M^{\phi_A}$ and $M^{\phi_B}$ from Figure~\ref{fig:tree}:
  the only temporal tester with non-trivial state in both $M^{\phi_A}$ and $M^{\phi_B}$
  is $t_{\Since_{(x,\infty)}}$, whose state does not depend on $x$. Hence, the identity function
  implements policy change from $\phi_1$ to $\phi_2$ for any $M$ with $M^{\phi_A}$ and $M^{\phi_B}$
  as in Figure~\ref{fig:tree}.
\end{example}

\looseness=-1
In the following, our focus will be on the associated decision problem:
\begin{definition}[\textsc{PolicyChange}]~
  
  \textbf{Input}: a pair of pMTL formulae $\Tuple{\phi_1,\phi_2}$.
  
  \textbf{Output}: 1 if, for every generic treelike monitor $M$, there exists a function $C$
implementing policy change from $\phi_1$ to $\phi_2$, and 0 otherwise.
\end{definition}

\begin{example}
  \label{ex:impossible}
  The output of \textsc{PolicyChange} on the input $\langle \phi_A, \phi_C\rangle$ is 0.
  Consider the monitor trees $M^{\phi_A}$ and $M^{\phi_C}$ in Figure~\ref{fig:tree} and the trace prefixes
  \begin{gather*}
    s_1 = (0, \lbrace \mathsf{notify}\rbrace), (30, \emptyset)\qquad
    s_2 = (0, \emptyset), (30, \emptyset)
  \end{gather*}
  The final state of the monitor tree $M^{\phi_A}$ is the same when reading $s_1$ or $s_2$:
  the state of $\Since_{[30,\infty)}$ is $(30,\bot)$. Any function $C$ implementing
  policy change from $\phi_A$ to $\phi_B$ would map these two states to the same state $q$ of a
  treelike monitor $\hat{M}^{\phi_C}$. Extending each of $s_1$ and $s_2$ with $(31, \emptyset)$, we
  see that $\langle s_1 \cdot (31, \{\}), 2\rangle \models \phi_C$ but
  $\langle s_2 \cdot (31, \emptyset), 2\rangle \not\models \phi_C$. Denoting $x = (\emptyset,31)$,
  the former implies $\hat{y}^{\phi_3}(\hat{\mu}^{\phi_3}(q, x), x)=1$ and the latter
  $\hat{y}^{\phi_3}(\hat{\mu}^{\phi_3}(q, x), x)=0$, a contradiction.
\end{example}


\section{Deciding Policy Change with Real-Time Semantics}

\label{sec:continuous}

In this section, we decide \textsc{PolicyChange} in three steps.
First, given the old policy $\phi_1$, we compute a formula $\Extract{\phi_1}$, expressed in a suitable \emph{hyperlogic}, that encodes the information extractable from any monitor state.
Similarly, we compute a formula $\Build{\phi_2}$ encoding the information needed to assemble the new state.
Then, policy change is possible iff $\Extract{\phi_1}$  entails $\Build{\phi_2}$, i.e., iff $\Extract{\phi_1} \Imp \Build{\phi_2}$ is valid. We use a decision procedure for the hyperlogic to decide whether $\Extract{\phi_1} \Imp \Build{\phi_2}$ is valid.

This section is organized as follows. We first develop a general framework for reducing policy change to hyperlogic entailment
(Section~\ref{sec:entailment}). Then, we specialize our framework to the case of pMTL, introducing a new hyperlogic called HyperpTPTL
(Section~\ref{sec:information}). Then, we show how entailment in this hyperlogic can be reduced to 1-TPTL satisfiability to decide \textsc{PolicyChange} (Section~\ref{sec:algorithm}).

\subsection{Policy Change as Hyperlogic Entailment}
\label{sec:entailment}

\begin{figure}[t]
  \centering
  \setlength{\belowcaptionskip}{-8pt}
  \begin{subfigure}{0.48\textwidth}
    \centering
    \scalebox{.9}{
    \begin{tikzpicture}[scale=0.8]
      \draw[red, fill=red!10] (-1,0.5) circle (1cm);
      \draw[red, fill=red!10] (2.2,0) circle (1.8cm);

      \draw[blue, fill=blue!20, opacity=0.5] (-1,0.5) circle (0.8cm);
      \draw[blue, fill=blue!20, opacity=0.5] (1.5,-0.5) circle (0.7cm);
      \draw[blue, fill=blue!20, opacity=0.5] (2.5,0.7) circle (0.6cm);
      \draw[blue, fill=blue!20, opacity=0.5] (2.8,-0.5) circle (0.4cm);

      \fill (-1.5,0.5) circle (2pt);
      \fill (-0.8,0.8) circle (2pt);
      \fill (-1.2,0.2) circle (2pt);
      \fill (1.5,-0.3) circle (2pt);
      \fill (1.7,-0.7) circle (2pt);
      \fill (2.3,0.9) circle (2pt);
      \fill (2.7,0.5) circle (2pt);
      \fill (2.8,-0.3) circle (2pt);
    \end{tikzpicture}}
    \caption{Policy change $\langle {\color{blue}\phi_1},{\color{red}\phi_2}\rangle$ is possible}
  \end{subfigure}
  \begin{subfigure}{0.48\textwidth}
    \centering
    \scalebox{.9}{
    \begin{tikzpicture}[scale=0.8]
      \draw[red, fill=red!10] (-1,0.5) circle (1cm);
      \draw[red, fill=red!10] (2.2,0) circle (1.8cm);
      \draw[red, fill=red!10] (0.05,-0.7) circle (0.4cm);

      \draw[blue, fill=blue!20, opacity=0.5] (-1,0.5) circle (0.8cm);
      \draw[blue, fill=blue!20, opacity=0.5] (0.5,-0.5) circle (0.7cm);
      \draw[blue, fill=blue!20, opacity=0.5] (2.5,0.7) circle (0.6cm);
      \draw[blue, fill=blue!20, opacity=0.5] (2.8,-0.5) circle (0.4cm);

      \fill (-1.5,0.5) circle (2pt);
      \fill (-0.8,0.8) circle (2pt);
      \fill (-1.2,0.2) circle (2pt);
      \fill (0.6,-0.3) circle (2pt);
      \node[anchor=west] at (0.6,-0.3) {$s_1$};
      \node[anchor=west] at (0.3,-0.7) {$s_2$};
      \fill (0.3,-0.7) circle (2pt);
      \fill (2.3,0.9) circle (2pt);
      \fill (2.7,0.5) circle (2pt);
      \fill (2.8,-0.3) circle (2pt);
    \end{tikzpicture}}
    \caption{Policy change $\langle {\color{blue}\phi_1},{\color{red}\phi_2}\rangle$ is impossible}
  \end{subfigure}
  \caption{Configurations of equivalence classes of trace prefixes under $\deepbisim_{{\color{blue}\phi_1}}$ and $\deepbisim_{{\color{red}\phi_2}}$\label{fig:potatoes}}
\end{figure}

We begin by lifting the deep bisimilarity relation to finite traces.
\begin{definition}
  Let $T^\phi$ be a monitor tree. 
  Two traces $s,s' \in \Fintraces$ are \emph{indistinguishable by $T^\phi$}, written $s \deepbisim_{T^\phi} s'$, iff
  $\Step^{T^\phi}(\Init^{T^\phi},s) \deepbisim_{T^\phi} \Step^{T^\phi}(\Init^{T^\phi},s')$.
\end{definition}
This is again an equivalence relation.
Moreover, it does not depend on the implementation of $T^\phi$ (i.e., the transducers' states and their transition functions), but only on the tree structure,
which we have fixed for every pMTL formula.
\begin{lemma}\label{lemma:deepbisim:indep}
  For every pair $T^\phi,U^\phi$ of monitor trees for $\phi$, $s \deepbisim_{T^\phi} s'$ iff $s \deepbisim_{U^\phi} s'$.
\end{lemma}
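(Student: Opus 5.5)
The plan is to give a semantic characterization of $s \deepbisim_{T^\phi} s'$ that mentions only $\phi$, and hence is the same for $T^\phi$ and $U^\phi$. The natural candidate is
\[
  s \deepbisim_{T^\phi} s' \iff \text{for every vertex } v \text{ with subformula } \psi_v,\ \forall r \in \Netraces.\ \bigl(s\cdot r \in \Sem{\psi_v} \Leftrightarrow s'\cdot r \in \Sem{\psi_v}\bigr).
\]
In words, the two traces are indistinguishable exactly when every subformula agrees on all common non-empty continuations. This right-hand side depends only on the syntax tree of $\phi$, so the lemma follows immediately once the characterization is proved for an arbitrary monitor tree.

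To prove the characterization, I would first fix a vertex $v$ and show that the subtree $T^\phi(v)$ is a monitor for $\psi_v$. Its state after reading $s$ is $\Step^{T^\phi}(\Init^{T^\phi},s)(v) = \Step^{T^\phi(v)}(\Init^{T^\phi(v)},s)$, because inputs are fed to the leaves and propagated bottom-up. I would then argue that $\Out^{T^\phi(v)}(\Init,s\cdot r)$ equals the truth value of $\psi_v$ at the last position of $s\cdot r$. This uses the intended meaning of a monitor tree, namely that each vertex's tester is a correct transducer for its operator over the fresh predicates $X,Y$. The proof is by structural induction on $v$: if the children output the correct truth-value sequences of their subformulae, then the tester for the operator at $v$, being correct for $\Prev_I X$, $X \Since_I Y$, $\lnot X$, and so on, outputs the correct sequence for $\psi_v$. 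Given this, the lifted-trace identity (the paper's Prop.~\ref{prop:monitor:simps}) yields $\Out^{T^\phi(v)}(q_s(v),r) = \Out^{T^\phi(v)}(\Init,s\cdot r)$, where $q_s$ is the tree state after $s$. Bisimilarity of $q_s(v)$ and $q_{s'}(v)$ is therefore exactly agreement of $\psi_v$ on $s\cdot r$ and $s'\cdot r$ for all non-empty $r$. Since deep bisimilarity is defined vertexwise, taking the conjunction over all $v$ gives the characterization.

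The main obstacle is the induction step. There one must check that the tester at $v$ is fed precisely the truth values of its children's subformulae at each position, together with the delay. One must also check that its output then matches $\psi_v$ at every position, not only the last one. This requires the correctness assumption on testers to be read as correctness on all traces over $\Props\cup\{X,Y\}$, where the children's outputs are substituted for $X$ and $Y$. I would make this substitution argument explicit, using that the semantics of each operator at position $i$ depends only on the operands' truth values and the timestamps up to $i$. The remaining steps are routine unfolding of definitions.
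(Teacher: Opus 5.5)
Your proposal is correct and essentially the paper's argument. Both proofs fix a vertex $v$ and use Proposition~\ref{prop:monitor:simps} to turn $\Out^{T^\phi(v)}(q_s(v),r)$ into the output from the initial state on $s\cdot r$, which every monitor tree's subtree at $v$ computes in the same way, namely as membership in $\Sem{\psi_v}$; you pass explicitly through this semantic characterization and spell out the structural induction behind it, whereas the paper compares $T^\phi(v)$ and $U^\phi(v)$ directly and compresses that induction into the remark that both implement the same operator.
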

Therefore, we can simply write $s \deepbisim_\phi s'$ without specifying a concrete monitor tree.
Let $[s]_{\phi}$ denote the equivalence class under $\deepbisim_{\phi}$ that contains $s$.
These equivalence classes are the first step towards characterizing the possibility of policy change:

\begin{lemma}\label{lemma:pc:iff}
	For every generic treelike monitor $\Tuple{\Monitor,T,f}$, there exists a function implementing policy change from $\phi_1$ to $\phi_2$ iff $[s]_{{\phi_1}} \subseteq [s]_{{\phi_2}}$ for all finite traces $s$.
\end{lemma}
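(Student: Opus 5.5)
The plan is to prove both directions for a fixed generic treelike monitor $\Tuple{\Monitor,T,f}$. Everything rests on one bridging observation. Because $f^\phi$ is a homomorphism (Definition~\ref{def:treelike}), induction on the trace gives $f^\phi(\Step^\phi(\Init^\phi,s)) = \Step^{T^\phi}(\Init^{T^\phi},s)$ for every finite trace $s$. Hence, for all finite traces $s,s'$,
\[
  \Step^\phi(\Init^\phi,s) \deepbisim_{f^\phi} \Step^\phi(\Init^\phi,s') \iff s \deepbisim_{T^\phi} s' \iff s \deepbisim_\phi s',
\]
where the last equivalence is Lemma~\ref{lemma:deepbisim:indep}. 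I will also use that $\deepbisim_{f^\phi}$ is an equivalence relation on $\States^\phi$. This holds because it is the preimage under $f^\phi$ of $\deepbisim_{T^\phi}$, which is a vertexwise conjunction of the equivalence relations $\bisim_{T^\phi(v)}$.

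For the ``only if'' direction, suppose $\Change$ implements policy change from $\phi_1$ to $\phi_2$, and let $s' \in [s]_{\phi_1}$. Set $q = \Step^{\phi_1}(\Init^{\phi_1},s)$. By reflexivity, $q \deepbisim_{f^{\phi_1}} \Step^{\phi_1}(\Init^{\phi_1},s)$. By the bridging observation and $s \deepbisim_{\phi_1} s'$, also $q \deepbisim_{f^{\phi_1}} \Step^{\phi_1}(\Init^{\phi_1},s')$. Applying Definition~\ref{def:pc} once with $s$ and once with $s'$ shows that $\Change(q)$ is $\deepbisim_{f^{\phi_2}}$-related to both $\Step^{\phi_2}(\Init^{\phi_2},s)$ and $\Step^{\phi_2}(\Init^{\phi_2},s')$. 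Symmetry and transitivity then relate these two states to each other. The bridging observation for $\phi_2$ gives $s \deepbisim_{\phi_2} s'$, i.e.\ $s' \in [s]_{\phi_2}$.

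For the ``if'' direction, assume $[s]_{\phi_1} \subseteq [s]_{\phi_2}$ for all $s$, and define $\Change$ by choice. If $q \in \States^{\phi_1}$ satisfies $q \deepbisim_{f^{\phi_1}} \Step^{\phi_1}(\Init^{\phi_1},s_q)$ for some trace $s_q$, pick one such $s_q$ and set $\Change(q) = \Step^{\phi_2}(\Init^{\phi_2},s_q)$. Otherwise let $\Change(q)$ be arbitrary, say $\Init^{\phi_2}$; the implication in Definition~\ref{def:pc} is then vacuous. Now let $q \deepbisim_{f^{\phi_1}} \Step^{\phi_1}(\Init^{\phi_1},s)$. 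Then $s_q$ exists, and transitivity gives $\Step^{\phi_1}(\Init^{\phi_1},s_q) \deepbisim_{f^{\phi_1}} \Step^{\phi_1}(\Init^{\phi_1},s)$. The bridging observation turns this into $s \in [s_q]_{\phi_1} \subseteq [s_q]_{\phi_2}$. Applying the bridging observation for $\phi_2$ yields $\Change(q) \deepbisim_{f^{\phi_2}} \Step^{\phi_2}(\Init^{\phi_2},s)$, as required.

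The only delicate step is the bridging observation. It needs the homomorphism property lifted from single events to traces (a straightforward induction using the canonical lifting of $\Step$), together with the fact that $\deepbisim_{f^\phi}$ is defined through $f^\phi$. Once that and the equivalence-relation properties are in place, both directions are bookkeeping.
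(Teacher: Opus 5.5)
Your proof is correct and follows the paper's proof. Your bridging observation is exactly the paper's auxiliary lemma relating $s \deepbisim_\phi s'$ to $\Step^{\phi}(\Init^{\phi},s) \deepbisim_{f^\phi} \Step^{\phi}(\Init^{\phi},s')$ via the homomorphism, the ``only if'' direction applies Definition~\ref{def:pc} twice to $q = \Step^{\phi_1}(\Init^{\phi_1},s)$, and the ``if'' direction defines $\Change$ by choosing a witness trace, all as in the paper. The one nuance is that you prove the equivalence for each fixed monitor, whereas the paper reads the left-hand side as ``for every generic treelike monitor there exists $\Change$''. It therefore notes that at least one generic treelike monitor exists, so that hypothesis can be instantiated; your per-monitor version gives that reading once this trivial existence fact is added.
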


\looseness=-1
The condition $\forall s.~[s]_{\phi_1} \subseteq [s]_{\phi_2}$ from the above lemma and its negation $\exists s.~[s]_{\phi_1} \not\subseteq [s]_{\phi_2}$ are illustrated in Figure~\ref{fig:potatoes}, where each dot represents a trace prefix. Policy change is impossible iff, as in Example~\ref{ex:impossible}, one can find prefixes $s_1 \deepbisim_{\phi_1} s_2$ such that $s_1 \not\deepbisim_{\phi_2} s_2$.

The set of all classes of trace prefixes which are indistinguishable with respect to a given
formula is a \emph{hyperproperty}~\cite{journals/jcs/ClarksonS10}, i.e., a set of sets of traces.
%
To model these classes, we construct \emph{information formulae} $\Extract{\phi_1}$ and $\Build{\phi_2}$, in Section~\ref{sec:information} using an appropriate hyperlogic.
For reasons that will become apparent there, we assume that the hyperlogic semantics is only defined for sets of \emph{aligned} traces, i.e., the last absolute timestamp in all non-empty traces is the same.
The general idea behind the information formulae is that $\Extract{\phi_1}$ (over-)approximates the equivalence classes $[s]_{\phi_1}$.
Similarly, $\Build{\phi_2}$ \mbox{(under-)}approximates the equivalence classes $[s]_{\phi_2}$.
Now, if the formula $\Extract{\phi_1} \Imp \Build{\phi_2}$ is valid, the condition of Lemma~\ref{lemma:pc:iff} follows, and policy change is possible.
Moreover, reducing the policy change problem to validity checking in the hyperlogic is complete if $\Extract{\phi} \equiv \Build{\phi}$ for all formulae.

\begin{lemma}\label{lemma:build:extract}
  Suppose that for all finite traces $s$, non-empty sets of aligned finite traces $S$, and formulae $\phi$:
  \begin{compactenum}[(1)]
    \item $S \subseteq [s]_{\phi}$ implies $S \hypermodels \Extract{\phi}$.
    \item $s \in S$ and $S \hypermodels \Build{\phi}$ imply $S \subseteq [s]_{\phi}$.
    \item $s \deepbisim_\phi s^{+d}$ for every $d \in \Re$, where $s^{+d}$ denotes the copy of $s$ where the first delay has been incremented by $d$.
  \end{compactenum}
  Then the validity of $\Extract{\phi_1} \Imp \Build{\phi_2}$ implies $[s]_{\phi_1} \subseteq [s]_{\phi_2}$ for all $s$.
  Moreover, if $\Extract{\phi} \equiv \Build{\phi} =: \BE{\phi}$ for both $\phi \in \{\phi_1,\phi_2\}$, the converse holds, too.
\end{lemma}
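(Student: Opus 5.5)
The plan is to handle the two directions separately. The only tool needed is to instantiate the hypotheses with a well-chosen set $S$ of aligned traces. Condition (3) is what lets us turn an arbitrary equivalence class into aligned traces without leaving it.

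\emph{Soundness direction.} Assume $\Extract{\phi_1} \Imp \Build{\phi_2}$ is valid, and fix $s$ and some $s' \in [s]_{\phi_1}$; we want $s' \in [s]_{\phi_2}$. Write $\tau(u)$ for the last absolute timestamp of a trace $u$. Without loss of generality let $\tau(s) \le \tau(s')$, and put $d = \tau(s') - \tau(s)$. By (3), $s \deepbisim_{\phi} s^{+d}$ for every $\phi$, and $s^{+d}$ has the same last timestamp as $s'$. If $s$ is empty there is nothing to align; I expect this edge case to need a one-line argument, e.g.\ that the empty trace forms its own class or is handled by convention.

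Now take $S = \{s^{+d}, s'\}$. This set is aligned and is contained in $[s]_{\phi_1}$, so (1) gives $S \hypermodels \Extract{\phi_1}$. Validity then gives $S \hypermodels \Build{\phi_2}$. Since $s^{+d} \in S$, (2) yields $S \subseteq [s^{+d}]_{\phi_2} = [s]_{\phi_2}$, and in particular $s' \in [s]_{\phi_2}$. The symmetric case $\tau(s) > \tau(s')$ is handled the same way by shifting $s'$ instead and using transitivity of $\deepbisim_{\phi_2}$.

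\emph{Completeness direction.} Assume $\Extract{\phi} \equiv \Build{\phi} = \BE{\phi}$ for $\phi\in\{\phi_1,\phi_2\}$ and that $[s]_{\phi_1} \subseteq [s]_{\phi_2}$ for all $s$. Let $S$ be any non-empty aligned set with $S \hypermodels \BE{\phi_1}$; we must show $S \hypermodels \BE{\phi_2}$. Pick $s \in S$. By (2) applied to $\Build{\phi_1}$, we get $S \subseteq [s]_{\phi_1}$. The assumption then gives $S \subseteq [s]_{\phi_2}$, and (1) applied to $\Extract{\phi_2}$ gives $S \hypermodels \BE{\phi_2}$. Validity here is understood over non-empty aligned sets, which is exactly where the semantics is defined.

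The main obstacle is the alignment step in the first direction. A pair $s, s'$ in the same class need not be aligned, so we cannot feed $\{s,s'\}$ directly to (1). Condition (3) is designed to fix this, but it only lets us increase delays. That is why I shift whichever trace has the smaller last timestamp. The remaining care is the empty-trace case, where no first delay exists to shift.
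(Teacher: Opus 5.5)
Your proof is correct and follows essentially the paper's argument: the paper aligns whole timestamp-slices $[s]^x_{\phi_1}$ of the class (and shifts traces ending before $\tau(s)$ forward by $\tau(s)$) where you align pairs, but the use of (3) for alignment, then (1), modus ponens and (2), is the same, and your converse direction matches the paper's. Your empty-trace edge case resolves directly: alignment only constrains non-empty traces, so if $s$ or $s'$ is empty the pair $\{s,s'\}$ is already aligned and no shift is needed. Note that the empty trace need not form its own class (for $\phi = P$, all traces are equivalent).
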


\subsection{Information Formulae for pMTL}

\label{sec:information}

\looseness=-1
Our construction of information formulae will require time constraints spanning across
multiple temporal operators.
Such bounds are difficult to express in logics like MTL, where the interval constraints are local to each operator.
Therefore, we construct a hyperlogic based on TPTL~\cite{journals/jacm/AlurH94}, which introduces a freeze quantifier for time variables.
Specifically, we augment the past-time variant of TPTL with trace quantification, similarly to how 
HyperLTL~\cite{conf/post/ClarksonFKMRS14},
HyperMTL~\cite{conf/isola/BonakdarpourDP18},
and HyperMITL~\cite{journals/ic/HoZJ21} extend the respective logics.
Our augmented logic, which we call Hyper past-Time Propositional Temporal Logic or \emph{HyperpTPTL}, is intractable.
However, we carve out a fragment for which satisfiability can be reduced to 1-TPTL, which is decidable over finite words
with non-primitive recursive complexity~\cite{book/Haase10}.
We find this fragment to be expressive enough to encode information formulae for pMTL.

\subsubsection{pTPTL and (1-)HyperpTPTL.}
\looseness=-1
Past-time TPTL (pTPTL) uses a single sort of variables representing time ($t$, $t'$, \dots).
HyperpTPTL adds a sort for trace variables ($\pi$, $\pi'$, \dots).
Since we use a point-based time model, traces may not be \emph{synchronized}, i.e., their timestamps may not agree at a given time-point; in particular, the equivalence classes $[s]_\phi$ that we aim to capture are typically unsynchronized.
Therefore, we construct HyperpTPTL using two levels.
The inner level embeds pTPTL with the usual pointwise semantics, whereas the outer level follows (Hyper)MITL, defining the semantics at the level of timestamps instead of time-points.
The syntax of pTPTL formulae $\phi$ and HyperpTPTL formulae $\chi$ is given by the following grammar:%
\begin{align*}
  \phi ::={} &\True \mid P \mid \lnot \phi \mid \phi_1 \lor \phi_2 \mid \Prev \phi \mid \phi_1 \Since \phi_2 \mid {t.\;\phi} \mid {t_1 - t_2 \leq c}\\
  \chi ::={} &\HTrue \mid \HLift{\phi}{\pi} \mid \HNeg \chi \mid \chi_1 \HDisj \chi_2 \mid \chi_1 \HSince \chi_2 \mid {t.\;\chi} \mid {t_1 - t_2 \leq c}\mid \texists \pi. \chi
\end{align*}
where $c \in \Int$, and $t_1$ and $t_2$ are time variables.
For both logics, we define $t_1 - t_2 \geq c \equiv t_2 - t_1 \leq -c$, $t_1 - t_2 \in [a,b] \equiv t_1 - t_2 \geq a \land t_1 - t_2 \leq b$ and similarly for other intervals, $\tforall \pi. \phi \equiv \HNeg (\texists \pi. \HNeg \phi)$,
as well as the other usual logical operators.
The scope of the quantifiers $t.$ (the quantifier for time variables), $\texists$, and $\tforall$ extends to the right as far as possible.

pTPTL's semantics is defined by a relation $\Tuple{s,i,v} \models \phi$ where $v$ maps time variables to timestamps.
We only show the cases that differ from pMTL:
\begin{align*}
  \Tuple{s,i,v} &\models \Prev \phi &&\text{if $i > 0$  and $\Tuple{s,i-1,v} \models \phi$} \\
  \Tuple{s,i,v} &\models \phi_1 \Since \phi_2 &&\text{if there exists $j \leq i$ such that $\Tuple{s,j,v} \models \phi_2$ and} \\[-\jot]
      &&&\text{$\Tuple{s,k,v} \models \phi_1$ for all $k$ with $j < k \leq i$} \\
  \Tuple{s,i,v} &\models t.\;\phi &&\text{if $\Tuple{s,i,v(t \mapsto \tms_i)} \models \phi$}\\
  \Tuple{s,i,v} &\models t_1 - t_2 \leq c &&\text{if $v(t_1) - v(t_2) \leq c$}
\end{align*}
HyperpTPTL's semantics is defined by $\Tuple{S,\Pi,u,v} \hypermodels \phi$
where $S$ is a non-empty set of aligned traces, $\Pi$ maps trace variables to traces in $S$,
$u \in \Re$ is a timestamp, $v$ once again maps time variables to timestamps, and:
  \pagebreak[3]
\begin{alignat*}{2}\,
  \Tuple{S,\Pi,u,v} &\hypermodels \HTrue \\
  \Tuple{S,\Pi,u,v} &\hypermodels \HLift{\phi}{\pi} &&\text{if there exists a greatest $i$ such that $\tms_i = u$}\\[-\jot]
  &&&\text{where $\Pi(\pi) = (\ev,\tmsdiff)$, and $\Tuple{\Pi(\pi),i,v} \models \phi$} \\
  \Tuple{S,\Pi,u,v} &\hypermodels \HNeg\chi &&\text{if $\Tuple{S,\Pi,u,v} \not\hypermodels \chi$} \\
  \Tuple{S,\Pi,u,v} &\hypermodels \chi_1 \HDisj \chi_2 &&\text{if $\Tuple{S,\Pi,u,v} \hypermodels \chi_1$ or $\Tuple{S,\Pi,u,v} \hypermodels \chi_2$} \\
  \Tuple{S,\Pi,u,v} &\hypermodels \chi_1 \HSince \chi_2 &&\begin{aligned}[t]
    &\text{if $\exists u' \leq u$ such that $\Tuple{S,\Pi,u',v} \hypermodels \chi_2$, then}\\[-\jot]
    &\text{$\forall u''$ with $u' < u'' \leq u$ we have $\Tuple{S,\Pi,u'',v} \hypermodels \chi_1$}
  \end{aligned}\\
  \Tuple{S,\Pi,u,v} &\hypermodels t.\;\chi &&\text{if $\Tuple{S,\Pi,u,v(t \mapsto u)} \hypermodels \chi$} \\
  \Tuple{S,\Pi,u,v} &\hypermodels t_1 - t_2 \leq c~ &\quad&\text{if $v(t_1) - v(t_2) \leq c$} \\
  \Tuple{S,\Pi,u,v} &\hypermodels \texists \pi. \chi &&\text{if $\exists s \in S$ such that $\Tuple{S,\Pi(\pi \mapsto s),u,v} \hypermodels \chi$}
\end{alignat*}

\looseness=-1
If $\chi$ is a closed formula (no free trace or time variables), we write $S \hypermodels \chi$ for $\Tuple{S,\emptyset,\tms(S),\emptyset} \hypermodels \chi$, where $\tms(S)$ is the unique last timestamp of the traces in $S$, or $0$ if all these traces are empty.
This is the reason why we assumed traces to be aligned: it provides us with a unique timestamp to anchor the interpretation of hyperformulae.
We say that $\chi$ is satisfiable iff there exists $S$ such that $S \hypermodels \chi$. We denote by 1-pTPTL (1-HyperpTPTL, resp.) the syntactic fragment of pTPTL (HyperpTPTL, resp.) that uses
at most two time variables, which we will denote by $t$ and $t'$.
The following, recursively defined function translates pMTL formulae to equivalent 1-pTPTL formulae:\begin{align*}
  \Embed{\True} &\equiv \True &
  \Embed{\phi_1 \lor \phi_2} &\equiv \Embed{\phi_1} \lor \Embed{\phi_2} \\
  \Embed{P} &\equiv P &
  \Embed{\Prev_I \phi} &\equiv t.\; \Prev ((t'.\; t - t' \in I) \land \Embed{\phi}) \\
  \Embed{\lnot \phi} &\equiv \lnot \Embed{\phi} &
  \Embed{\phi_1 \Since_I \phi_2} &\equiv t.\; \Embed{\phi_1} \Since ((t'.\; t - t' \in I) \land \Embed{\phi_2})
\end{align*}
\vspace{-4ex}
\begin{lemma}\label{lemma:pmtl to ptptl}
  For all $\phi$, $s$, $i$, and $v$,
  $\Tuple{s,i,v} \models \Embed{\phi}$ iff $\Tuple{s,i} \models \phi$.
\end{lemma}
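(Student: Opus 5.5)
The plan is a structural induction on the pMTL formula $\phi$. The induction hypothesis will be the full statement: for every $s$, every time-point $i$ and \emph{every} valuation $v$, $\Tuple{s,i,v} \models \Embed{\phi}$ iff $\Tuple{s,i} \models \phi$. Quantifying over all $v$ inside the hypothesis is essential. The translation reuses the same two variables $t,t'$ at every nesting level, so when we descend into a subformula, the valuation has been updated by freeze quantifiers. The hypothesis must therefore apply to arbitrary modified valuations $v(t\mapsto\cdot)(t'\mapsto\cdot)$.

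The base cases and Boolean cases are routine. For $\True$ and $P$ the two satisfaction clauses coincide and do not mention $v$. For $\lnot$ and $\lor$ we apply the induction hypothesis with the same $i$ and $v$ and use the identical semantic clauses.

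For $\Prev_I \phi$, unfold the semantics of $\Embed{\Prev_I\phi} = t.\;\Prev((t'.\;t-t'\in I)\land\Embed{\phi})$. Let $v_1 = v(t\mapsto\tms_i)$. The formula holds at $\Tuple{s,i,v}$ iff $i>0$ and both of the following hold:
\begin{itemize}
\item $\Tuple{s,i-1,v_1(t'\mapsto \tms_{i-1})} \models t - t' \in I$, which says exactly $\tms_i - \tms_{i-1} \in I$;
\item $\Tuple{s,i-1,v_1}\models\Embed{\phi}$, which by the induction hypothesis (instantiated with valuation $v_1$) is equivalent to $\Tuple{s,i-1}\models\phi$.
\end{itemize}
One small check is needed: the derived interval constraint $t-t'\in I$ is a Boolean combination of atoms $t_1-t_2\le c$ that correctly encodes membership in $I$ for each interval shape. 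For strict bounds this uses integrality of the constants, e.g.\ $x > a$ as $\lnot(x\le a)$. The empty interval is handled by $\False$.

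For $\phi_1\Since_I\phi_2$, again set $v_1 = v(t\mapsto\tms_i)$. The translation holds iff there is $j\le i$ such that both of the following hold:
\begin{itemize}
\item $\Tuple{s,j,v_1}\models (t'.\;t-t'\in I)\land\Embed{\phi_2}$;
\item $\Tuple{s,k,v_1}\models\Embed{\phi_1}$ for all $j<k\le i$.
\end{itemize}
The freeze at $j$ gives the condition $\tms_i-\tms_j\in I$. I read the pMTL clause's ``$\tms_j-\tms_i\in I$'' as this intended condition, since intervals lie in $\Nat$. The remaining conjuncts translate by the induction hypothesis applied at points $j$ and $k$ with valuation $v_1$.

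The main, if modest, obstacle is exactly the variable-reuse issue. The inner $t.$ in $\Embed{\phi_2}$ rebinds $t$, so one must make sure that the constraint $t-t'\in I$ is evaluated under the outer binding, and that subformulae are unaffected by the outer binding. The first holds because the constraint sits syntactically outside $\Embed{\phi_2}$. The second is covered by the hypothesis holding for all $v$. As a corollary, $\Embed{\phi}$ is effectively closed, since its truth value is independent of $v$.
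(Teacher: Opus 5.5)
Your proposal is correct and takes the same route as the paper, which proves the lemma by structural induction on $\phi$ and says only that ``the cases follow immediately from the definitions.'' Your extra care makes explicit what the paper leaves implicit: you generalize the hypothesis over all valuations $v$ to cope with the reuse of $t,t'$, and you read the $\Since_I$ clause's timing condition as $\tms_i-\tms_j\in I$. One small remark: strict bounds need only negated atoms such as $\lnot(t-t'\le a)$, so no integrality argument is involved beyond the endpoints being legal constants $c\in\Int$.
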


With this translation in place, we abuse notation and write $\HLift{\phi}{\pi} :\equiv \HLift{\Embed{\phi}}{\pi}$ for pMTL formulae $\phi$.
The operator $\HSince_I$ can be desugared in a similar fashion.

\subsubsection{Information Formulae.} We define the information formulae for pMTL recursively as follows, where $\pi$ and $\pi'$ are two arbitrary, distinct trace variables.
\begingroup
\allowdisplaybreaks
  \begin{align*}
    \BE{\True} &= \BE{P} = \True
    & \BE{(\Prev_{\emptyset}\phi)} &= \BE{\phi} \\
    \BE{(\lnot \phi)} &= \BE{\phi}
    & \BE{(\Prev_I\phi)} &= \BE{\phi} \HConj \tforall \pi,\pi'.~\{\phi\}_{\pi} \HImp \{\phi\}_{\pi'} \qquad \text{if }I \neq \emptyset\\
    \BE{(\phi_1 \lor \phi_2)} &= \BE{\phi_1} \HConj \BE{\phi_2}
    & \BE{(\phi_1 \Since_{\emptyset} \phi_2)} &= \BE{\phi_1} \HConj \BE{\phi_2}
  \end{align*}
  \vspace{-25pt}
  \begin{align*}
    \BE{(\phi_1 \Since_{I} \phi_2)} &= \begin{aligned}[t]&\BE{\phi_1} \HConj \BE{\phi_2} \HConj{}\tforall \pi,\pi'.\bm{\bigwedge}_{d \in I \cup [0,\inf I]} \left(\{\phi_1 \Since_{I-d} \phi_2\}_\pi \HImp \{\phi_1 \Since_{I-d} \phi_2\}_{\pi'}\right)\end{aligned}
  \end{align*}
\endgroup
Here we write $I - d$ for $\{x - d \mid x \in I, x \geq d\}$, which is again an interval.
Intuitively, the information formula for $\phi$ is a conjunction of formulae for each temporal operator in $\phi$, which correspond to stateful monitor tree vertices.
\pagebreak[3]
These formulae characterize the equivalence classes of traces that are indistinguishable by the corresponding transducer.
For example, for a subformula $\Prev_I \phi'$ where $I \neq \emptyset$, there are two classes of (non-empty) traces that are distinguished by the truth of $\phi'$ at the last time-point. The formula $\tforall \pi,\pi'.~\{\phi'\}_{\pi} \HImp \{\phi'\}_{\pi'}$ enforces that all traces in the set under consideration agree with respect to $\phi'$. The approach for operators of the form $\phi_1 \Since_I \phi_2$ is similar, except that there are more classes to consider.
We can distinguish them by hypothetically appending an event satisfying $\phi_1$.
Then $\phi_1 \Since_I \phi_2$ is satisfied at the new time-point iff $\phi_1 \Since_{I - d} \phi_2$ was satisfied at the last time-point of the original trace, where $d$ is the timestamp difference between the new and the previous time-point.
Accordingly, the information formula asserts that all traces must agree on $\phi_1 \Since_{I - d} \phi_2$ for all $d \in \Re$.
In practice, taking $d$ in $I \cup [0,\inf I]$ is enough, denoted by an \emph{infinitary conjunction},\footnote{As an abuse of notation, the semantics of $\HConj$ has been straightforwardly extended to (possibly infinite) families of HyperpTPTL formulae $(\chi_j)_{j \in J}$ by defining that $\Tuple{S,\Pi,u,v} \hypermodels \bm{\bigwedge}_{j \in J} \chi_j$ holds iff for all $j \in J$, $\Tuple{S,\Pi,u,v} \hypermodels \chi_j$ holds.} since other $d$ give rise to empty shifted intervals.

\begin{example}
  \label{ex:phi_A_phi_B}
  The information formulae for $\phi_A$, $\phi_B$, and $\phi_C$ are
  \begin{gather*}
    \BE{\phi_A}=\gamma_{30,\neg\mathsf{notify},\mathsf{process}}, \quad
    \BE{\phi_B}=\gamma_{15,\neg\mathsf{notify},\mathsf{process}}, \text{and} \quad
    \BE{\phi_C}=\gamma_{30,\neg\mathsf{process},\mathsf{process}},
  \end{gather*}
  respectively, where $\gamma_{b,\alpha,\beta} = \forall \pi,\pi'. \bm{\bigwedge}_{d\in\Re} \left(\{\alpha \Since_{(b,\infty)-d} \beta\}_\pi \HImp \{\alpha \Since_{(b,\infty)-d} \beta\}_{\pi'}\right)$.
  Now, observe that
{\small
  \begin{align*}
    \gamma_{b,\alpha,\beta} := \forall \pi,\pi'. \left(\{\alpha \Since \beta\}_\pi \HIff \{\alpha \Since \beta\}_{\pi'}\right) \HConj \bm{\bigwedge}_{x \in [0,b]} \left(\{\alpha \Since_{(x,\infty)} \beta\}_\pi \HIff \{\alpha \Since_{(x,\infty)} \beta\}_{\pi'}\right)
  \end{align*}}
  Hence,  $\gamma_{b,\alpha,\beta} \HImp \gamma_{b',\alpha,\beta}$ is a tautology whenever $b \geq b'$.
  Therefore, $\BE{\phi_A} \HImp \BE{\phi_B}$ is a tautology, and thus policy change between
  $\phi_A$ and $\phi_B$ is possible.

  Noting that $\langle \alpha \Since_{(x,\infty)} \beta\rangle_\pi$ implies
  $\langle \alpha \Since_{(x',\infty)} \beta\rangle_\pi$ when $x \leq x'$, 
   the expression of $\gamma_{b,\alpha,\beta}$ obtained above shows that a temporal tester
   for $\alpha \Since_{(b,\infty)} \beta$ needs to store \emph{at least} the following information:
   (i) the maximum $x \in [0,b]$ such that $\alpha \Since_{[x,x]} \beta$ holds (if any),
   (ii) the truth value of $\alpha \Since \beta$, and
   (iii) the truth value of $\alpha \Since_{(b,\infty)} \beta$. This is exactly
  what some state-of-the-art monitors such as Hydra~\cite{conf/atva/RaszykBKT19} store, and
  hence the `optimal' equivalence classes described by the information formulae are actually
  realized in practice.
\end{example}

\begin{lemma}\label{lemma:information formulae}
  The 1-HyperpTPTL formulae $\BE{\phi}$ fulfill the conditions of Lemma~\ref{lemma:build:extract}.
\end{lemma}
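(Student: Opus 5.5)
The plan is to reduce all three conditions of Lemma~\ref{lemma:build:extract} to one semantic characterization of $\deepbisim_\phi$, and then to check that $\BE{\phi}$ expresses exactly that characterization. By Lemma~\ref{lemma:deepbisim:indep}, $\deepbisim_\phi$ does not depend on which monitor tree is used, so I will work with a canonical tree. The first step is a vertex-local lemma. For a vertex $v$ labelled by an operator $\omega \in \Omega$, the transducer at $v$ receives the truth values of its children's subformulae together with the delays. Bisimilarity of the reached substates is a semantic property of the operator as a pMTL formula over $\Props\cup\{X,Y\}$. So it is determined by the input histories alone, namely the truth values of the child subformulae along $s$ and $s'$ and their delays.

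The local lemma has three cases, each stated for non-empty histories (the empty trace is treated separately below).
\begin{compactenum}[(a)]
  \item Non-temporal operators ($\True$, $P$, $\lnot X$, $X \lor Y$, and $\Prev_\emptyset X$, $X\Since_\emptyset Y$). The output at every step is a function of the current input alone, so all reachable substates are bisimilar.
  \item $\Prev_I X$ with $I \neq \emptyset$. Two histories are bisimilar iff they agree on the truth of $X$ at the last time-point. The output after a continuation depends only on the first appended delay and on $X$ at the last prefix point; conversely, appending one event with a delay in $I$ separates histories that disagree on $X$.
  \item $X \Since_I Y$. Two histories are bisimilar iff they agree on $X \Since_{I-d} Y$ at the last time-point for every $d \ge 0$; by the remark after the information formulae it suffices to take $d \in I \cup [0,\inf I]$. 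For the ``only if'' direction, append a single event with $X$ true, $Y$ false and delay $d$. The output is then exactly $X\Since_{I-d}Y$ evaluated at the old last point, so any disagreement yields distinguishing outputs. For the ``if'' direction, take an arbitrary continuation of total delay $D$ and inspect the witness $j$ in the $\Since$-semantics. If $j$ lies in the continuation, the outcome depends only on the continuation. Otherwise $X$ must hold throughout the continuation and the remaining condition is precisely $X\Since_{I-D}Y$ at the prefix end, on which the two histories agree.
\end{compactenum}

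The second step assembles these cases. By induction on $\phi$, $s \deepbisim_\phi s'$ holds iff, for every temporal subformula $\psi = \Prev_I\phi'$ (respectively $\phi_1\Since_I\phi_2$) of $\phi$, the traces $s,s'$ agree at their last points on $\phi'$ (respectively on all $\phi_1\Since_{I-d}\phi_2$). Here I use Lemma~\ref{lemma:pmtl to ptptl}: the children's outputs are the pMTL truth values of the subformulae. The recursive definition of $\BE{\phi}$ is precisely the conjunction of the hyper-conjuncts $\tforall\pi,\pi'.\,\{\cdot\}_\pi\HImp\{\cdot\}_{\pi'}$ over these subformulae. Under aligned semantics, each $\{\psi\}_\pi$ is evaluated at the greatest index with timestamp $\tms(S)$, which is the last point of $\Pi(\pi)$. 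Hence $S\hypermodels\BE{\phi}$ iff all traces in $S$ pairwise agree on those subformulae at their last points, i.e.\ iff all elements of $S$ are pairwise $\deepbisim_\phi$-equivalent. Condition (1) follows because $S\subseteq[s]_\phi$ makes the elements of $S$ pairwise equivalent. Condition (2) follows because $s \in S$ and pairwise equivalence give $S\subseteq[s]_\phi$.

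The third step is condition (3). Shifting the first delay by $d$ leaves every relative time difference unchanged, so every pMTL subformula has the same truth value at every point of $s$ and $s^{+d}$. Since the characterization in the second step refers only to such truth values at the last point, $s\deepbisim_\phi s^{+d}$.

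The main obstacle is case (c) of the local lemma. One has to handle the boundary cases carefully: open versus closed endpoints of $I-d$, the empty history, and the fact that the witness $j$ may coincide with a suffix point where $Y$ holds. The empty trace needs a separate treatment. If $S$ contains empty traces together with non-empty ones, their classes differ (since $\{\psi\}_\pi$ is false on an empty trace); this is consistent because on the empty trace a $\Since$ tester is distinguished from non-empty histories exactly when some $\phi_1\Since_{I-d}\phi_2$ held. If that fails, e.g.\ when $\phi$ has no temporal subformula, all traces are indeed equivalent. I would settle this by checking the empty case directly against the definitions.
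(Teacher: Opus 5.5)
Your proposal is correct and takes essentially the paper's route. Your local cases (a)--(c) are Lemmas~\ref{lemma:foo:empty}, \ref{lemma:foo:prev} and~\ref{lemma:foo:since}, your inductive assembly is Lemma~\ref{lemma:deepbisim:alt} (the recursive $\mathit{Ind}$ characterization, which, as you do, reads bisimilarity at the level of the individual transducer with arbitrary $X,Y$ inputs), and you then check conditions (1)--(3) much as the paper's structural induction does. The differences are only organizational: you flatten the characterization into pairwise last-point agreement so that (1) and (2) follow at once, derive (3) from invariance of truth values under shifting the first delay, and handle the empty trace explicitly, which the paper glosses over.
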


Observe that for any formula $\varphi$ in pLTL, the information formula $\BE{\varphi}$ is in HyperpLTL. Namely, when $I = [0,\infty)$,
the formula for $\BE{(\phi_1 \Since_I \phi_2)}$ simplifies to
\begin{align*}
  \BE{(\phi_1 \Since\phi_2)} &= \begin{aligned}[t]&\BE{\phi_1} \HConj \BE{\phi_2} \HConj{} \tforall \pi,\pi'. \left(\{\phi_1 \Since \phi_2\}_\pi \HImp \{\phi_1 \Since \phi_2\}_{\pi'}\right),\end{aligned}
\end{align*}
which does not use any non-LTL constructs.
We have the following lemma:

\begin{lemma}\label{lemma:information formulae HyperpLTL}
  Given any pLTL formula $\varphi$, the HyperpLTL formula of the form $\BE{\varphi}$ does not use any $\HSince$ operators (or infinitary conjunctions), and the size of $\BE{\varphi}$ is at most quadratic in $|\varphi|$.
\end{lemma}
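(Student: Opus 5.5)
The argument is a structural induction on $\varphi$, following the recursive definition of $\BE{\cdot}$. We prove a slightly stronger invariant. For every pLTL formula $\varphi$, $\BE{\varphi}$ is a conjunction of the form $\HTrue \HConj \bigwedge_{k} \tforall \pi,\pi'.\,(\{\psi_k\}_\pi \HImp \{\psi_k\}_{\pi'})$, with the following properties:
\begin{compactenum}[(a)]
  \item each $\psi_k$ is a subformula of $\varphi$, interpreted as a pLTL formula;
  \item there is at most one conjunct per temporal (i.e.\ $\Prev$ or $\Since$) occurrence in $\varphi$;
  \item no $\HSince$ and no infinitary conjunction appears.
\end{compactenum}

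The base cases $\True$ and $P$ give $\True$ (read as $\HTrue$), so the invariant holds trivially. The cases $\lnot$, $\lor$, $\Prev_\emptyset$ and $\Since_\emptyset$ only conjoin the information formulae of the immediate subformulae. Since $\BE{\cdot}$ is applied to proper subformulae, the induction hypothesis gives (a)--(c) for the parts, and conjunction preserves them. For a pLTL formula, however, the intervals are always $[0,\infty)$, so the $\emptyset$ cases do not even arise.

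For $\Prev\phi$ with $I=[0,\infty)\neq\emptyset$, we add the single conjunct $\tforall\pi,\pi'.\{\phi\}_\pi \HImp \{\phi\}_{\pi'}$. This conjunct contains no $\HSince$. Its inner formula is $\Embed{\phi}$, which lives inside $\HLift{\cdot}{\pi}$ and therefore uses only pTPTL operators, not the hyper-level $\HSince$.

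The key case is $\phi_1\Since_{[0,\infty)}\phi_2$. Here $I\cup[0,\inf I] = [0,\infty)\cup[0,0] = [0,\infty)$, and for every $d\ge 0$ we have $I-d = \{x-d \mid x\ge d\} = [0,\infty)$. The infinitary conjunction therefore ranges over identical conjuncts, and by idempotence of $\HConj$ it collapses to the single formula $\tforall\pi,\pi'.(\{\phi_1\Since\phi_2\}_\pi\HImp\{\phi_1\Since\phi_2\}_{\pi'})$, exactly as displayed before the lemma. So (b) and (c) hold here too. For (a), note that although $\Embed{\cdot}$ introduces freeze quantifiers with trivial constraints $t-t'\in[0,\infty)$, these are equivalent to $\True$ and can be dropped. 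The lifted formulae are thus genuinely pLTL.

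The size bound follows by counting. There are at most $|\varphi|$ conjuncts (one per operator occurrence), and each conjunct has size $O(|\psi_k|)\le O(|\varphi|)$: two copies of a subformula plus constant overhead for the quantifiers and the implication. Summing, and accounting for the binary $\HConj$ connectives, gives $O(|\varphi|^2)$.

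The main obstacle is being careful that the tree-shaped recursion does not duplicate work. We must unfold $\BE{\cdot}$ into a flat conjunction indexed by temporal occurrences, rather than bounding sizes recursively as $|\BE{\phi_1\Since\phi_2}|\le|\BE{\phi_1}|+|\BE{\phi_2}|+c|\varphi|$; that recursive bound would give the same quadratic total, but it must be stated explicitly. We must also justify formally that the collapsed infinitary conjunction is literally replaced by a single conjunct, by a syntactic simplification, since the statement is about the formula itself and not merely its semantics.
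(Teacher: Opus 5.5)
Your proposal is correct and follows the same reasoning the paper gives. The paper has no separate appendix proof for this lemma; it argues in the surrounding remarks that for $I=[0,\infty)$ every shifted interval $I-d$ equals $[0,\infty)$, so the infinitary conjunction collapses to the single displayed conjunct, and that $\BE{\varphi}$ consists of $O(|\varphi|)$ conjuncts (one per temporal subformula) of size $O(|\varphi|)$ each. Your additional remark that the trivial constraints $t-t'\in[0,\infty)$ introduced by $\Embed{\cdot}$ can be dropped is a detail the paper leaves implicit.
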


The information formulae associated to pMTL formulae are, in general, infinite, due to the presence of infinite conjunctions in
$\BE{(\phi_1 \Since_I \phi_2)}$. However, we show:

\begin{lemma}\label{lemma:rewriting}
  For any interval $I$ of $\Re$ and pMTL formulae $\phi_1$ and $\phi_2$, there exists a 1-HyperpTPTL formula $\Phi_{\phi_1,\phi_2,I}$
  such that 
  \begin{align*}
    \Phi_{\phi_1,\phi_2,I} && \text{and} && \tforall \pi,\pi'.\bm{\bigwedge}_{d \in I \cup [0,\inf I]} \left(\{\phi_1 \Since_{I-d} \phi_2\}_\pi \HImp \{\phi_1 \Since_{I-d} \phi_2\}_{\pi'}\right)
  \end{align*}
  are semantically equivalent and $|\Phi_{\phi_1,\phi_2,I}| = O(|\phi_1| + |\phi_2|)$.
\end{lemma}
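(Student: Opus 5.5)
The plan is to eliminate the infinitary conjunction by turning the shift $d$ into a real point in time, reached through the outer $\HSince$ operator and pinned down by the freeze quantifier with its two time variables $t,t'$.

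\textbf{Step 1: semantic unfolding.} For the last time-point $i$ of a trace with timestamp $u=\tms_i$, the formula $\phi_1 \Since_{I-d} \phi_2$ holds iff there is $j\le i$ with $u-\tms_j \in I-d$, $\phi_2$ at $j$, and $\phi_1$ on $(j,i]$. Equivalently, $d + (u - \tms_j) \in I$. So $d$ plays the role of a hypothetical future delay. Since the hyperlogic only looks backwards, I would not shift forward. Instead I would encode the constraint $u-\tms_j\in I-d$ as a relation between the anchor time $u$, frozen into $t$, and an arbitrary reference value.

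\textbf{Step 2: a finite expression for the conjunction.} The requirement is that all $\pi,\pi'$ agree, for every $d$, on whether some witness $j$ satisfies $u-\tms_j\in I-d$. I would define, for each trace, the set $W=\{u-\tms_j : j \text{ is a witness for } \phi_1\Since\phi_2\}$. The truth value of $\phi_1\Since_{I-d}\phi_2$ is then $[W\cap(I-d)\neq\emptyset]$. When $I$ is bounded, only the witness closest to some threshold matters, as in Example~\ref{ex:phi_A_phi_B}: agreement for all $d$ reduces to agreement on the minimal and maximal elements of $W$ that lie below $\sup I$, together with their closedness. When $I$ is unbounded, agreement on $\max W$ truncated at $\inf I$ suffices, together with the plain $\phi_1\Since\phi_2$.

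\textbf{Step 3: the concrete formula.} I would express "the traces agree on the position of the earliest and latest relevant witness" using the outer time quantifier. Let
\[
\mathit{wit}_\pi \equiv \{\phi_2 \land \lnot\Prev\True \lor \phi_2\}_\pi,
\]
to be refined so that it also requires $\phi_1$ to hold until the anchor, via $\{t'.\,\cdots\}$ checked at time $u$. Then
\[
\Phi \equiv t.\,\tforall\pi,\pi'.\; \HNeg\,\Once\bigl(t'.\; t-t'\in [0,\sup I] \HConj \mathit{lastwit}_\pi \HConj \HNeg \mathit{lastwit}_{\pi'}\bigr)
\]
plus a symmetric clause for earliest witnesses, a clause for the unbounded part, and $\{\phi_1\Since\phi_2\}$ agreement. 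Here $\Once$ is the outer $\HSince$ with $\HTrue$.

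The hard part of Step 3 is stating that the witness at outer time $t'$ is valid relative to the anchor $t$, meaning $\phi_1$ holds on all later points up to the last one. With only a past-time inner logic, evaluating at $t'$ cannot look forward. I would flip the perspective: evaluate $\{t.\,\phi_1 \Since(t'\text{-frozen constraint}\land\phi_2)\}_\pi$ at the anchor, with the frozen value supplied from the outer level. That is, at the outer level freeze $t'$ at the candidate time, return to the anchor (where $t$ is frozen), and inside the lifted formula require
\[
\phi_1 \Since \bigl(t'' = t' \land \phi_2\bigr).
\]
Avoiding a third variable is exactly what forces the restructuring. I would reuse $t$ inside, since the anchor is always the current point, and compare the inner freeze at the witness against the outer $t'$ using $t - t' \le 0 \land t'-t\le 0$.

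\textbf{Step 4: correctness and size.} Prove the equivalence in both directions. Disagreement for some $d$ yields a witness time in one trace without a corresponding one in the other at the critical extremal position, and conversely. Each clause contains one copy of $\Embed{\phi_1}$ and $\Embed{\phi_2}$ and a constant number of operators, and the interval bounds are written in binary, so $|\Phi|=O(|\phi_1|+|\phi_2|)$. I expect the main obstacle to be Step 2/3 itself: identifying exactly which finite data about $W$ (extremal witnesses, closedness of the endpoints of $I$) determines all the shifted truth values, and encoding it with only two time variables.
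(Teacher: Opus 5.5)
Your proposal has two genuine gaps: the finite characterization in Step~2 fails for bounded $I$, and the Step~3 device cannot be written in HyperpTPTL.

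\textbf{Step~2 is wrong for bounded $I$.} Your unbounded analysis is fine: there, agreement on $\phi_1\Since\phi_2$ and on the oldest witness age truncated at $\inf I$ suffices. Likewise, for $d\in[\inf I,\sup I]$, where $I-d$ is a prefix $[0,k\|$, only the most recent witness matters. For $d\in[0,\inf I]$, however, $I-d=\|j,j+\ell\|$ with $j=\inf I-d$ and $\ell=\sup I-\inf I$. This is a window of fixed width sliding over $[0,\sup I]$, and whether it meets $W$ depends on all of $W\cap[0,\sup I]$, not just on its extremal elements. Take $\phi_1=\True$, $\phi_2=P$, $I=[5,6]$, and let $\pi$ have $P$-events at timestamps $0,3,6$ and $\pi'$ at $0,6$. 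Both have extremal witness ages $0$ and $6$, yet for $d=2.5$ the formula $\Once_{[2.5,3.5]}P$ holds at the end of $\pi$ but not of $\pi'$.

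If you instead read $\mathit{lastwit}$ as ``some witness sits at this outer time'', your clause demands equal witness sets on $[0,\sup I]$, which is too strong. For $I=[0,10]$, traces with $P$-ages $\{0,5\}$ and $\{0\}$ agree on every $\Once_{[0,10-d]}P$. So formulae built from witness \emph{positions}, whether extremal or exact, do not capture the relation; you have to compare truth values of a window formula at a shifted time. This is what the paper does. It splits the shifts into the prefix regime (Lemma~\ref{thm:rewriting1}) and the sliding regime. In the sliding regime it picks $j$ with an outer $\HSince_{[0,\inf I]}$ and evaluates the \emph{fixed} formula $\phi_1\Since_{\|0,\ell\|}\phi_2$ at time $u-j$ (Lemma~\ref{thm:rewriting2}). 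No description of ``which witnesses matter'' is ever needed.

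\textbf{The Step~3 device is not expressible.} The outer level of HyperpTPTL has only $\HSince$. After you move back to the candidate time and freeze $t'$, there is no way to ``return to the anchor''. Since the shift must be shared by $\pi$ and $\pi'$, it can only be obtained this way, so both lifted formulae are evaluated at the candidate time or earlier. (The paper does let lifted formulae mention outer-frozen variables, as in $\phi^2$, but with the anchor frozen in $t$ and evaluation at the earlier time, never the reverse.)

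The missing idea is to carry the forward-looking condition in the left operand of the outer $\HSince$. That condition is that $\phi_1$ holds in $\pi$ from the candidate time up to the anchor, expressed as $\HNeg\{\Once_{[0,0]}\lnot\phi_1\}_\pi$ and checked at every outer timestamp in between. This creates an asymmetry your plan ignores. $\pi'$ may satisfy the window formula at the candidate time and still violate the shifted formula at the anchor, because of a $\lnot\phi_1$ event in between. That is why the paper needs the separate disjuncts $\phi^2$ and $\psi^2$.

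Finally, $\mathit{lastwit}$ is never defined, and $\mathit{wit}_\pi$ as written is just $\{\phi_2\}_\pi$. So Steps~3 and~4 do not yet give a concrete formula whose equivalence and size could be checked.
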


\looseness=-1
In the following, we denote by $\BE{\phi}$ the information formulae obtained by first applying
the rewriting from Lemma~\ref{lemma:rewriting} to the information formula described above, 
and then converting the rewritten formulae to prenex normal form, using the fact that $(\chi \HConj \forall \pi.\;\psi) \HIff (\forall \pi.\; \chi \HConj \psi)$, where $\pi$ is not free in $\chi$. By Lemma~\ref{lemma:rewriting}, these formulae are still in 1-HyperpTPTL.
Moreover, they are of the form $\tforall \pi, \pi'.~\psi$
where $|\psi|=O(|\phi|^2)$ and $\psi$ does not contain trace quantifiers, and are equivalent to those in
Lemma~\ref{lemma:information formulae}. The quadratic size of the rewritten information formulae
follows from the original definition of $\BE$ and the linear size of the
rewritten formulae. Namely, the original definition of $\BE$ generates $O(|\phi|)$ disjuncts
encoding the information needed to monitor each of the subformulae of $\phi$, each of size  $O(|\phi|)$.

\subsection{The Decision Procedure}
\label{sec:algorithm}

Our algorithm for \textsc{PolicyChange} is based on a reduction of satisfiability in the $\texists^*\tforall^*$ fragment of 1-HyperpTPTL
to 1-pTPTL satisfiability. We first describe this reduction, and then present the full decision procedure.

\subsubsection{Reducing ${\exists}^*{\forall}^*$-HyperpTPTL to pTPTL Satisfiability}

The satisfiability of general HyperLTL and HyperMITL is not decidable.
However, it has been shown that these problems are decidable when formulae do not contain any $\tforall\texists$ quantifier alternation~\cite{conf/concur/FinkbeinerH16,journals/ic/HoZJ21}.
In particular, for formulae in the decidable $\texists^*\tforall^*$ fragment, HyperMITL
satisfiability can be reduced to MITL satisfiability with exponential blow-up~\cite{conf/concur/FinkbeinerH16}.
More precisely, the blow-up factor is $O(n^m)$, where $n$ ($m$, resp.) is the number of $\texists$ ($\tforall$, resp.) quantifiers.
Hence, with a bounded number of trace quantifiers, the reduction will lead to a constant blow-up factor.

We adapt this reduction to HyperpTPTL.
The target of the reduction is pTPTL satisfiability.
In the following, let $\chi$ be a HyperpTPTL formula of the form $\texists \pi_1,\dots,\pi_n.\; \tforall \pi'_1,\dots,\pi'_m.\; \chi'$, where $\chi'$ is free of trace quantifiers.
Without loss of generality, assume $n,m \geq 1$. Define
\[
  \tilde\chi \equiv \bigwedge_{i_1 \in \{1,\dots,n\}} \cdots \bigwedge_{i_m \in \{1,\dots,n\}} \chi'[\pi'_1 \mapsto \pi_{i_1}]\cdots[\pi'_m \mapsto \pi_{i_m}],
\]
where $\chi'[\pi' \mapsto \pi]$ denotes trace variable substitution.
The size of $\tilde\chi$ is $O(n^m \cdot |\chi'|)$.

\newcommand*{\Red}{\rho}
\newcommand*{\TP}{\mathit{tp}}
\newcommand*{\Tick}{\mathit{ck}}
While $\tilde\chi$ does not contain trace quantifiers, it is not a pTPTL formula yet.
To obtain the latter, we apply the following recursive translation $\Red$:
\begingroup
\allowdisplaybreaks
\begin{alignat*}{2}
  \Red(\HTrue) &\equiv \True
  & \Red(\chi_1 \HDisj \chi_2) &\equiv \Red(\chi_1) \lor \Red(\chi_2) \\
  \Red(\HNeg \chi) &\equiv \lnot \Red(\chi)
  & \Red(\HLift{\phi}{\pi}) &\equiv (\lnot \TP_\pi \land \lnot \Tick) \Since (\TP_\pi \land \Red_\pi(\phi)) \\
  \Red(t_1 - t_2 \leq c) &\equiv t_1 - t_2 \leq c \qquad
  & \Red(\chi_1 \HSince \chi_2) &\equiv (\Tick \Imp \Red(\chi_1)) \Since (\Tick \land \Red(\chi_2)) \\
  \Red(t.\;\chi) &\equiv t.\;\Red(\chi)\qquad
  &\Red_\pi(\True) &\equiv \True \qquad\qquad \Red_\pi(P) \equiv P_\pi \\
  \Red_\pi(\lnot \phi) &\equiv \lnot \Red_\pi(\phi)
  & \Red_\pi(\phi_1 \lor \phi_2) &\equiv \Red(\phi_1) \lor \Red(\phi_2) \\
  \Red_\pi(t_1 - t_2 \leq c) &\equiv t_1 - t_2 \leq c \qquad
  & \Red_\pi(\Prev \phi) &\equiv \Prev \left( \lnot \TP_\pi \Since (\TP_\pi \land \Red_\pi(\phi))\right) \\
  \Red_\pi(t.\;\phi) &\equiv t.\;\Red_\pi(\phi)
  & \Red_\pi(\phi_1 \Since \phi_2) &\equiv (\TP_\pi \Imp \Red_\pi(\phi_1)) \Since (\TP_\pi \land \Red_\pi(\phi_2))
\end{alignat*}
\endgroup
In the output of these translations, predicates with trace variable subscripts $P_\pi$ are treated symbolically as distinct propositions.
Moreover, we assume that the predicates $\TP_\pi$ and $\Tick$ are distinct from all other predicates.
Intuitively, the translated formula is interpreted over a single trace that interleaves a \emph{finite} set of aligned traces: those used to interpret the trace variables occurring in $\tilde\chi$.
The predicate $\TP_\pi$ marks time-points that exist in the trace for $\pi$, and $\Tick$ marks the last time-point in a sequence of time-points with the same timestamp.

Finally, we characterize proper interleavings satisfying the above properties:
\begin{multline*}
  \zeta \equiv \Tick \land \Once (t.\;t \leq 0) \land \HAlways\bigl(t.\;\neg\Prev(t'.\;\neg(t-t'\leq 1)) \land \left(\Prev(t'.\;t-t'\geq 1) \Iff \Prev \Tick\right)\bigr) \land{} \\
  \bigwedge_{i\in\{1,\dots,n\}} (\Once \TP_{\pi_i} \Imp \Once_{[0,0]} \TP_{\pi_i})
\end{multline*}

\begin{lemma}\label{lemma:HyperpTPTL}
  $\chi$ is satisfiable iff $\zeta \land \Red(\tilde\chi)$ is. 
  The size of $\zeta \land \Red(\tilde\chi)$ is $O(n^m \cdot |\chi'|)$.
\end{lemma}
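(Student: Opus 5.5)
The proof splits into two reductions: from $\chi$ to its quantifier-free expansion $\tilde\chi$, and from sets of traces to single interleaved traces.

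\emph{Quantifier elimination.} I first show that $\chi$ is satisfiable iff $\tilde\chi$ is satisfiable over some set of aligned traces, with $\pi_1,\dots,\pi_n$ interpreted freely. Suppose $S \hypermodels \chi$ with witnesses $s_1,\dots,s_n$ for the existential variables. The universal variables may in particular take the values $s_{i_1},\dots,s_{i_m}$, so every conjunct of $\tilde\chi$ holds. Conversely, suppose $\tilde\chi$ holds under an assignment $\pi_i\mapsto s_i$. Take $S=\{s_1,\dots,s_n\}$. Every universal instantiation then ranges over these traces only, and each such instantiation is one of the conjuncts of $\tilde\chi$. This is the standard argument of Finkbeiner and Hahn.

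\emph{From trace sets to single traces.} Next I show that $\tilde\chi$ is satisfiable over aligned $s_1,\dots,s_n$ iff $\zeta\land\Red(\tilde\chi)$ is satisfiable over a single trace.

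For the forward direction, I merge all events of the $s_i$ by timestamp into one trace $w$. Each event of $s_i$ contributes a time-point carrying $P_{\pi_i}$ for $P\in\sigma$ together with $\TP_{\pi_i}$. Events that share a timestamp are placed consecutively.

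I also need $\Tick$ positions. I insert them at the end of each block of equal timestamps. So that the $\HSince$ semantics, which quantifies over \emph{all} reals $u''\le u$, is faithfully sampled, I additionally insert $\Tick$-only points so that consecutive points are at most $1$ apart. The conjunct $\Prev(t'.\,t-t'\ge1)\Iff\Prev\Tick$ then forces a $\Tick$ whenever time advances. The conjuncts of $\zeta$ further require a point at time $0$ and the last point to be a $\Tick$. The final conjunct $\Once\TP_{\pi_i}\Imp\Once_{[0,0]}\TP_{\pi_i}$ encodes alignment: every nonempty trace has an event at the last timestamp.

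The core of this direction is an induction over subformulae. I prove:
\begin{itemize}
\item for each $\pi$ and each pTPTL subformula $\phi$, $\Red_\pi(\phi)$ holds at the position of the $i$-th event of $\Pi(\pi)$ iff $\phi$ holds at $i$ in $\Pi(\pi)$; the $\TP_\pi$ guards make $\Prev$ and $\Since$ skip non-$\pi$ points;
\item for each hyperformula $\chi$, $\Red(\chi)$ holds at the $\Tick$ position of timestamp $u$ iff $\chi$ holds at $u$.
\end{itemize}
The case $\HLift{\phi}{\pi}$ uses the fact that the greatest $i$ with $\tau_i=u$ is the last $\TP_\pi$ before the $\Tick$ within the same block. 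The guard $\lnot\Tick$ rejects the case where $\pi$ has no event at $u$.

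\emph{Main obstacle.} The hard case is $\HSince$, because the hyper semantics quantifies over arbitrary real $u''$, not just timestamps that appear in the trace. I handle it with a region/stutter argument. Truth of any hyperformula is constant on open intervals between consecutive timestamps of the $s_i$, and everywhere inside such a gap it equals its value just after the left endpoint, up to the integer constraints $t_1-t_2\le c$. Those constraints only change at integer offsets from frozen values. This is why $\Tick$ points are densified to spacing at most $1$: the claim to establish is that evaluating at $\Tick$ positions suffices. I expect this lemma to need the most care. If it fails as stated, I will refine the inserted ticks to the fractional-part region points of the frozen timestamps.

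For the converse direction, given a model of $\zeta\land\Red(\tilde\chi)$, I project onto the $\TP_{\pi_i}$ positions to obtain traces $s_i$. $\zeta$ guarantees that these are aligned and that the $\Tick$ structure is well formed, and the same induction transfers satisfaction back.

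\emph{Size bound.} $\Red$ is linear in its input, $\zeta$ has size $O(n)$, and $|\tilde\chi|=O(n^m|\chi'|)$, which gives the stated $O(n^m\cdot|\chi'|)$.
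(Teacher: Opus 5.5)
Your decomposition matches the paper's: quantifier expansion (Lemma~\ref{lemma:hsat ea iff}), an interleaving characterised by $\zeta$, the two inductions of Lemmas~\ref{lemma:red aux} and~\ref{lemma:red}, and the linear size count. What fails is the dense-time treatment you add on top.

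\textbf{You misread $\zeta$.} Together with the bound on delays, the conjunct $\Prev(t'.\,t-t'\ge 1)\Iff\Prev\Tick$ says that a non-final position carries $\Tick$ iff the \emph{next delay is exactly $1$}. It does not force a tick whenever time advances. Your witness puts a tick at the end of every block of equal timestamps and pads with tick-only points at spacing at most $1$. It therefore violates $\zeta$ as soon as two consecutive ticks are less than one unit apart. Examples are events at $0$ and $0.5$, or a padding tick at $2$ followed by a block at $2.5$. Refining the ticks to region points is excluded for the same reason.

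\textbf{The sampling lemma you defer is false for this $\zeta$ and $\Red$.} $\Red(\chi_1\HSince\chi_2)$ inspects only tick positions, but a $\zeta$-model need not have any position in the open gaps where the real-valued $\HSince$ finds its witnesses.
\begin{itemize}
\item Take $\texists\pi.\,\HLift{\Prev_{(0,\infty)}\True}{\pi}\HConj\HNeg(\HTrue\HSince\HNeg\HLift{\True}{\pi})$. It is unsatisfiable under the real-valued semantics: the second conjunct demands a time-point of $\pi$ at every $u'\in[0,\tms(S)]$, so $\tms(S)=0$, which contradicts the first conjunct. Yet the trace $(\{\TP_\pi,\Tick\},0)\,(\{\TP_\pi,\Tick\},1)$ satisfies $\zeta\land\Red(\tilde\chi)$.
\item Dually, $\texists\pi.\,t.\,\HTrue\HSince(t'.\,t-t'\in(0,1)\HConj\HLift{\True}{\pi})$ is satisfied by a trace with events at $0$ and $0.5$. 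But in every model of $\zeta$, each non-final tick lies at least one time unit before the last position, so $\zeta\land\Red(\tilde\chi)$ is unsatisfiable.
\end{itemize}
So with $\HSince$ ranging over arbitrary reals, neither direction of your induction can be completed, whatever stutter or region argument you supply.

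\textbf{What the paper does instead.} Its proof contains no such argument; it avoids the issue by in effect working with natural-number timestamps. Its interleaving has one block for each $u\in\{0,\dots,\tms(S)\}$, with consecutive blocks separated by a delay of exactly $1$ and each closed by a tick, which is exactly the structure $\zeta$ enforces. In the $\HSince$ case of Lemma~\ref{lemma:red}, the hyper-level quantifier ranges over $\{u+1,\dots,\tms_i\}$. Under that reading your construction is essentially the paper's, and the $\HSince$ case needs no sampling lemma.

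\textbf{Your $\HLift{\phi}{\pi}$ case also needs fixing.} At a tick position $i$, the formula $(\lnot\TP_\pi\land\lnot\Tick)\Since(\TP_\pi\land\Red_\pi(\phi))$ can only be witnessed by $i$ itself. Any earlier witness would require the left operand to hold at $i$, which carries $\Tick$. So the tick must sit on the last event of every trace present at $u$ (one position may carry several $\TP_{\pi_k}$), not on a point after the last $\TP_\pi$. The separate $(\{\Tick\},0)$ event in the paper's interleaving has the same defect.
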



\begin{lemma}\label{lemma:1-HyperpTPTL}
  If $\chi$ is in 1-HyperpTPTL, then $\zeta \land \Red(\tilde\chi)$ is in 1-pTPTL.
\end{lemma}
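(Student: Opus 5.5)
This is a syntactic check on the translation: I will show that neither $\zeta$ nor $\Red$ introduces time variables beyond $t$ and $t'$. Recall that 1-HyperpTPTL and 1-pTPTL are defined as the fragments using at most the two time variables $t$ and $t'$.

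First, $\zeta$ is a fixed formula. Inspecting it, the only freeze quantifiers are $t.$ (in $\Once(t.\;t\le 0)$ and under $\HAlways$) and $t'.$ (under the two $\Prev$ operators). The only constraints are $t\le 0$ (read as a constraint over $t$ alone, or as $t - t_0 \le 0$ with the initial clock), $t-t'\le 1$ and $t-t'\ge 1$. The other conjuncts use only predicates and $\Once$, $\Once_{[0,0]}$ and $\HAlways$. The interval-annotated operator $\Once_{[0,0]}$ needs one remark: it is sugar, and it desugars exactly as in the translation $\Embed{\cdot}$ preceding Lemma~\ref{lemma:pmtl to ptptl}, which uses only $t$ and $t'$. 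So $\zeta$ is in 1-pTPTL.

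Second, $\tilde\chi$ is a conjunction of copies of $\chi'$ under trace-variable substitutions. Substitution of trace variables leaves time variables untouched, so $\tilde\chi$ uses only $t,t'$ whenever $\chi$ does. I will then prove, by simultaneous structural induction on HyperpTPTL formulae $\chi$ and pTPTL formulae $\phi$, the following claim. The time variables occurring (bound or free) in $\Red(\chi)$ are exactly those of $\chi$, and those of $\Red_\pi(\phi)$ are exactly those of $\phi$. Each clause of $\Red$ and $\Red_\pi$ maps $t.\;(\cdot)$ to $t.\;(\cdot)$ and a constraint $t_1-t_2\le c$ to itself. The remaining clauses wrap recursive results with Boolean connectives, $\Prev$, $\Since$, and the fresh propositions $\TP_\pi$, $\Tick$, $P_\pi$, none of which introduce time variables. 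Two clauses need care. The clause for $\phi_1\lor\phi_2$ under $\Red_\pi$, as written, calls $\Red$; reading it as $\Red_\pi$, or noting that $\Red$ also preserves variables, handles both. The embedding $\HLift{\phi}{\pi}$ abbreviating $\HLift{\Embed{\phi}}{\pi}$ already uses only $t,t'$. Any derived operators such as $\HSince_I$ are desugared with only $t,t'$ as stated before the claim.

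Combining the two parts, $\zeta\land\Red(\tilde\chi)$ uses only $t,t'$, hence lies in 1-pTPTL. The only real obstacle is bookkeeping: checking that every abbreviation introduced along the way desugars without a third time variable. I would state that check once as a side remark on the $\Embed{\cdot}$-style desugaring rather than repeat it for each abbreviation.
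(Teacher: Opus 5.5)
Your proposal is correct and follows the paper's own proof, which rests on the same three observations: $\tilde\chi$ has exactly the time variables of $\chi$, $\Red$ and $\Red_\pi$ introduce no new time variables, and $\zeta$ uses only $t$ and $t'$. Your version spells out what the paper leaves implicit, namely the structural induction, the desugaring of $\Once_{[0,0]}$, and the $\Red_\pi$ typo in the disjunction clause (where reading it as $\Red_\pi$ is the right fix, since $\Red$ is not defined on pTPTL formulae).
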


We now prove an analogue of Lemma~\ref{lemma:HyperpTPTL} for the HyperpLTL fragment. It follows from \Cref{lemma:information formulae HyperpLTL} that the rewriting step described in Lemma~\ref{lemma:rewriting} is not necessary to ensure that $\BE{\phi}$ is of size polynomial in $|\phi|$.
Hence, for this special case, we can show that $\Red'(\tilde\chi)$ is again a pLTL formula, where $\Red'(\tilde\chi)$ is defined like $\Red(\tilde\chi)$ except that $\Red'(\HLift{\phi}{\pi}) \equiv \lnot \TP_\pi \Since (\TP_\pi \land \Red_\pi(\phi))$. Moreover, the additional constraints $\zeta$ are not required to check satisfiability.

\begin{corollary}\label{corollary:HyperpLTL sat}
  Suppose that $\chi$ is a HyperpLTL formula without $\HSince$ operators.
  Then $\chi$ is satisfiable iff $\Red'(\tilde\chi)$, which is a pLTL formula of size $O(n^m \cdot |\chi'|)$, is satisfiable.
\end{corollary}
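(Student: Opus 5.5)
We adapt the proof of Lemma~\ref{lemma:HyperpTPTL}, replacing the timestamp-level encoding by a purely positional one. There are two differences. HyperpLTL formulae carry no time constraints, and $\chi$ has no $\HSince$ operator. As a result, the outer level of $\chi$ is a Boolean combination of atoms $\HLift{\phi}{\pi}$, all evaluated at the single anchoring "last" point. Concretely, for a set $S$ of aligned traces, $S \hypermodels \chi$ iff the quantifier-free body evaluates to true when each $\HLift{\phi}{\pi}$ is read as "$\phi$ holds at the last time-point of $\Pi(\pi)$". (The greatest $i$ with $\tms_i = \tms(S)$ is exactly the last position.) For empty traces the atom is false, and that can be handled uniformly. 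Since pLTL satisfaction ignores timestamps, alignment places no constraint: any finite family of finite traces can be made aligned by adjusting delays. Hence satisfiability of $\chi$ is equivalent to the existence of finitely many traces $s_1,\dots,s_n$ (witnesses for $\pi_1,\dots,\pi_n$) such that $\chi'$ holds under every assignment of the universal variables to traces in $S$.

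The first step is the usual $\texists^*\tforall^*$ argument, which shows that $\chi$ is satisfiable iff $\tilde\chi$ is satisfiable by some $n$-tuple of traces. For the forward direction, instantiate the universals with the existential witnesses. For the converse, take $S=\{s_1,\dots,s_n\}$: every assignment of the $\pi'_j$ then ranges over the $\pi_{i_j}$, so $\chi$ holds. This step is identical to the corresponding part of Lemma~\ref{lemma:HyperpTPTL}, which we reuse.

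The second step shows that a tuple $(s_1,\dots,s_n)$ satisfies $\tilde\chi$ iff some single pLTL trace satisfies $\Red'(\tilde\chi)$. For the forward direction, interleave the traces sequentially. Concatenate their positions, say $s_1$ first and then $s_2$, and so on. Each position of $s_i$ receives $\TP_{\pi_i}$ and the propositions $P_{\pi_i}$ for $P \in \ev_j$. If all traces are empty, use a trace with a single event carrying no $\TP$. The key invariant, proved by induction on $\phi$, is the following: if $w$ is an interleaving position carrying $\TP_\pi$ and corresponding to position $j$ of $\Pi(\pi)$, then $w \models \Red_\pi(\phi)$ iff $\Tuple{\Pi(\pi),j}\models\phi$. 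The $\Prev$ and $\Since$ clauses of $\Red_\pi$ skip positions lacking $\TP_\pi$, which yields exactly this projection property. The $\Red'$ atom $\lnot\TP_\pi \Since (\TP_\pi\land\Red_\pi(\phi))$ evaluated at the last position then picks out the last $\TP_\pi$ position, i.e.\ the last position of $\Pi(\pi)$, and fails if there is none. The projection property does not depend on the order of the interleaving, so the sequential order is harmless. For the converse, project a satisfying interleaving onto the $\TP_{\pi_i}$ positions to obtain $s_i$, and apply the same invariant. Here $\zeta$ is unnecessary because no $\Tick$-based synchronization is needed: outer-level evaluation happens only at the final point.

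The size bound follows as in Lemma~\ref{lemma:HyperpTPTL}. The construction of $\tilde\chi$ gives size $O(n^m|\chi'|)$, and $\Red'$ is linear. The output is pLTL because $\Red'$ introduces only untimed $\Prev$ and $\Since$ and no time constraints. The main obstacle is the inductive projection invariant in the second step, in particular checking the $\Prev$ and $\Since$ clauses at non-$\TP_\pi$ positions and the edge case of empty traces. The mismatch $\Red_\pi(\phi_1\lor\phi_2)\equiv\Red(\phi_1)\lor\Red(\phi_2)$, which should be read as $\Red_\pi$, is a typographical point.
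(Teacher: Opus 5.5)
Your proposal is correct and follows essentially the paper's own argument. Both discard timestamps (the paper sets them all to $0$), reuse the $\texists^*\tforall^*$ expansion, and replace interleavings by ``non-metric'' ones that satisfy only the projection condition; both then re-establish the projection invariant for the inner translation and handle the $\HLift{\phi}{\pi}$ atom at the final position, so that $\zeta$ and the tick predicate are no longer needed. The only difference is cosmetic: you build the witness trace by sequential concatenation rather than invoking Lemma~\ref{lemma:ex interleaving}, which is equally valid because, as you note, the projection invariant does not depend on the order of the interleaving.
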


\subsubsection{Deciding Policy Change}

With this reduction, the non-primitive recursive algorithm by Haase et al. for deciding 1-TPTL
satisfiability~\cite{book/Haase10,journal/lmcs/ouaknine2007decidability}
can be used to check the satisfiability of 1-HyperpTPTL formulae in the $\texists^*\tforall^*$ fragment. Assume given a routine  $\sattptl$, that when given as input a 1-pTPTL formula $\phi$, returns $\True$ iff $\phi$ is satisfiable, and $\False$ otherwise. Our decision procedure is shown in Algorithm~\ref{alg:upper-bound}, where $\pi_1,\pi'_1,\pi_2,\pi'_2$ are four distinct trace variables.
\begin{algorithm}[H]
	\caption{Deciding policy change\label{alg:upper-bound}}
	\begin{algorithmic}
		\Function{decidePolicyChange}{$\phi_1,\phi_2$}
		\State $(\tforall \pi,\pi'.~\chi_1) \gets \BE{\phi_1}$; $(\tforall \pi,\pi'.~\chi_2) \gets \BE{\phi_2}$
		\State $\chi \gets (\texists \pi_2,\pi'_2.\tforall \pi_1,\pi'_1. \chi_1[\pi \mapsto \pi_1, \pi' \mapsto \pi'_1] \land \lnot \chi_2[\pi \mapsto \pi_2, \pi' \mapsto \pi'_2])$
		\State \Return $\neg \sattptl(\zeta \land \Red(\tilde{\chi}))$
		\EndFunction
  \end{algorithmic}
\end{algorithm}
Combining Lemmata~\ref{lemma:pc:iff}, \ref{lemma:build:extract}, \ref{lemma:information formulae}, \ref{lemma:rewriting}, \ref{lemma:HyperpTPTL} and \ref{lemma:1-HyperpTPTL}, we now obtain:
\begin{theorem}\label{thm:upper}
  Algorithm~\ref{alg:upper-bound} decides \textsc{PolicyChange} for real-time semantics with non-primitive recursive complexity.
\end{theorem}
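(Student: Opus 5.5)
Correctness and complexity are argued separately; correctness is a chain of equivalences through the earlier lemmata. By Lemma~\ref{lemma:pc:iff}, a function implementing policy change from $\phi_1$ to $\phi_2$ exists (for every generic treelike monitor) iff $[s]_{\phi_1} \subseteq [s]_{\phi_2}$ for all finite traces $s$. Lemma~\ref{lemma:information formulae} says the information formulae satisfy conditions (1)--(3) of Lemma~\ref{lemma:build:extract} with $\Extract{\phi}\equiv\Build{\phi}=\BE{\phi}$. Lemma~\ref{lemma:build:extract} then gives that this inclusion condition holds iff $\BE{\phi_1}\HImp\BE{\phi_2}$ is valid. Lemma~\ref{lemma:rewriting} lets us replace the infinitary conjunctions by finite 1-HyperpTPTL formulae without changing semantics, so we may work with the prenex forms $\tforall\pi,\pi'.\chi_1$ and $\tforall\pi,\pi'.\chi_2$.

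Validity of $\BE{\phi_1}\HImp\BE{\phi_2}$ means unsatisfiability of its negation, $(\tforall\pi,\pi'.\chi_1)\HConj\texists\pi,\pi'.\HNeg\chi_2$. After renaming apart, I pull the quantifiers out as $\texists\pi_2,\pi'_2.\tforall\pi_1,\pi'_1.\,\chi_1[\dots]\HConj\HNeg\chi_2[\dots]$. This step is sound because the sets $S$ are non-empty and the two conjuncts share no trace variables. The result is exactly the $\chi$ built in Algorithm~\ref{alg:upper-bound}, and it lies in the $\texists^*\tforall^*$ fragment of 1-HyperpTPTL. Lemma~\ref{lemma:HyperpTPTL} gives that $\chi$ is satisfiable iff $\zeta\land\Red(\tilde\chi)$ is. Lemma~\ref{lemma:1-HyperpTPTL} places this formula in 1-pTPTL, so $\sattptl$ applies. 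Hence the algorithm returns $1$ exactly when policy change is possible.

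One subtlety needs checking. Lemma~\ref{lemma:build:extract} quantifies over non-empty sets of aligned traces, which matches exactly the domain of the HyperpTPTL semantics. So "valid" there must be read over aligned sets, and condition (3) is what justifies restricting to such sets. I would recheck this before relying on it, but it is already packaged inside Lemma~\ref{lemma:build:extract}.

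For complexity, by the remark after Lemma~\ref{lemma:rewriting} we have $|\chi_i|=O(|\phi_i|^2)$, computable in polynomial time. With $n=m=2$ the blow-up $n^m=4$ is constant, so $|\zeta\land\Red(\tilde\chi)|$ is polynomial in $|\phi_1|+|\phi_2|$. Note that the final formula is a pTPTL formula interpreted over finite words and using past operators. The main obstacle is confirming that the cited 1-TPTL decision procedure~\cite{book/Haase10,journal/lmcs/ouaknine2007decidability} applies to it. I would justify this by reversing finite words, which turns past operators into future ones and preserves the single-clock property; the freeze constraints change sign accordingly. That procedure runs in non-primitive recursive time, and composing it with the polynomial-time reduction gives the stated bound.
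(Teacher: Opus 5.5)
Your proposal is correct and follows the same route as the paper. Lemmata~\ref{lemma:pc:iff}, \ref{lemma:build:extract}, \ref{lemma:information formulae} and~\ref{lemma:rewriting} reduce policy change to validity of $\BE{\phi_1} \HImp \BE{\phi_2}$; its negation is the $\texists^*\tforall^*$ formula $\chi$ of Algorithm~\ref{alg:upper-bound}; and Lemmata~\ref{lemma:HyperpTPTL} and~\ref{lemma:1-HyperpTPTL} with the cited 1-TPTL procedure finish the argument. Your extra remarks are left implicit in the paper and are consistent with it: prenexing is sound because the sets are non-empty, the constant $n^m = 4$ blow-up gives a polynomial-size instance, and word reversal lets a future-time procedure apply (the one absolute conjunct $\Once(t.\;t \leq 0)$ of $\zeta$ does not just flip sign under reversal, but it can be dropped because everything else uses only timestamp differences).
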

\begin{proof}
  \emph{Correctness.} Combining Lemmata~\ref{lemma:pc:iff}, \ref{lemma:build:extract}, \ref{lemma:information formulae} and \ref{lemma:rewriting},
	we know that policy change from $\phi_1$ to $\phi_2$ is possible iff $\BE{\phi_1} \Imp \BE{\phi_2}$ is valid. This is true iff
	\begin{align*}
		\neg(\BE{\phi_1} \Imp \BE{\phi_2}) \equiv \neg((\tforall \pi,\pi'.~\chi_1) \Imp (\tforall \pi,\pi'.~\chi_2))
	\end{align*}
	is not satisfiable. It is easy to check that $\chi \equiv \neg((\tforall \pi,\pi'.~\chi_1) \Imp (\tforall \pi,\pi'.~\chi_2))$.
	Since $\chi$ is in the $\texists^*\tforall^*$ fragment of HyperpTPTL, by Lemma~\ref{lemma:HyperpTPTL},
	it is satisfiable iff $\zeta \land \Red(\tilde\chi)$ is satisfiable in pTPTL. Therefore, policy change from $\phi_1$ to $\phi_2$ is possible
	iff $\zeta \land \Red(\tilde\chi)$ is not satisfiable. 
	
	\emph{Complexity.} Since $\chi$ is in 1-HyperpTPTL, using Lemma~\ref{lemma:1-HyperpTPTL} and~\cite{book/Haase10,journal/lmcs/ouaknine2007decidability}, satisfiability of $\zeta \land \Red(\tilde\chi)$ can be decided with non-primitive recursive complexity.
	\qed
\end{proof}

We further obtain a lower bound for the complexity of \textsc{PolicyChange} via a reduction from pMTL satisfiability, which is known to have a non-primitive recursive lower bound \cite{journal/lmcs/ouaknine2007decidability}:

\begin{theorem}\label{theorem:pc pmtl real-time lb}
  \textsc{PolicyChange} for pMTL over real-time semantics has a non-primitive recursive lower bound.
\end{theorem}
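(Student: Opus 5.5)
We reduce satisfiability of pMTL over finite timed words, which is non-primitive recursive hard by~\cite{journal/lmcs/ouaknine2007decidability}, to the complement of \textsc{PolicyChange}. Since the complement of a problem has the same deterministic time complexity, this gives the lower bound. Given a pMTL formula $\psi$, we build in polynomial time a pair $\Tuple{\phi_1,\phi_2}$ such that policy change from $\phi_1$ to $\phi_2$ is impossible iff $\psi$ is satisfiable.

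The natural choice is $\phi_1 = \True$ together with a $\phi_2$ that has a single ``interesting'' temporal operator applied to $\psi$, say $\phi_2 = \Prev_{[0,\infty)} \psi$. By Lemma~\ref{lemma:pc:iff}, policy change is possible iff $[s]_{\phi_1} \subseteq [s]_{\phi_2}$ for all $s$. Since $\True$ has no temporal subformulae, its only vertex is stateless, so $[s]_{\True}$ is the set of all finite traces (modulo the trivial output-level bisimilarity, which is constant). Hence policy change is possible iff all finite traces are pairwise $\deepbisim_{\phi_2}$-equivalent.

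The key steps are as follows.
\begin{compactenum}
  \item Show that if $\psi$ is unsatisfiable, then $\psi$ is false at every time-point of every trace. Then every subformula tester of $\phi_2$ is irrelevant.
  \item Show that if $\psi$ is satisfiable, say $\Tuple{s,\len{s}-1} \models \psi$, then $s$ and a trace $s'$ with $\Tuple{s',\len{s'}-1}\not\models\psi$ are distinguished at the root $\Prev\psi$. Such an $s'$ exists, e.g.\ by comparing with the empty trace or by choosing a suitable distinct trace. Append an arbitrary event to both; the root outputs then differ, so the root substates are not bisimilar.
\end{compactenum}

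The obstacle is step 1: deep bisimilarity also demands agreement on the substates of the temporal subformulae inside $\psi$ itself, which may be non-trivial even when $\psi$ is unsatisfiable. To avoid this, I would instead make $\phi_1$ carry all of $\psi$'s substructure. I would take $\phi_1 = \psi \land \False$, or more robustly $\phi_1 = \lnot \psi \lor \psi$, which contains $\psi$ as a subtree. I would then set $\phi_2 = (\lnot\psi \lor \psi) \lor \Prev \psi$, which has exactly the same subtrees as $\phi_1$ plus one additional $\Prev\psi$ vertex.

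With this choice, $\deepbisim_{\phi_2}$ refines $\deepbisim_{\phi_1}$ only by additionally requiring agreement on the truth of $\psi$ at the last time-point. This follows from the information formula for $\Prev_I$: $\BE{(\Prev\psi)} = \BE{\psi} \HConj \tforall\pi,\pi'.\{\psi\}_\pi\HImp\{\psi\}_{\pi'}$. Hence, by Lemmas~\ref{lemma:build:extract} and~\ref{lemma:information formulae}, policy change is possible iff all traces in each $\deepbisim_\psi$-class agree on $\psi$ at their last point.

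This is still not exactly unsatisfiability, so the final step is to pin it down with a guarding construction. Use a fresh predicate $R$ and consider $\psi' = \psi \land \Once R$, or better, put $\psi$ only under $\Prev$ in $\phi_2$ with $\phi_1=\True$. Then argue that when $\psi$ is unsatisfiable, its subformula testers still induce classes that can be ignored. Concretely, I would let $\phi_2 = \Prev(\psi)$ and $\phi_1 = \True$, and handle the inner-substate issue by replacing $\psi$ with an equisatisfiable formula whose only temporal operators are such that each is forced trivial whenever $\psi$ is unsatisfiable. Checking that this equisatisfiable normal form exists in polynomial size is, I expect, the main technical hurdle, and I would try Tseitin-style flattening guarded by a fresh proposition.
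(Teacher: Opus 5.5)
Your proposal never reaches a working reduction, and the route you finally commit to cannot work in principle. With $\phi_1 = \True$, Lemma~\ref{lemma:pc:iff} says policy change is possible iff $\deepbisim_{\phi_2}$ is total on $\Fintraces$. A short check via Lemma~\ref{lemma:deepbisim:alt} shows this holds iff the operand of every $\Prev_I$, and the right operand of every $\Since_I$, in $\phi_2$ with $I \neq \emptyset$ is unsatisfiable. For the nontrivial direction, compare a trace ending in a satisfying point with the empty trace.

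Processed bottom-up, each such operand is then equivalent to its propositional skeleton, i.e.\ with all temporal subformulae replaced by $\False$. So instances with $\phi_1 = \True$ are decidable in coNP. Since pMTL satisfiability is not primitive recursive, no primitive recursive map $\psi \mapsto \psi^*$ with the property you want can exist, however it is normalised. Tseitin flattening in particular puts fresh, satisfiable atoms directly under temporal operators. That makes those testers nontrivial and the answer $0$ regardless of $\psi$. The guard $\psi \land \Once R$ has the same defect: it adds a tester for $\True \Since R$ whose right operand is satisfiable, so unless that tester also occurs in $\phi_1$, policy change is always impossible.

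The missing idea is a fresh \emph{atom}, not a temporal guard, placed at the root of the new policy, while the old policy carries all of $\psi$'s testers. Your second attempt was one step away from this. Take $\phi_1 = \Prev\psi$ and $\phi_2 = \Prev(\psi \land A)$ with $A$ fresh; in your format, $\phi_2 = (\lnot\psi \lor \psi) \lor \Prev(\psi \land A)$ also works. The temporal testers of $\psi$ occur on both sides and the tester for $A$ is stateless. Hence the only new constraint in $\deepbisim_{\phi_2}$ is agreement on $\psi \land A$ at the last time-point.
\begin{itemize}
  \item If $\psi$ is unsatisfiable, this constraint is vacuous, so policy change is possible.
  \item If $\Tuple{s_0,\len{s_0}-1} \models \psi$, let $s_0'$ be $s_0$ with $A$ added to its last event. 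Since $A$ does not occur in $\psi$, every subformula of $\psi$ has the same truth value at every time-point of $s_0$ and $s_0'$, so $s_0 \deepbisim_{\phi_1} s_0'$. But the two traces disagree on $\psi \land A$, so $s_0 \not\deepbisim_{\phi_2} s_0'$, and policy change is impossible.
\end{itemize}
This repairs exactly the defect you identified, namely that ``$\deepbisim_\psi$-classes agree on $\psi$'' is not unsatisfiability (for example, it holds for valid $\psi$). The fresh atom makes the root verdict independent of $\psi$'s deep state whenever $\psi$ is satisfiable.
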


\section{Deciding Policy Change with Discrete-Time Semantics}\label{sec:discrete}


In this section, we consider the \textsc{PolicyChange} problem over discrete-time semantics. Consequently, we show that \textsc{PolicyChange} for pLTL and pMTL becomes PSPACE-complete and EXPSPACE-complete. De Giacomo and Vardi~\cite{conf/ijcai/DeGiacomo13} showed that the satisfiability problem is PSPACE-complete for LTL over finite traces (also known as LTL$_f$). Although De Giacomo and Vardi use LTL with only future modalities, their results translate to pLTL by symmetry. Moreover, a general translation from MTL to LTL is straightforward and incurs a blow-up polynomial in the interval bounds. Hence, when these bounds are encoded in binary the translation yields an exponentially large LTL formula. Thus,

\begin{theorem}\label{thm:pc ub discrete}
  The following statements hold:
  \begin{compactenum}[(i)]
    \item \textsc{PolicyChange} for pLTL is in PSPACE.
    \item \textsc{PolicyChange} for pMTL is in EXPSPACE.
  \end{compactenum}
\end{theorem}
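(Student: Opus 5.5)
We reuse the reduction pipeline of Section~\ref{sec:continuous}, specialised to discrete time, where all timestamps are natural numbers and interval constraints can be unrolled into next-step operators.

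\emph{Part (i).} For pLTL formulae $\phi_1,\phi_2$, Lemmata~\ref{lemma:pc:iff}, \ref{lemma:build:extract} and \ref{lemma:information formulae} hold unchanged: timestamps play no role in pLTL, so we may use the same information formulae. Hence policy change is possible iff $\BE{\phi_1} \HImp \BE{\phi_2}$ is valid. By Lemma~\ref{lemma:information formulae HyperpLTL}, each $\BE{\phi_j}$ is a HyperpLTL formula without $\HSince$ of size $O(|\phi_j|^2)$, in prenex form $\tforall\pi,\pi'.\chi_j$. As in Algorithm~\ref{alg:upper-bound}, we form $\chi = \texists\pi_2,\pi'_2.\tforall\pi_1,\pi'_1.\;\chi_1 \HConj \HNeg\chi_2$. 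Here $n=m=2$, so by Corollary~\ref{corollary:HyperpLTL sat} the formula $\chi$ is satisfiable iff the pLTL formula $\Red'(\tilde\chi)$ is satisfiable, and $|\Red'(\tilde\chi)| = O(4\cdot(|\phi_1|^2+|\phi_2|^2))$. Satisfiability of pLTL over finite traces is in PSPACE, by the result of De Giacomo and Vardi transferred to the past-time setting by reversing the trace. Since PSPACE is closed under complement, the problem is in PSPACE.

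\emph{Part (ii).} For pMTL over $\Nat$, we first eliminate intervals. In discrete time several time-points may share a timestamp. We therefore encode a trace as a pLTL trace over $\Props \cup \{\Tick\}$, in which $\Tick$ marks the unit steps of time: a delay $\delta$ becomes $\delta$ filler positions. Over this encoding, $\Prev_I$ and $\Since_I$ with bounds $a,b$ in binary can be expressed by pLTL formulae that count $\Tick$-positions by nesting, with size polynomial in $|\phi|\cdot 2^{|\phi|}$. Unbounded intervals $[a,\infty)$ need only a lower count up to $a$.

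The main obstacle is keeping the policy change question invariant under this translation. Deep bisimilarity is defined on the subformula tree of the original $\phi$, not of its translation, so we should not recompute information formulae from the translated formula. Instead we translate the discrete-time information formula directly. In discrete time, the infinitary conjunction over $d\in I\cup[0,\inf I]$ collapses to a finite conjunction over integer $d \le \max(\inf I,\sup I)+1$; beyond that, $I-d$ stabilises or is empty. This gives exponentially many conjuncts, each translated to pLTL with exponential size. We would verify Lemma~\ref{lemma:information formulae} for discrete time, checking that integer shifts suffice (the same proof should go through with $d\in\Nat$). We also need condition (3) of Lemma~\ref{lemma:build:extract} and the alignment constraint. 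Alignment is handled in the interleaved encoding by a $\zeta$-like pLTL constraint forcing all component traces to end at the last $\Tick$-block; this has polynomial size.

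The result is a HyperpLTL formula without $\HSince$ of exponential size with four trace quantifiers. Corollary~\ref{corollary:HyperpLTL sat} yields a pLTL formula of exponential size, whose satisfiability is decidable in space polynomial in that size, i.e.\ in EXPSPACE.
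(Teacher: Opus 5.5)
Your part (i) is exactly the paper's argument. It combines quadratic-size information formulae without hyper-level Since (Lemma~\ref{lemma:information formulae HyperpLTL}), the reduction of Corollary~\ref{corollary:HyperpLTL sat} with $n=m=2$, and PSPACE satisfiability of pLTL on finite traces via De Giacomo--Vardi and time reversal.

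For (ii) the paper only says that pMTL unrolls into pLTL with a blow-up polynomial in the interval bounds, hence exponential in binary. It does not say which formulae are translated. Your route is more careful. You keep $\deepbisim_\phi$ tied to the syntax tree of the original $\phi$. You note that in discrete time the conjunction in $\BE{(\phi_1\Since_I\phi_2)}$ reduces to exponentially many integer shifts $d$. You then unroll only the pMTL formulae inside the lifted atoms $\{\cdot\}_\pi$, which gives an exponential-size hyperformula to which Corollary~\ref{corollary:HyperpLTL sat} applies.

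This is the right place for the translation. Unrolling $\phi_1,\phi_2$ themselves and then invoking (i) would in general decide a different instance, because the tick-counting subformulae get their own substates and refine $\deepbisim$. For example, take $s=(\{Q\},0)(\{Q\},2)$ and $s'=(\emptyset,0)(\{Q\},2)$:
\begin{itemize}
\item they are $\deepbisim_{\Once_{[0,2]}Q}$-equivalent but not $\deepbisim_{\Once_{[2,2]}Q}$-equivalent, so the instance $\langle\Once_{[0,2]}Q,\Once_{[2,2]}Q\rangle$ is answered 0;
\item yet if $\Once_{[0,2]}Q$ is unrolled as a disjunction of ``exactly $c$ ticks ago'' formulae for $c=0,1,2$, then the unrolling of $\Once_{[2,2]}Q$ is one of its subformulae, so the translated instance is answered 1.
\end{itemize}
The cost of your route is re-verifying Lemma~\ref{lemma:information formulae} for $d\in\Nat$, which does go through, and checking that the trace encoding is correct.

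Two small corrections.
\begin{itemize}
\item For unbounded $I$ the shifts should range over $d\le\inf I$. Your bound $\max(\inf I,\sup I)+1$ is infinite there.
\item The ``$\zeta$-like'' alignment constraint is not what you need. In the non-metric interleaving of Corollary~\ref{corollary:HyperpLTL sat}, each component carries its own tick markers, so ``all components end in the same tick block'' is not pLTL-definable. It is also unnecessary: pMTL truth depends only on timestamp differences, so any satisfying family of decoded traces can be aligned by shifting first delays, as in condition~(3) of Lemma~\ref{lemma:build:extract}. What you do need is to guard the trace quantifiers with a polynomial-size per-trace well-formedness formula (e.g., the last position is not a filler).
\end{itemize}
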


Additionally, \cite{laroussinie2006efficient} showed that the satisfiability problem for the future time fragment of MTL, and hence also pMTL (by symmetry), over discrete time is EXPSPACE-complete. Similar to \Cref{theorem:pc pmtl real-time lb}, we then obtain our lower bounds by reducing the pMTL (pLTL, resp.) satisfiability problem over discrete-time semantics to \textsc{PolicyChange} for pMTL (pLTL, resp.), thus obtaining

\begin{theorem}\label{thm:pc lb discrete}
  The following statements hold:
  \begin{compactenum}[(i)]
    \item \textsc{PolicyChange} for pLTL is PSPACE-hard.
    \item \textsc{PolicyChange} for pMTL is EXPSPACE-hard.
  \end{compactenum}
\end{theorem}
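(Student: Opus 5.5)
We reduce satisfiability of pLTL (resp.\ pMTL) over finite traces in discrete time, which is PSPACE-hard (resp.\ EXPSPACE-hard), to \textsc{PolicyChange}. The reduction is essentially the same as the one we would use for Theorem~\ref{theorem:pc pmtl real-time lb}. Given $\psi$, we produce in polynomial time a pair $\Tuple{\phi_1,\phi_2}$ such that policy change from $\phi_1$ to $\phi_2$ is possible iff $\psi$ is unsatisfiable. Hardness of the complement suffices, since PSPACE and EXPSPACE are closed under complement.

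For the construction, take $\phi_1 = \True$ and $\phi_2 = \Prev \psi$, or, to avoid an empty-trace corner case, $\phi_2 = \Prev\psi \lor \psi$ with an $\Once$ wrapper if needed. By Lemma~\ref{lemma:pc:iff}, policy change is possible iff $[s]_{\phi_1} \subseteq [s]_{\phi_2}$ for all finite $s$. Since $\phi_1=\True$ has no stateful vertices, $\deepbisim_{\phi_1}$ relates all traces, so $[s]_{\phi_1}=\Fintraces$. The condition therefore says that $\deepbisim_{\phi_2}$ is the total relation.

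\emph{If $\psi$ is unsatisfiable:} every subformula's truth may still vary, so $\phi_2=\Prev\psi$ alone is not enough. Instead we argue directly via the information formula $\BE{\phi_2}$ and Lemmata~\ref{lemma:build:extract} and~\ref{lemma:information formulae}: $\BE{\phi_2}$ must hold on all sets of aligned traces. The cleaner choice is $\phi_2 = \Prev_{I}(\psi \land \HAlways \dots)$, but the simplest robust choice is to replace $\psi$ by a formula whose every stateful subformula is tied to $\psi$. We use $\phi_2 = \Prev \Once \psi$ relative to a fresh guard. Concretely, set $\phi_2 = \Prev(\Once \psi)$ and $\phi_1 = \Once\psi' $, where $\psi'$ is a copy of $\psi$'s temporal structure so that $\phi_1$ already carries all substates for subformulae of $\psi$. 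Then $\deepbisim_{\phi_1}$ refines the agreement on all subformulae of $\psi$ and on $\Once\psi$. The only extra information in $\phi_2$ is the $\Prev$ tester on $\Once\psi$, i.e., the truth of $\Once\psi$ at the last point. For $\psi$ unsatisfiable this is always false, so the classes coincide. For $\psi$ satisfiable, take $\phi_1=\True$-like information on the outer part. The cleanest version we expect to work is $\phi_1 = \BE{}$-equivalent to knowing all substates of $\psi$ except the outermost $\Once$. Choose $\phi_1 = \psi \lor \lnot\psi$ wrapped so that it contains $\psi$'s temporal subformulae, and $\phi_2 = \Prev\Once\psi$ or $\Once\psi$. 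If $\psi$ is satisfied by some $s$ but not by its proper prefix $s_0$, then $s_0$-extensions versus $s$ are indistinguishable for $\phi_1$ only if the substates agree. This is where the construction must be tuned, so we will pick $\phi_1=\True$, $\phi_2=\Once\psi$ but evaluate $\psi$ over fresh predicates guarded by a fresh proposition $G$ occurring nowhere in the trace-relevant part.

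The main obstacle is exactly this step: making $\deepbisim_{\phi_2}$ total when $\psi$ is unsatisfiable, despite $\psi$'s inner temporal subformulae distinguishing traces. The first thing to try is the guard trick $\phi_2 = \Once(\psi \land \chi)$ where every atom $P$ of $\psi$ is replaced by $P\land G$ and $G$ also appears in $\phi_1 = \Once\bigl(\psi[P\mapsto P\land G]\bigr) \land \lnot\Once\bigl(\psi[P\mapsto P\land G]\bigr)$. This places all inner subformulae in $\phi_1$ with identical substates, so deep bisimilarity for them is inherited, and only the outer $\Once$ of $\phi_2$ is new. That outer tester's classes are determined by the truth of $\Once\psi$, which is constantly false iff $\psi$ is unsatisfiable. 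Conversely, if some $s\models\psi$, then $s$ and a trace agreeing on all inner substates but with $\Once\psi$ false are needed. For this we use $\phi_1$ lacking the top $\Once$: take $s$ versus the empty-prefix behaviour, exploiting that $\phi_1$'s substates cannot see $\Once$ of the root. Sizes are linear, and intervals are copied verbatim, so the reduction is polynomial for both the pLTL and the pMTL cases. This yields (i) and (ii).
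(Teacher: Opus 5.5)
\textbf{There is a genuine gap.} You correctly reduce to showing that $[s]_{\phi_1}\subseteq[s]_{\phi_2}$ for all $s$ iff $\psi$ is unsatisfiable, and you correctly see why $\Tuple{\True,\Prev\psi}$ fails. But you never reach a pair that handles the satisfiable direction. You flag this yourself as the main obstacle, and the final guard construction does not resolve it.

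Take $\hat\psi=\psi[P\mapsto P\land G]$ and $\phi_1=\Once\hat\psi\land\lnot\Once\hat\psi$. The tree for $\phi_1$ already contains a tester for $\Once\hat\psi$, so $\deepbisim_{\phi_1}$ already fixes the truth of $\Once\hat\psi$ at the last time-point. Your argument treats the new tester of $\phi_2$ as tracking ``the truth of $\Once\psi$''. If that is all it tracks, then $[s]_{\phi_1}\subseteq[s]_{\phi_2}$ holds for every $s$, and the instance is positive even when $\psi$ is satisfiable.

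Your remedy is to use ``$\phi_1$ lacking the top $\Once$'' and compare a model of $\psi$ with ``the empty-prefix behaviour''. This contradicts the $\phi_1$ you just fixed. Moreover, a model of $\psi$ and a trace on which $\psi$ never held will in general disagree on the inner substates of $\psi$, so they are not $\deepbisim_{\phi_1}$-related. The guard $G$ does not help either: it sits on atoms shared by $\phi_1$ and $\phi_2$, so toggling it generally changes $\phi_1$'s substates too. Finally, the conjunct $\chi$ in $\phi_2$ is never specified.

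\textbf{The missing idea} is to place a fresh proposition \emph{only} under the single new stateful tester of $\phi_2$. It can then separate two traces without affecting any subformula of $\phi_1$. The paper maps $\psi$ to $\Tuple{\Prev\psi,\Prev(\psi\land A)}$ with $A$ fresh. Both formulae contain the same stateful subformulae of $\psi$. The $\Prev$ tester of $\phi_1$ records the truth of $\psi$ at the last point, and that of $\phi_2$ records the truth of $\psi\land A$; the conjunction and $A$ are stateless.
\begin{itemize}
\item If $\psi$ is unsatisfiable, $\psi\land A$ is constantly false. The new tester then induces a single class, so the inclusion holds.
\item If $s_0$ satisfies $\psi$, let $s_0'$ be $s_0$ with $A$ added to its last event. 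Since $A$ does not occur in $\phi_1$, every subformula of $\phi_1$ evaluates identically on both traces, so $s_0\deepbisim_{\phi_1}s_0'$. They disagree on $\psi\land A$ at the last point, however, so $s_0\not\deepbisim_{\phi_2}s_0'$ by Lemma~\ref{lemma:foo:prev}.
\end{itemize}
Taking $\chi:=A$ fresh and dropping $G$ would turn your guard version into a variant of this argument. As written, though, the proposal supplies neither the fresh conjunct nor the distinguishing pair. Your size bound and the remark on closure under complement are fine.
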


\section{Policy Change for Non-Treelike Monitors}\label{sec:general}

In this section, we report on preliminary results on the following generalization of the \textsc{PolicyChange} problem over discrete-time semantics,
which lifts the treelike assumption on monitors:
\begin{definition}\label{def:pcstar}
  Let $\Monitor = \Tuple{\States,\Init,\Step,\Out}$ be a generic monitor.
  A~function $\Change : Q^{\phi_1} \rightarrow Q^{\phi_2}$ implements \emph{policy change} from $\phi_1$ to $\phi_2$ iff
  \begin{equation*}\label{eq:pc2}
    q \bisim_{\Monitor^{\phi_1}} \Step^{\phi_1}(\Init^{\phi_1},s) \implies \Change(q) \bisim_{\Monitor^{\phi_2}} \Step^{\phi_2}(\Init^{\phi_2},s)
  \end{equation*}
  for all finite traces $s$ and states $q \in \States^{\phi_1}$.
\end{definition}

\begin{definition}[\textsc{PolicyChange}*]~
  
  \textbf{Input}: a pair of pMTL formulae $\Tuple{\phi_1,\phi_2}$.
  
  \textbf{Output}: 1 if, for every generic monitor $M$ (not necessarily treelike), there exists a function $C$
  implementing policy change from $\phi_1$ to $\phi_2$, and 0 otherwise.
\end{definition}

The \textsc{PolicyChange}* problem for pLTL is decidable:
\begin{theorem}\label{thm:pcstar pltl}
  \textsc{PolicyChange}* for pLTL is in EXPSPACE.
\end{theorem}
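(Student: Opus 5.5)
Policy change for non-treelike monitors only requires plain bisimilarity, so we will first characterise it semantically and then decide that characterisation with automata. We begin with the analogue of Lemma~\ref{lemma:pc:iff} for \textsc{PolicyChange}*. Define the trace-level equivalence $s \approx_\phi s'$ iff $\forall u \in \Netraces.\; (s\cdot u \in \Sem{\phi} \iff s'\cdot u \in \Sem{\phi})$. This is the Myhrnerode-style residual equivalence of the language $\Sem{\phi}$.

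For every correct generic monitor, $\Step^\phi(\Init^\phi,s) \bisim_{\Monitor^\phi} \Step^\phi(\Init^\phi,s')$ iff $s \approx_\phi s'$. This follows from correctness and the lifting identity $\Out^\phi(\Step^\phi(q,s),u)=\Out^\phi(q,s\cdot u)$.

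Next we show that a function $\Change$ exists for every $\Monitor$ iff $[s]_{\approx_{\phi_1}} \subseteq [s]_{\approx_{\phi_2}}$ for all $s$. For the \emph{if} direction, given a state $q$ bisimilar to some reachable state $\Step^{\phi_1}(\Init^{\phi_1},s)$, choose any such $s$ and set $\Change(q)=\Step^{\phi_2}(\Init^{\phi_2},s)$. The choice is irrelevant by the inclusion. States not bisimilar to any reachable state are mapped arbitrarily, since the precondition is then vacuous. For the \emph{only if} direction, take the canonical monitor whose states are finite traces, with step by concatenation. Two traces $s_1 \approx_{\phi_1} s_2$ with $s_1 \not\approx_{\phi_2} s_2$ would then force $\Change(s_1)$ to be bisimilar to two non-bisimilar states, a contradiction.

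It therefore suffices to decide whether some pair $s_1,s_2$ with $s_1 \approx_{\phi_1} s_2$ and $s_1\not\approx_{\phi_2} s_2$ exists. In discrete time with pLTL, delays are irrelevant, so traces are words over $2^{\Props}$. We build, for each $\phi_i$, a deterministic finite automaton $D_i$ for the language $\Sem{\phi_i}$. We go through an NFA of size $2^{O(|\phi_i|)}$, obtained via the standard pLTL-to-automaton construction; alternatively, a pLTL formula admits a direct deterministic automaton of size $2^{O(|\phi|)}$, since the truth values of all subformulae at the last position determine the future behaviour. This second observation is much better, and we use it: the states are the sets of subformulae true at the last position, so $D_i$ has $2^{O(|\phi_i|)}$ states. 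Over $D_i$, $\approx_{\phi_i}$ becomes equivalence of automaton states under the language-equivalence relation, and two reachable state pairs are equivalent iff no distinguishing suffix exists. The condition fails iff there exist a reachable pair of product states $(p_1,p_2)$ and $(p_1',p_2')$, reached by $s_1$ and $s_2$ respectively, with $p_1 \equiv p_1'$ in $D_1$ and $p_2 \not\equiv p_2'$ in $D_2$.

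We guess this witness on the fly. We run $s_1$ and $s_2$ synchronously on the product $D_1\times D_2\times D_1\times D_2$, using padding so that the two traces need not have equal length; this keeps a polynomial-size state of subformula truth sets. We then verify $p_2\not\equiv p_2'$ by guessing a distinguishing suffix, again on the fly in polynomial space. Finally, we verify $p_1\equiv p_1'$ by checking the complement of the existence of a distinguishing suffix; this is co-NPSPACE, which equals PSPACE by Savitch and Immerman–Szelepcsényi. Altogether this gives a PSPACE procedure, which sits well inside the claimed EXPSPACE bound.

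The main obstacle is verifying that the deterministic subformula-valuation automaton is really correct for pLTL. Since the conclusion only needs EXPSPACE, we can fall back on a safer route if this fails: determinise an NFA for $\Sem{\phi_i}$ by the subset construction, obtaining $2^{2^{O(|\phi|)}}$ states, and perform the same reachability and equivalence checks in space logarithmic in the automaton size, i.e., exponential space. This fallback certainly yields membership in EXPSPACE, which is the statement of Theorem~\ref{thm:pcstar pltl}.
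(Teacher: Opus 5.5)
Your proof is correct, and it takes a different route from the paper's. The paper fixes, ``without loss of generality'', one automaton-based monitor with $O(2^{|\phi|})$ states per formula. It asserts that a change function exists iff some map $K\colon Q^{\phi_1}\to Q^{\phi_2}$ preserves the initial state and commutes with the step functions, and then enumerates all candidate maps. Storing the table of $K$ is what costs $O(|Q^{\phi_1}|\log|Q^{\phi_2}|)$, i.e.\ exponential, space.

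You instead first prove the non-treelike analogue of Lemma~\ref{lemma:pc:iff}: a change function exists for every generic monitor iff $\approx_{\phi_1}$ refines $\approx_{\phi_2}$. You handle the quantification over monitors explicitly, using an arbitrary correct monitor for ``if'' and the canonical trace-state monitor for ``only if''. This spells out what the paper's ``without loss of generality'' and ``straightforward induction'' leave implicit. It also reduces the problem to finding one witness pair $(s_1,s_2)$, which you can guess on the fly over subformula-valuation DFAs whose states have polynomial-size encodings.

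Your route therefore gives PSPACE rather than EXPSPACE. The paper remarks that its hardness reduction $\langle\Prev\varphi,\Prev(\varphi\land A)\rangle$ does not rely on treelikeness, so together with your argument \textsc{PolicyChange}* for pLTL would be PSPACE-complete. That removes the exponential gap the paper reports between the treelike and general cases. What the paper's enumeration buys is brevity and an explicit state map for its chosen monitor.

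Some smaller remarks:
\begin{itemize}
\item The ``main obstacle'' you flag is not one. Correctness of the valuation DFA follows directly from the recurrences, with a distinguished initial state for the empty trace: $\Prev\psi$ holds at $i$ iff $i>0$ and $\psi$ holds at $i-1$; $\psi_1\Since\psi_2$ holds at $i$ iff $\psi_2$ holds at $i$, or $\psi_1$ holds at $i$, $i>0$, and $\psi_1\Since\psi_2$ holds at $i-1$. The doubly exponential fallback is therefore unnecessary.
\item The state equivalence you test in $D_i$ must quantify over nonempty suffixes only, matching $\Netraces$ in your definition of $\approx_\phi$. Ordinary language equivalence also compares acceptance of the states themselves, so it is strictly finer. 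It would, for example, wrongly return $0$ on $\langle P,\True\rangle$.
\item The padded synchronous run is not needed: guessing $s_1$ and then $s_2$ separately suffices, since only the reached state pairs matter.
\end{itemize}
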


By translating pMTL to pLTL, we obtain:
\begin{theorem}
  \textsc{PolicyChange*} for pMTL with discrete-time semantics is in 2EXPSPACE.
\end{theorem}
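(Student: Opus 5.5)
The plan is to reduce discrete-time pMTL to pLTL and then invoke Theorem~\ref{thm:pcstar pltl}. A pMTL formula over discrete time, with interval bounds in binary, can be rewritten as an equivalent pLTL formula whose size is exponential in $|\phi|$. Running the EXPSPACE procedure for \textsc{PolicyChange}* on an input of exponential size then costs doubly exponential space, which gives the 2EXPSPACE bound.

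The first step is to fix the encoding. With discrete time, delays $\tmsdiff_i$ are natural numbers. I would encode a discrete-time trace as a pLTL trace that has one position per time unit. A fresh predicate $\mathsf{ev}$ marks the positions that correspond to actual events, and the non-event "tick" positions carry no predicates from $\Props$. Several events may share a timestamp when the delay is $0$. To handle this, I would represent a delay of $0$ by the absence of an intervening tick. Concretely, a fresh predicate $\mathsf{new}$ marks an event whose delay is positive, and the $\mathsf{new}$-event is placed right after the $\tmsdiff_i - 1$ ticks. Each subformula $\phi$ is then translated to a pLTL formula $\phi^\ast$ that is evaluated at $\mathsf{ev}$-positions. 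For example, $\Prev_I\phi$ becomes "the previous $\mathsf{ev}$-position satisfies $\phi^\ast$, and the number of $\mathsf{new}$/tick steps in between lies in $I$". Bounded intervals are expanded into nested $\Prev$ chains of length at most $\sup I$. For unbounded $[a,\infty)$, I would expand $a$ steps and then an unbounded $\Since$. This gives $|\phi^\ast| = O(2^{|\phi|}\cdot|\phi|)$. It is essentially the ``straightforward translation'' already mentioned in Section~\ref{sec:discrete}.

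The second step is to check that policy change is preserved. The map $e$ from discrete-time traces to encoded pLTL traces is injective. Its image $E$ is a pLTL-definable set: a trace is in $E$ iff it is empty or ends at an $\mathsf{ev}$-position, and tick positions carry no other predicates. Bisimilarity for a monitor of $\phi$ on discrete traces corresponds to Myhill--Nerode-style equivalence: $s\bisim s'$ iff for every extension $s_2$, $s s_2\in\Sem{\phi}\Leftrightarrow s' s_2\in\Sem{\phi}$. By the same argument as Lemma~\ref{lemma:pc:iff}, \textsc{PolicyChange}* holds iff this equivalence for $\phi_1$ refines the one for $\phi_2$. I would show that $s,s'$ are equivalent for $\phi$ iff $e(s),e(s')$ are equivalent for the pLTL policy $\psi_\phi := \mathsf{ev}\land\phi^\ast\land(\text{trace well-formed})$.

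The main obstacle is that encoded prefixes may end mid-delay, i.e.\ at a tick. Extensions of $e(s)$ in pLTL can also be arbitrary, including ill-formed ones. Ill-formed extensions are rejected for both traces, so they cause no trouble. Encodings of legitimate extensions correspond bijectively to discrete extensions, since concatenation commutes with $e$ when each event is encoded as its ticks followed by the event. Prefixes ending at ticks are not images of discrete traces, so they only add classes, and the refinement condition only has to be checked on images.

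I would therefore apply Theorem~\ref{thm:pcstar pltl}'s procedure in a form restricted to prefixes in $E$. To do this, I would inspect its proof, presumably a HyperpLTL $\exists^*\forall^*$ encoding, and conjoin the pLTL-definable constraint "the prefix lies in $E$". This changes the formula size only polynomially. Its EXPSPACE bound in the input size then yields space $2^{2^{O(|\phi_1|+|\phi_2|)}}$, i.e.\ 2EXPSPACE.
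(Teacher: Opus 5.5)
\textbf{Where you match the paper, and what you add.} Your top-level route is the paper's. Its entire argument is the remark that discrete-time pMTL translates to pLTL with an exponential blow-up, followed by an appeal to Theorem~\ref{thm:pcstar pltl}. You go further and correctly notice that, with your encoding, \textsc{PolicyChange}* is not preserved on the nose: the pLTL-level Nerode refinement ranges over \emph{all} pLTL prefixes, including those ending at a tick. This matters. Take $\phi_1=\Prev\True$ and $\phi_2=\Prev_{[2,2]}\True$. The discrete answer is 1, since both formulae have exactly the classes $\{\text{empty}\}$ and $\{\text{non-empty}\}$. Yet for non-empty $s$, the prefixes $e(s)\cdot\mathsf{tick}$ and $e(s)\cdot\mathsf{tick}\cdot\mathsf{tick}$ are equivalent for $\psi_{\phi_1}$. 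They are separated for $\psi_{\phi_2}$ by a single $\mathsf{new}$-event (delay $2$ versus $3$). So Theorem~\ref{thm:pcstar pltl} cannot be used as a black box, and the restriction to $E$ carries the whole argument.

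\textbf{The gap.} That $E$-restricted procedure is exactly what your proposal does not supply. The mechanism you sketch also does not fit the result you would be modifying. Theorem~\ref{thm:pcstar pltl} is not proved by an $\exists^*\forall^*$ HyperpLTL encoding. Its proof builds automata with $O(2^{|\phi|})$ states and enumerates maps $K:Q^{\phi_1}\to Q^{\phi_2}$ satisfying $K(q_0^{\phi_1})=q_0^{\phi_2}$ and $\Step^{\phi_2}(K(q),x)=K(\Step^{\phi_1}(q,x))$ for all letters $x$. There is no formula to which ``the prefix lies in $E$'' could be conjoined. Restricting that check to $E$ would mean imposing the conditions only on states reachable by prefixes in $E$, and along whole encoded events rather than single letters. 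That is a new argument, not a conjunct.

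\textbf{How to close it.} Take the deterministic past-time monitors for $\psi_{\phi_1}$ and $\psi_{\phi_2}$, whose state is the truth values of all subformulae at the last position, giving $2^{O(|\psi|)}$ states. Form their product with a monitor for $E$. Restricted refinement fails iff there are two product states, both reachable by prefixes in $E$, whose $\psi_{\phi_1}$-components are bisimilar while their $\psi_{\phi_2}$-components are not. Reachability and (non-)bisimilarity are each decidable by on-the-fly search in space polynomial in $|\psi_{\phi_1}|+|\psi_{\phi_2}|=2^{O(|\phi_1|+|\phi_2|)}$. That is well within 2EXPSPACE. With this step added, your argument goes through, and it is more careful than the paper's one-line justification.
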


Whether this problem is decidable for real-time semantics remains open. Note that the lower bounds proven in the
treelike case still hold, since our reductions to pLTL or pMTL satisfiability do not rely on the assumption that
the monitor is treelike. The tight bounds we proved in the treelike case are one exponent lower than our preliminary
unrestricted counterparts. The treelike decision procedure effectively takes advantage of the monitors' structure
to go beyond a naïve enumeration of possible state mappings.

\section{Conclusion}\label{sec:concl}

In this paper, we have introduced the \emph{policy change} problem that arises in the context of
online monitoring of changing specifications, and studied it for \emph{treelike} monitors. We have shown that over real-time semantics our problem has tight non-primitive recursive bounds, while over discrete-time semantics it is EXPSPACE-complete. 
%
While we have some preliminary results (cf. \Cref{sec:general}), as part of future work, we would like to identify the complexity of the more general problem, even under real-time semantics. 


\appendix
\counterwithin{lemma}{section}
\counterwithin{definition}{section}

\newtheorem{innercustomlemma}{Lemma}
\newenvironment{customlemma}[1]
{\renewcommand\theinnercustomlemma{#1}\innercustomlemma}
{\endinnercustomlemma}
\newtheorem{innercustomtheorem}{Theorem}
\newenvironment{customtheorem}[1]
{\renewcommand\theinnercustomtheorem{#1}\innercustomtheorem}
{\endinnercustomtheorem}
\newtheorem{innercustomcorollary}{Corollary}
\newenvironment{customcorollary}[1]
{\renewcommand\theinnercustomcorollary{#1}\innercustomcorollary}
{\endinnercustomcorollary}

\newcommand*{\verteq}{\raisebox{-2pt}{\rotatebox{90}{$=\,$}}}

\section{Detailed Proofs}
\label{apx:proofs}

\subsection{Policy Change as Hyperlogic Entailment}

We note some straightforward facts:
\begin{proposition}\label{prop:monitor:simps}
  \begin{compactenum}[(1)]
  \item $\forall q \in \States^\phi.\;\forall s,s' \in \Fintraces.\;\Step^\phi(\Step^\phi(q,s),s') = \Step^\phi(q,s \cdot s')$.
  \item $\forall q \in \States^\phi.\;\forall s \in \Fintraces.\;\forall s' \in \Netraces.\;\Out^\phi(\Step^\phi(q,s),s') = \Out^\phi(q,s \cdot s')$.
  \end{compactenum}
\end{proposition}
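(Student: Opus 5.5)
Both parts follow by induction on the length of $s'$ (resp. $s$), using only the canonical lifting of $\Step^\phi$ and $\Out^\phi$ to traces. We fix that lifting explicitly: $\Step^\phi(q,\varepsilon) = q$ and $\Step^\phi(q, s\cdot x) = \Step^\phi(\Step^\phi(q,s),x)$ for an event $x \in 2^\Props \times \Re$. For non-empty traces, $\Out^\phi(q, s\cdot x) = \Out^\phi(\Step^\phi(q,s),x)$, i.e., the output produced on the last event from the state reached just before it. Equivalently, $\Step^\phi$ on traces is the left fold of the single-step function.

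For (1), we proceed by induction on $s'$. If $s' = \varepsilon$, both sides equal $\Step^\phi(q,s)$, since $s\cdot\varepsilon = s$. If $s' = s''\cdot x$, then
\[
\Step^\phi(\Step^\phi(q,s),s''\cdot x) = \Step^\phi(\Step^\phi(\Step^\phi(q,s),s''),x).
\]
By the induction hypothesis this equals $\Step^\phi(\Step^\phi(q,s\cdot s''),x)$. By the defining equation and associativity of concatenation, that is $\Step^\phi(q,(s\cdot s'')\cdot x) = \Step^\phi(q,s\cdot s')$.

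For (2), we write the non-empty $s'$ as $s''\cdot x$. Then
\[
\Out^\phi(\Step^\phi(q,s),s''\cdot x) = \Out^\phi(\Step^\phi(\Step^\phi(q,s),s''),x),
\]
which by (1) equals $\Out^\phi(\Step^\phi(q,s\cdot s''),x)$. Since $s\cdot s' = (s\cdot s'')\cdot x$ is non-empty, this is by definition $\Out^\phi(q,s\cdot s')$.

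No real obstacle is expected. The only care needed is to fix the lifting convention precisely, and for (2) to use that $s'$ is non-empty, so that the last-output semantics applies on both sides.
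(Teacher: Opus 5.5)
Your proof is correct: with the left-fold lifting, part (1) follows by induction on $s'$, and part (2) follows by splitting off the last event of $s'$ and applying (1). The left fold is the right reading here, because traces carry delays and concatenation is therefore plain sequence concatenation. The paper gives no proof, calling these ``straightforward facts'', and your argument is the routine induction it leaves implicit (the ``resp.\ $s$'' in your opening line is unnecessary, since (2) needs no separate induction).
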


\begin{customlemma}{\ref{lemma:deepbisim:indep}}
  For every pair $T^\phi,U^\phi$ of monitor trees for $\phi$, $s \deepbisim_{T^\phi} s'$ iff $s \deepbisim_{U^\phi} s'$.
\end{customlemma}
\begin{proof}
  There are four states to consider:
  $t = \Step^{T^\phi}(\Init^{T^\phi},s)$,
  $t' = \Step^{T^\phi}(\Init^{T^\phi},s')$,
  $u = \Step^{U^\phi}(\Init^{U^\phi},s)$, and
  $u' = \Step^{U^\phi}(\Init^{U^\phi},s')$.
  Since the formula is the same for $T^\phi$ and $U^\phi$, all of them are isomorphic as rooted trees.
  Therefore, it suffices to fix an arbitrary vertex $v$ in the syntax tree of $\phi$ and consider the subtrees of states $t(v)$, $t'(v)$, $u(v)$, $u'(v)$ in each of the four states, respectively.
  Specifically, we have to show that $t(v) \bisim_{T^\phi(v)} t'(v)$ iff $u(v) \bisim_{U^\phi(v)} u'(v)$, where $T^\phi(v)$ and $U^\phi(v)$ are the monitors for the subformula corresponding to $v$. Equivalently, we must show that for all $s'' \in \Netraces$, $\Out^{T^\phi(v)}(t(v),s'') = \Out^{T^\phi(v)}(t'(v),s'')$ iff $\Out^{U^\phi(v)}(u(v),s'') = \Out^{U^\phi(v)}(u'(v),s'')$.

  It thus suffices to show that for all $s'' \in \Netraces$, we have 
  \[
    \begin{array}{ccc}
      \Out^{T^\phi(v)}(t(v),s'') & = & \Out^{T^\phi(v)}(t'(v),s'') \\
      \verteq{\scriptstyle (*)} & & \verteq{\scriptstyle (**)} \\
      \Out^{U^\phi(v)}(u(v),s'') & = & \Out^{U^\phi(v)}(u'(v),s'')
    \end{array}
  \]
  where $(*)$ is justified because
  \[
    \Out^{T^\phi(v)}(t(v),s'') = \Out^{T^\phi(v)}(\Init^{T^\phi(v)},s \cdot s'') =
      \Out^{U^\phi(v)}(\Init^{U^\phi(v)},s \cdot s'') = \Out^{U^\phi(v)}(u(v),s'')
  \]
  using Proposition~\ref{prop:monitor:simps} and the fact that $T^\phi(v)$ and $U^\phi(v)$ implement the same operator.
  Equation $(**)$ is analogous.
\end{proof}


\begin{lemma}\label{lemma:deepbisim:conv}
  Let $\Tuple{\Monitor^\phi,T^\phi,f^\phi}$ be a treelike monitor.
  Then $s \deepbisim_{\phi} s'$ iff \[\Step^{\Monitor^{\phi}}(\Init^{\Monitor^{\phi}},s) \deepbisim_{f^{\phi}}
    \Step^{\Monitor^{\phi}}(\Init^{\Monitor^{\phi}},s').\]
\end{lemma}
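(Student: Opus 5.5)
The plan is to unfold both sides of the equivalence and connect them through the homomorphism $f^\phi$. By definition, $s \deepbisim_\phi s'$ means $s \deepbisim_{T^\phi} s'$ for any monitor tree for $\phi$. By Lemma~\ref{lemma:deepbisim:indep} we may pick the specific tree $T^\phi$ from the given treelike monitor $\Tuple{\Monitor^\phi,T^\phi,f^\phi}$. Then
\[
  s \deepbisim_\phi s' \iff \Step^{T^\phi}(\Init^{T^\phi},s) \deepbisim_{T^\phi} \Step^{T^\phi}(\Init^{T^\phi},s').
\]
The right-hand side of the lemma, by the definition of $\deepbisim_{f^\phi}$, means
\[
  f^\phi\bigl(\Step^{\phi}(\Init^{\phi},s)\bigr) \deepbisim_{T^\phi} f^\phi\bigl(\Step^{\phi}(\Init^{\phi},s')\bigr).
\]
So it suffices to show that $f^\phi(\Step^\phi(\Init^\phi,s)) = \Step^{T^\phi}(\Init^{T^\phi},s)$ for every finite trace $s$.

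I would prove this identity by induction on the length of $s$. In the base case $s$ is empty, and the identity is exactly the first homomorphism condition $f^\phi(\Init^\phi) = \Init^{T^\phi}$. In the inductive step, write $s = s_0 \cdot x$ with $x \in 2^\Props \times \Re$. Proposition~\ref{prop:monitor:simps}(1), applied to both monitors, gives $\Step^\phi(\Init^\phi,s) = \Step^\phi(\Step^\phi(\Init^\phi,s_0),x)$, and likewise for $T^\phi$. The second homomorphism condition then yields
\[
  f^\phi\bigl(\Step^\phi(\Step^\phi(\Init^\phi,s_0),x)\bigr) = \Step^{T^\phi}\bigl(f^\phi(\Step^\phi(\Init^\phi,s_0)),x\bigr),
\]
and the induction hypothesis replaces the inner term by $\Step^{T^\phi}(\Init^{T^\phi},s_0)$, which closes the step. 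Substituting the identity into the two displayed characterizations proves both directions at once.

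There is no real obstacle here. The one point needing care is that $\deepbisim_\phi$ is defined independently of the chosen tree, so the appeal to Lemma~\ref{lemma:deepbisim:indep} is what justifies using the particular $T^\phi$ paired with $f^\phi$. Apart from that, the lemma is a routine lifting of the homomorphism condition from single events to traces.
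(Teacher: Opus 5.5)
Your proposal is correct and takes the same route as the paper: read $s \deepbisim_\phi s'$ at the tree $T^\phi$ of the given treelike monitor, then move to $\deepbisim_{f^\phi}$ via the identity $f^\phi(\Step^\phi(\Init^\phi,s)) = \Step^{T^\phi}(\Init^{T^\phi},s)$. The paper simply cites Definition~\ref{def:treelike} for that identity and uses Lemma~\ref{lemma:deepbisim:indep} implicitly, whereas you prove the identity by induction on the length of $s$ and state the appeal to Lemma~\ref{lemma:deepbisim:indep} explicitly.
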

\begin{proof}
  \begin{align*}
    s \deepbisim_{\phi} s' \iff
    & \Step^{T^{\phi}}(\Init^{T^{\phi}},s) \deepbisim_{T^{\phi}} \Step^{T^{\phi}}(\Init^{T^{\phi}},s') \\
    \iff & \Step^{\Monitor^{\phi}}(\Init^{\Monitor^{\phi}},s) \deepbisim_{f^{\phi}} \Step^{\Monitor^{\phi}}(\Init^{\Monitor^{\phi}},s')
  \end{align*}
  where the last step is justified
  because $f^{\phi_1}(\Step^{\Monitor^{\phi_1}}(\Init^{\Monitor^{\phi_1}},s)) = \Step^{T^{\phi_1}}(\Init^{T^{\phi_1}},s)$ using Definition~\ref{def:treelike}, and similarly for $s'$.
\end{proof}

\begin{customlemma}{\ref{lemma:pc:iff}}
  There exists, for every generic treelike monitor $\Tuple{\Monitor,T,f}$, a function implementing policy change from $\phi_1$ to $\phi_2$ if and only if $[s]_{{\phi_1}} \subseteq [s]_{{\phi_2}}$ for all finite traces $s$.
\end{customlemma}
\begin{proof}
  \fbox{$\Longrightarrow$}
  Choose an arbitrary generic treelike monitor $\Tuple{\Monitor,T,f}$.
  (Such a monitor always exists. E.g., one can construct the monitor trees by ``implementing'' the transducers using trace equivalence classes for the states, and then choose $f^\phi$ to be the identity function.)
  Let $\Change$ be its policy change function.
  Consider arbitrary traces $s$ and $s'$ such that $s' \in [s]_{\phi_1}$, i.e., $s \deepbisim_{\phi_1} s'$.
  Thus
  \begin{equation}\label{eq:pc:iff:a}
    \Step^{\Monitor^{\phi_1}}(\Init^{\Monitor^{\phi_1}},s) \deepbisim_{f^{\phi_1}}
    \Step^{\Monitor^{\phi_1}}(\Init^{\Monitor^{\phi_1}},s')
  \end{equation}
  by Lemma~\ref{lemma:deepbisim:conv}.
  Applying policy change (Definition~\ref{def:pc}) twice to \eqref{eq:pc:iff:a},
  we obtain
  \[
    \Step^{\Monitor^{\phi_2}}(\Init^{\Monitor^{\phi_2}},s) \deepbisim_{f^{\phi_2}}
    \Change(\Step^{\Monitor^{\phi_1}}(\Init^{\Monitor^{\phi_1}},s)) \deepbisim_{f^{\phi_2}}
    \Step^{\Monitor^{\phi_2}}(\Init^{\Monitor^{\phi_2}},s').
  \]
  Using Lemma~\ref{lemma:deepbisim:conv} again, it follows that $s \deepbisim_{\phi_2} s'$ and hence $s' \in [s]_{\phi_2}$.

  \fbox{$\Longleftarrow$}
  Fix an arbitrary generic treelike monitor $\Tuple{\Monitor,T,f}$.
  We define $\Change(q)$ nondeterministically as the state $\Step^{\Monitor^{\phi_2}}(\Init^{\Monitor^{\phi_2}},s(q))$ for some finite trace $s(q)$ such that $q \deepbisim_{f^{\phi_1}} \Step^{\phi_1}(\Init^{\phi_1},s(q))$.
  The new state can be chosen arbitrarily if there is no such $s(q)$.

  Consider an arbitrary state $q \in \States^{\Monitor^{\phi_1}}$ and a trace $s$ satisfying $q \deepbisim_{f^{\phi_1}} \Step^{\phi_1}(\Init^{\phi_1},s)$.
  Thus
  \[
    \Step^{\phi_1}(\Init^{\phi_1},s) \deepbisim_{f^{\phi_1}}
    q \deepbisim_{f^{\phi_1}} \Step^{\phi_1}(\Init^{\phi_1},s(q)).
  \]
  Using Lemma~\ref{lemma:deepbisim:conv}, it follows that $s \deepbisim_{\phi_1} s(q)$ and hence $s \deepbisim_{\phi_2} s(q)$ using the assumption $[s]_{{\phi_1}} \subseteq [s]_{{\phi_2}}$.
  Therefore
  \[
    \Change(q) = \Step^{\Monitor^{\phi_2}}(\Init^{\Monitor^{\phi_2}},s(q)) \deepbisim_{f^{\phi_2}} \Step^{\Monitor^{\phi_2}}(\Init^{\Monitor^{\phi_2}},s)
  \]
  by Lemma~\ref{lemma:deepbisim:conv}, which proves that $\Change$ implements policy change correctly.
\end{proof}

\begin{customlemma}{\ref{lemma:build:extract}}
  Suppose that for all finite traces $s$, non-empty sets of aligned, finite traces $S$, and formulae $\phi$:
  \begin{enumerate}[(1)]
    \item $S \subseteq [s]_{\phi}$ implies $S \hypermodels \Extract{\phi}$.
    \item $s \in S$ and $S \hypermodels \Build{\phi}$ imply $S \subseteq [s]_{\phi}$.
    \item $s \deepbisim_\phi s^{+d}$ for every $d \in \Re$, where $s^{+d}$ denotes the copy of $s$ where the first delay has been incremented by $d$.
  \end{enumerate}
  Then the validity of $\Extract{\phi_1} \Imp \Build{\phi_2}$ implies $[s]_{\phi_1} \subseteq [s]_{\phi_2}$ for all $s$.
  Moreover, if $\Extract{\phi} \equiv \Build{\phi} \equiv: \BE{\phi}$ for both $\phi \in \{\phi_1,\phi_2\}$, the converse holds, too.
\end{customlemma}
\begin{proof}
  \fbox{$\Longrightarrow$}
  Assume that the hyperformula $\Extract{\phi_1} \Imp \Build{\phi_2}$ is valid and fix an arbitrary finite trace $s$.
  Denote the last absolute timestamp of $s$ by $\tms(s)$.
  For every $x \geq \tms(s)$, let $[s]_{\phi_1}^x$ be the subset of $[s]_{\phi_1}$ that contains all traces whose last absolute timestamp is $x$, such that $[s]_{\phi_1}^x$ is aligned.
  By assumption (3), $[s]_{\phi_1}^x$ contains $s^{+(x-\tms(s))}$ and is thus non-empty.
  We have $[s]_{\phi_1}^x \hypermodels \Extract{\phi_1}$ by assumption (1), setting $S = [s]_{\phi_1}^x$.
  It follows that $[s]_{\phi_1}^x \hypermodels \Build{\phi_2}$ by modus ponens.
  Hence we obtain $[s]_{\phi_1}^x \subseteq [s^{+(x-\tms(s))}]_{\phi_2} = [s]_{\phi_2}$ by assumption (2) instantiated with $s^{+(x-\tms(s))}$ and $[s]_{\phi_1}^x$, and assumption (3).
  It remains to show that every trace $u$ in $[s]_{\phi_1}$ whose last timestamp $\tms(u)$ is less than $\tms(s)$ is also contained in $[s]_{\phi_2}$:
  \begin{align*}
    \hspace{-5pt}u \in [s]_{\phi_1} \stackrel{(3)}{\implies} u^{+\tms(s)} \in [s]_{\phi_1} \implies u^{+\tms(s)} \in [s]_{\phi_1}^{\tms(u)+\tms(s)} \implies{}   u^{+\tms(s)} \in [s]_{\phi_2} \stackrel{(3)}{\implies} u \in [s]_{\phi_2}.
  \end{align*}

  \fbox{$\Longleftarrow$}
  Suppose that $[s]_{\phi_1} \subseteq [s]_{\phi_2}$ for all $s$ and fix an arbitrary, non-empty set of aligned, finite traces $S$ satisfying $S \hypermodels \BE{\phi_1} \equiv \Build{\phi_1}$.
  Let $s$ be some element of $S$.
  By assumption (2), we have $S \subseteq [s]_{\phi_1}$.
  It follows that $S \subseteq [s]_{\phi_2}$ and hence $S \hypermodels \BE{\phi_2} \equiv \Extract{\phi_2}$.
  Because $S$ was arbitrary, we have shown that $\BE{\phi_1} \Imp \BE{\phi_2}$ is valid.
\end{proof}

\subsection{Information Formulae for pMTL}
\label{sec:upper proofs}

\begin{customlemma}{\ref{lemma:pmtl to ptptl}}
  For all $\phi$, $s$, $i$, and $v$,
  $\Tuple{s,i,v} \models \Embed{\phi}$ iff $\Tuple{s,i} \models \phi$.
\end{customlemma}
\begin{proof}
  By induction on the structure of $\phi$.
  The cases follow immediately from the definitions.
\end{proof}

\newcommand*{\Equiv}{\mathit{Ind}}
\newcommand*{\Eval}{\mathit{eval}}

Before we prove Lemma~\ref{lemma:information formulae}, we provide an alternative characterization of the equivalence relation $\deepbisim_\phi$.
For every pMTL formula $\phi_1$, define $\Eval(\phi_1;s)$ to be the trace derived from $s \in \Fintraces$ as follows.
The number of events and their timestamps are the same as in $s$, whereas the event set $\ev_i$ contains $X$ iff $\Tuple{s,i} \models \phi_1$, for all $i < |s|$.
The trace $\Eval(\phi_1,\phi_2;s)$ is defined similarly, except that the event set $\ev_i$ additionally contains $Y$ iff $\Tuple{s,i} \models \phi_2$.
We define
\begin{align*}
  s \bisim_\phi s' \iff \left(\forall s'' \in \Netraces.\; s \cdot s'' \in \Sem{\phi} \iff s' \cdot s'' \in \Sem{\phi}\right).
\end{align*}
Finally, the relation $\Equiv(\phi;s,s')$ is defined by recursion on $\phi$.
\begin{align*}
  \Equiv(\True;s,s') &\iff \Equiv(P;s,s') \iff \mathit{true} \\
  \Equiv(\lnot \phi;s,s') &\iff \Equiv(\phi;s,s') \\
  \Equiv(\phi_1 \lor \phi_2;s,s') &\iff \Equiv(\phi_1;s,s') \land \Equiv(\phi_2;s,s') \\
  \Equiv(\Prev_I \phi;s,s') &\iff \Eval(\phi;s) \bisim_{\Prev_I X} \Eval(\phi;s') \land \Equiv(\phi;s,s') \\
  \Equiv(\phi_1 \Since_I \phi_2;s,s') &\iff \Eval(\phi_1,\phi_2;s) \bisim_{X \Since_I Y} \Eval(\phi_1,\phi_2;s') \land{}\\[-\jot]
  &\phantom{{}\iff{}} \quad\Equiv(\phi_1;s,s') \land \Equiv(\phi_2;s,s')
\end{align*}

\begin{lemma}\label{lemma:deepbisim:alt}
  $s \deepbisim_\phi s'$ iff $\Equiv(\phi;s,s')$.
\end{lemma}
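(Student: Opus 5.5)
We argue by structural induction on $\phi$, after reducing deep bisimilarity to a vertexwise statement. By Lemma~\ref{lemma:deepbisim:indep} we may fix any monitor tree $T^\phi$. By definition, $s \deepbisim_\phi s'$ holds iff for every vertex $v$ of the syntax tree of $\phi$ the states $\Step^{T^\phi}(\Init^{T^\phi},s)(v)$ and $\Step^{T^\phi}(\Init^{T^\phi},s')(v)$ are bisimilar under $\bisim_{T^\phi(v)}$. Since the vertices of $\phi$ are the root together with the vertices of the immediate subformulae, this relation splits as \emph{root condition} $\land$ (deep bisimilarity for each immediate subformula). This matches the recursive shape of $\Equiv$. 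It remains to check that the root condition is exactly the extra conjunct in each case of $\Equiv$.

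\textbf{Key step: the root condition.} Let $v$ be the root and let $\psi$ be the subformula at $v$. The monitor $T^\phi(v)$ is a correct monitor for $\psi$. By Proposition~\ref{prop:monitor:simps}, bisimilarity of the reached states amounts to
\[
\forall s''\in\Netraces.\ \Out(\Init,s\cdot s'') = \Out(\Init,s'\cdot s''),
\]
which by correctness is $s \bisim_\psi s'$. The remaining question is whether $s \bisim_\psi s'$ coincides with the conjunct listed in $\Equiv$.

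We first show that the root state is determined by the child outputs. The root transducer only receives, at each time-point, the delay together with the truth values of its children. Hence the state it reaches on $s$ equals the state the transducer for the operator $\Prev_I X$ (resp.\ $X \Since_I Y$) reaches on $\Eval(\phi';s)$ (resp.\ $\Eval(\phi_1,\phi_2;s)$).

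The subtle point is that bisimilarity of the root transducer state must be taken over \emph{arbitrary} future child-output sequences. The definition of $\bisim_{T^\phi(v)}$ instead quantifies over future traces $s''$ of the original predicates, which induce only those child-output sequences realisable as $\Eval(\cdot;s\cdot s'')$.

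\textbf{Handling the subtlety.} I would read the vertex monitor $T^\phi(v)$ as the temporal tester for the operator itself, a transducer over $\Props\cup\{X,Y\}$ as in Section~\ref{sec:treelike}, with the substate $q(v)$ being the tester's own state. Under this reading, bisimilarity of the root substates is literally
\[
\Eval(\phi_1,\phi_2;s) \bisim_{X\Since_I Y} \Eval(\phi_1,\phi_2;s'),
\]
and similarly for $\Prev_I X$. The tester is a correct monitor for that operator formula, so this follows again from Proposition~\ref{prop:monitor:simps}. Fixing this interpretation consistently with Lemma~\ref{lemma:deepbisim:indep} is the main obstacle, and I would state it explicitly.

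\textbf{The remaining cases.} For $\True$, $P$ and the Boolean connectives, the testers are stateless up to bisimilarity: their output depends only on the current input. Any two states are therefore bisimilar, so the root condition is trivially true. This gives the clauses $\Equiv(\True)$, $\Equiv(P)$, $\Equiv(\lnot\phi)$ and $\Equiv(\phi_1\lor\phi_2)$. Combining the root condition with the induction hypotheses for the subtrees yields the equivalence in every case.
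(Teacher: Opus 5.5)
Your proof is correct and is essentially the paper's argument. It uses structural induction, splitting $s \deepbisim_\phi s'$ into bisimilarity of the root tester's states reached on the $\mathit{eval}$ traces plus deep bisimilarity for the immediate subformulae; the atomic and Boolean cases are trivial, and the $\Prev_I$ and $\Since_I$ cases reduce to the tester's correctness via Proposition~\ref{prop:monitor:simps}.

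The paper adopts your tester-level reading of the vertex condition without comment: it writes $\bisim_t$ for the root transducer $t$ run on $\mathit{eval}(\phi';s)$. Making this reading explicit is warranted, because under the literal subtree-of-states reading the equivalence fails. For $(\lnot\True)\Since P$, the traces $s=(\{P\},0)$ and $s'=(\emptyset,0)$ are semantically bisimilar at every vertex. Yet appending $(\{X\},0)$ separates $\mathit{eval}(\lnot\True,P;s)$ from $\mathit{eval}(\lnot\True,P;s')$ under $X\Since Y$.
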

\begin{proof}
  Note that $s \deepbisim_\phi s'$ is equivalent to $\Step^{T^{\phi}}(\Init^{T^{\phi}},s) \deepbisim_{T^{\phi}} \Step^{T^{\phi}}(\Init^{T^{\phi}},s')$ for all monitor trees $T^\phi$.
  We proceed by structural induction on $\phi$.
  \begin{itemize}
    \item $\phi \equiv \True$:
      The monitor tree $T^\True$ consists of a single vertex.
      Since the output of the transducer labeling it does not depend on past time-points, $s \deepbisim_\True s'$ is true for all pairs $s,s'$ of traces.
      This matches the definition of $\Equiv(\True;s,s')$.
    \item $\phi \equiv P$ for some predicate $P$: Similar to the previous case.
    \item $\phi \equiv \lnot \phi'$:
      The root of the monitor tree $T^\phi$ is labeled by a transducer $t$ implementing the operator $\lnot X$.
      It has a single child, namely $T^{\phi'}$.
      Now observe that $s \deepbisim_\phi s'$ iff $s \deepbisim_{\phi'} s'$ and
      $\Step^{t}(\Init^{t},\Eval(\phi';s)) \bisim_{t} \Step^{t}(\Init^{t},\Eval(\phi';s'))$.
      By the induction hypothesis, $s \deepbisim_{\phi'} s'$ iff $\Equiv(\phi';s,s')$, whereas 
      $\Step^{t}(\Init^{t},r) \bisim_{t} \Step^{t}(\Init^{t},r')$ holds for all pairs $r,r'$ of traces, as $t$'s output does not depend on past time-points.
    \item $\phi \equiv \phi_1 \lor \phi_2$:
      The root of the monitor tree $T^\phi$ is labeled by a transducer $t$ implementing the operator $X \lor Y$.
      It has two children, namely $T^{\phi_1}$ and $T^{\phi_2}$.
      Now observe that $s \deepbisim_\phi s'$ iff $s \deepbisim_{\phi_1} s'$, $s \deepbisim_{\phi_2} s'$, and
      $\Step^{t}(\Init^{t},\Eval(\phi_1,\phi_2;s)) \bisim_{t} \Step^{t}(\Init^{t},\Eval(\phi_1,\phi_2;s'))$.
      By the induction hypotheses for $\phi_1$ and $\phi_2$, we have
      \[
        s \deepbisim_{\phi_1} s' \land s \deepbisim_{\phi_2} s' \iff \Equiv(\phi_1;s,s') \land \Equiv(\phi_2;s,s'),
      \]
      whereas $\Step^{t}(\Init^{t},r) \bisim_{t} \Step^{t}(\Init^{t},r')$ holds again for all pairs $r,r'$.
    \item $\phi \equiv \Prev_i \phi'$:
      This case is similar to the one for negation, except that we must show that $\Eval(\phi';s) \bisim_{\Prev_I X} \Eval(\phi';s')$ iff \[\Step^{t}(\Init^{t},\Eval(\phi';s)) \bisim_{t} \Step^{t}(\Init^{t},\Eval(\phi';s')).\]
      This follows immediately from the definition of $\bisim_t$ and the fact that $t$ implements $\Prev_I X$.
    \item $\phi \equiv \phi_1 \Since_I \phi_2$:
      This case is similar to the one for disjunction, except that we must show that $\Eval(\phi_1,\phi_2;s) \bisim_{X \Since_I Y} \Eval(\phi_1,\phi_2;s')$ iff \[\Step^{t}(\Init^{t},\Eval(\phi_1,\phi_2;s)) \bisim_{t} \Step^{t}(\Init^{t},\Eval(\phi_1,\phi_2;s')).\]
      This follows immediately from the definition of $\bisim_t$ and the fact that $t$ implements $X \Since_I Y$.
  \end{itemize}
\end{proof}

\begin{lemma}\label{lemma:foo:empty}
  For all traces $s$ and $s'$, $s \bisim_{\Prev_\emptyset X} s'$ and $s \bisim_{X \Since_\emptyset Y} s'$.
\end{lemma}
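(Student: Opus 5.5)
We argue directly from the definition of $\bisim_\psi$ for $\psi \in \{\Prev_\emptyset X,\; X \Since_\emptyset Y\}$. The plan is to show that each of these two formulae is unsatisfiable at every time-point of every trace, so $\Sem{\psi} = \emptyset$. Then, for arbitrary $s,s'$ and any $s'' \in \Netraces$, both $s \cdot s'' \in \Sem{\psi}$ and $s' \cdot s'' \in \Sem{\psi}$ are false. The biconditional in the definition of $s \bisim_\psi s'$ therefore holds trivially, giving the lemma.

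The key step is the unsatisfiability claim, which we check against the semantic clauses. For $\Tuple{r,i} \models \Prev_\emptyset X$ the clause requires $\tau_i - \tau_{i-1} \in \emptyset$, which is impossible. For $\Tuple{r,i} \models X \Since_\emptyset Y$ the clause requires some $j \le i$ with the timestamp difference between $j$ and $i$ lying in $\emptyset$, again impossible. Both arguments are immediate and hold for every trace $r$ and index $i$, and in particular for $r = s\cdot s''$ with $i = \len{s\cdot s''}-1$.

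Since the definition of $\bisim_\psi$ quantifies only over non-empty continuations and evaluates at the last position, no edge cases arise; for instance, an empty $s$ is harmless because $s''$ is non-empty. There is no real obstacle here. The only point needing care is to note that $\bisim_\psi$ is the semantic relation defined just above Lemma~\ref{lemma:deepbisim:alt}, so no monitor implementation is involved.
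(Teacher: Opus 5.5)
Your proposal is correct and follows the paper's own proof: both reduce the lemma to $\Sem{\Prev_\emptyset X} = \Sem{X \Since_\emptyset Y} = \emptyset$, which makes the biconditional in the definition of $\bisim_\psi$ hold vacuously. The paper simply calls this ``trivial''; your check of the semantic clauses fills in the one-line justification.
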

\begin{proof}
  Trivial since $\Sem{\Prev_\emptyset X} = \Sem{X \Since_\emptyset Y} = \emptyset$.
\end{proof}

\begin{lemma}\label{lemma:foo:prev}
  Let $I \neq \emptyset$.
  For all non-empty traces $s$ and $s'$ that end with the same timestamp $\tms$, $s \bisim_{\Prev_I X} s'$ iff $s \in \Sem{X} \iff s' \in \Sem{X}$.
\end{lemma}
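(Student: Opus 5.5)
The plan is to prove both directions directly from the definition of $\bisim_{\Prev_I X}$, which quantifies over all non-empty continuations $s''$. The key observation is how the truth of $\Prev_I X$ at the last position of $s\cdot s''$ depends on $s$. It depends only on the last two positions, and the delay between them is determined by $s''$ alone once $s''$ is non-empty.

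\emph{($\Leftarrow$).} Suppose $s \in \Sem{X} \iff s' \in \Sem{X}$, and fix $s'' \in \Netraces$. I will split on the length of $s''$.

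\begin{compactenum}[(a)]
  \item If $\len{s''} \geq 2$, the last two time-points of $s\cdot s''$ and of $s'\cdot s''$ both lie inside $s''$. Their relative delay is the last delay of $s''$, and their $X$-labels are the same in both traces. Hence the two traces agree on $\Prev_I X$ at the last position.
  \item If $\len{s''} = 1$, the last position of $s \cdot s''$ is the single event of $s''$ and the previous position is the last event of $s$. The relevant difference $\tms_i - \tms_{i-1}$ is the delay $\tmsdiff_0$ of $s''$, which is the same in both cases; here I use the definition via delays, so the shared end timestamp $\tms$ is not even needed. Satisfaction therefore reduces to ``$\tmsdiff_0 \in I$ and the last event of $s$ satisfies $X$''. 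By hypothesis this agrees with the corresponding statement for $s'$.
\end{compactenum}

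Since $s,s'$ are non-empty, $i>0$ holds in both traces, so the case $i=0$ never arises.

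\emph{($\Rightarrow$).} Suppose $s \bisim_{\Prev_I X} s'$. Since $I \neq \emptyset$, I pick some $d \in I$, which is possible because intervals have natural-number endpoints and non-empty intervals contain a non-negative real. Let $s''$ be the one-event trace $(\emptyset, d)$. Then $s\cdot s'' \in \Sem{\Prev_I X}$ iff $\Tuple{s,\len{s}-1} \models X$, i.e.\ iff $s \in \Sem{X}$. The same holds for $s'$, and bisimilarity yields the equivalence.

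I expect no real obstacle. The only care needed is bookkeeping: checking that the delay of the appended event is exactly $\tms_i-\tms_{i-1}$, and noting that the hypothesis ``same end timestamp'' is harmless but not needed in the delay-based formulation. It is presumably stated for uniformity with the $\Since$ case.
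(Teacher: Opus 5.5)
Your proof is correct and follows the paper's argument: for ($\Rightarrow$) you append a single event whose delay lies in $I$, and for ($\Leftarrow$) you split on whether $\len{s''} > 1$ or $\len{s''} = 1$. You are also right that the shared end timestamp is not needed in the delay-based formulation. The paper writes the same steps with absolute timestamps ($\tms + d$ and $\tms' - \tms$), which is the only place that hypothesis enters.
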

\begin{proof}
  \fbox{$\Longrightarrow$}
  Suppose that $\forall s'' \in \Netraces.\; s \cdot s'' \in \Sem{\Prev_I X} \iff s' \cdot s'' \in \Sem{\Prev_I X}$.
  Choose any $d \in I$.
  Instantiate $s''$ such that it consists of a single event $(\emptyset,\tms+d)$.
  This ensures that the timestamp differences between the last and the second-to-last time-point in $s \cdot s''$ and $s' \cdot s''$ are both in $I$.
  Hence
  \[
    s \in \Sem{X} \iff s \cdot s'' \in \Sem{\Prev_I X} \iff s' \cdot s'' \in \Sem{\Prev_I X} \iff s' \in \Sem{X}.
  \]

  \fbox{$\Longleftarrow$}
  Suppose that $s \in \Sem{X} \iff s' \in \Sem{X}$ and fix an arbitrary non-empty trace $s''$.
  If $|s''| > 1$, note that the inclusion of $s \cdot s''$ and $s' \cdot s''$ in $\Sem{\Prev_I X}$ depends only on events in $s''$.
  Hence assume that $|s''| = 1$ and let $\tms'$ be the timestamp of the only event in $s''$.
  If $\tms' - \tms \not\in I$, neither $s \cdot s''$ nor $s' \cdot s''$ is included in $\Sem{\Prev_I X}$.
  If $\tms' - \tms \in I$,
  \[
    s \cdot s'' \in \Sem{\Prev_I X} \iff s \in \Sem{X} \iff s' \in \Sem{X} \iff s' \cdot s'' \in \Sem{\Prev_I X}.
  \]
  Having covered all cases, we have shown the left-hand side.
\end{proof}

\begin{lemma}\label{lemma:foo:since}
  For all non-empty traces $s$ and $s'$ that end with the same timestamp $\tms$, $s \bisim_{X \Since_I Y} s'$ iff, for all $d \in \Re$, $s \in \Sem{X \Since_{I-d} Y} \iff s' \in \Sem{X \Since_{I-d} Y}$.
\end{lemma}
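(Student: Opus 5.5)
The plan mirrors the proof of Lemma~\ref{lemma:foo:prev}: one direction comes from single-event extensions, the other from a case analysis on where the witness of $\Since_I$ lies. A key semantic fact I will use throughout: for a non-empty trace $s$ ending at $\tms$ and a continuation $s''$ whose events all satisfy $X$, with last timestamp $\tms+d$, we have $s\cdot s'' \in \Sem{X \Since_I Y}$ iff either a witness lies inside $s''$, or $s \in \Sem{X \Since_{I-d} Y}$. The second disjunct holds because a witness $j$ in $s$ has $\tms+d-\tms_j \in I$ iff $\tms-\tms_j \in I-d$. (I read the semantics with the sign $\tms_i-\tms_j \in I$.)

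\fbox{$\Longrightarrow$} Fix $d \in \Re$. Take $s''$ to be the single event $(\{X\}, d)$, i.e.\ with $X$ true, $Y$ false, and absolute timestamp $\tms+d$. Then the only possible witness is in $s$, and all later points, namely the new one, satisfy $X$. By the fact above, $s\cdot s''\in\Sem{X\Since_I Y}$ iff $s\in\Sem{X\Since_{I-d}Y}$, and likewise for $s'$. Bisimilarity then gives the claim directly. When $d=0$ the shifted interval $I-0=I$ is handled uniformly, so no special case is needed.

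\fbox{$\Longleftarrow$} Fix an arbitrary $s''\in\Netraces$ with last absolute timestamp $\tms+d$. Split $s''$ at its last time-point $k$ at which $X$ fails.
\begin{itemize}
\item If such a $k$ exists, a witness in $s$ is impossible for both $s\cdot s''$ and $s'\cdot s''$. Any witness must then lie at a position $\geq k$ within $s''$. Its validity depends only on the events of $s''$ and on absolute timestamps, which coincide for both traces because both prefixes end at $\tms$. Hence membership is identical for the two traces.
\item Otherwise every event of $s''$ satisfies $X$. Membership is then the disjunction of ``a witness inside $s''$'', which is again identical for both traces, and $s\in\Sem{X\Since_{I-d}Y}$ (respectively $s'$). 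The hypothesis makes these last two equivalent.
\end{itemize}

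The main obstacle is stating and verifying the splitting fact rigorously. The fact that timestamps in $s''$ are determined by delays relative to the common endpoint $\tms$ requires care. So does the convention that $I-d$ is empty when no element of $I$ is $\geq d$: in that case no witness in $s$ qualifies, which matches $\Sem{X\Since_\emptyset Y}=\emptyset$ as used in Lemma~\ref{lemma:foo:empty}. I would prove the fact once as an auxiliary claim by unfolding the $\Since$ semantics, and then both directions follow in a few lines.
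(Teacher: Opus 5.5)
Your proposal is correct and follows essentially the same route as the paper. The forward direction extends both traces by a single event satisfying $X$ but not $Y$ at timestamp $\tms+d$. The converse splits on whether some event of $s''$ falsifies $X$; the paper lists the case $s'' \in \Sem{X \Since_I Y}$ separately, where you fold it into the ``witness inside $s''$'' disjunct. Handling every $d$ uniformly, instead of assuming $I-d \neq \emptyset$ without loss of generality as the paper does, is also fine, since both sides are false when $I-d$ is empty.
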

\begin{proof}
  \fbox{$\Longrightarrow$}
  Suppose that $\forall s'' \in \Netraces.\; s \cdot s'' \in \Sem{X \Since_I Y} \iff s' \cdot s'' \in \Sem{X \Since_I Y}$.
  Without loss of generality, fix an arbitrary $d \in \Re$ such that $I - d$ is non-empty.
  Instantiate $s''$ such that it consists of a single event $(\{X\},\tms+d)$.
  It is not difficult to see that
  \[
    s \in \Sem{X \Since_{I-d} Y} \iff s \cdot s'' \in \Sem{X \Since_I Y} \iff s' \cdot s'' \in \Sem{X \Since_I Y} \iff s' \in \Sem{X \Since_{I-d} Y}.
  \]

  \fbox{$\Longleftarrow$}
  Suppose that $s \in \Sem{X \Since_{I-d} Y} \iff s' \in \Sem{X \Since_{I-d} Y}$ for all $d$ and fix an arbitrary non-empty trace $s''$.
  We distinguish several cases:
  \begin{itemize}
    \item $s'' \in \Sem{X \Since_I Y}$, which implies that $s \cdot s'' \in \Sem{X \Since_I Y}$ and $s' \cdot s'' \in \Sem{X \Since_I Y}$.
    \item The previous case does not apply and there is an event in $s''$ that does not satisfy $X$.
      Then $s \cdot s'' \not\in \Sem{X \Since_I Y}$ and $s' \cdot s'' \not\in \Sem{X \Since_I Y}$.
    \item Neither case applies.
      Then $s \cdot s'' \in \Sem{X \Since_I Y}$ iff $s \in \Sem{X \Since_{I-d} Y}$, where $d$ is the difference between the last timestamp of $s''$ and the last timestamp of $s$, and similarly for $s'$.
  \end{itemize}
\end{proof}
\begin{customlemma}{\ref{lemma:information formulae}}
  The HyperpTPTL formula $\BE{\phi}$ fulfills the conditions of Lemma~\ref{lemma:build:extract}.
\end{customlemma}
\begin{proof}
  Recall the conditions:
  \begin{enumerate}[(1)]
    \item $S \subseteq [s]_{\phi}$ implies $S \hypermodels \BE{\phi}$.
    \item $s \in S$ and $S \hypermodels \BE{\phi}$ imply $S \subseteq [s]_{\phi}$.
    \item $s \deepbisim_\phi s^{+d}$ for every $d \in \Nat$, where $s^{+d}$ denotes the copy of $s$ where the first delay has been incremented by $d$.
  \end{enumerate}
  By Lemma~\ref{lemma:deepbisim:alt}, the condition $S \subseteq [s]_\phi$ is equivalent to $\forall s' \in S.\;\Equiv(\phi;s,s')$, and the condition $s \deepbisim_\phi s^{+d}$ is equivalent to $\Equiv(\phi;s,s^{+d})$.
  We proceed by structural induction on $\phi$.
  \begin{itemize}
    \item $\phi \equiv \True$ or $\phi \equiv P$: Trivial, as both $S \hypermodels \BE{\phi}$ and $\Equiv(\phi;s,s')$ hold for all $S$, $s$, and $s'$.
    \item $\phi \equiv \lnot \phi'$: Note that $S \hypermodels \BE{\phi}$ iff $S \hypermodels \BE{\phi'}$ and $\Equiv(\phi;s,s')$ iff $\Equiv(\phi';s,s')$, so this case follows directly from the induction hypothesis.
    \item $\phi \equiv \phi_1 \lor \phi_2$: Similarly, as $S \hypermodels \BE{\phi}$ iff both $S \hypermodels \BE{\phi_1}$ and $S \hypermodels \BE{\phi_2}$, and $\Equiv(\phi;s,s')$ iff both $\Equiv(\phi_1;s,s')$ and $\Equiv(\phi_2;s,s')$.
    \item $\phi \equiv \Prev_I \phi'$:
      \begin{enumerate}[(1)]
        \item Suppose that $\forall s' \in S.\;\Equiv(\phi;s,s')$, i.e., \[\forall s' \in S.\; \Eval(\phi';s) \bisim_{\Prev_I X} \Eval(\phi';s') \land \Equiv(\phi';s,s').\]
          Using the induction hypothesis, it follows that $S \hypermodels \BE{\phi'}$.
          It remains to show that $S \hypermodels \tforall \pi,\pi'.~\{\phi'\}_{\pi} \Imp \{\phi'\}_{\pi'}$ if $I \neq \emptyset$.
          To this end, fix two arbitrary, non-empty traces $r,r'$ in $S$ such that $r \in \Sem{\phi'}$.
          (As traces in $S$ are aligned, $\phi'$ is necessarily evaluated at the last time-point of each trace.)
          We have
          \[
            \Eval(\phi';r) \bisim_{\Prev_I X} \Eval(\phi';s) \bisim_{\Prev_I X} \Eval(\phi';r')
          \]
          because it is an equivalence relation.
          Using Lemma~\ref{lemma:foo:prev}, this implies $\Eval(\phi';r) \in \Sem{X} \iff \Eval(\phi';r') \in \Sem{X}$, i.e., $r \in \Sem{\phi'} \iff r' \in \Sem{\phi'}$.
        \item Suppose that $s \in S$ and $S \hypermodels \BE{\phi}$.
          Hence $S \hypermodels \BE{\phi'}$ by construction of $\BE{\phi}$, and from the induction hypothesis we get that $\forall s' \in S.\;\Equiv(\phi';s,s')$.
          It remains to show that $\forall s' \in S.\;\Eval(\phi';s) \bisim_{\Prev_I X} \Eval(\phi';s')$.
          Fix an arbitrary $s' \in S$.
          Using Lemma~\ref{lemma:foo:empty}, it suffices to consider the case where $I \neq \emptyset$, in which case we know that $S \hypermodels \tforall \pi,\pi'.~\{\phi'\}_{\pi} \Imp \{\phi'\}_{\pi'}$.
          We instantiate the quantifiers twice, once with $s$ and $s'$, and once with $s'$ and $s$, to conclude that $s \in \Sem{\phi'} \iff s' \in \Sem{\phi'}$ and hence $\Eval(\phi';s) \in \Sem{X} \iff \Eval(\phi';s') \in \Sem{X}$.
          Now apply Lemma~\ref{lemma:foo:prev}.
        \item We obtain $\Equiv(\phi';s,s^{+d})$ from the induction hypothesis and it remains to prove that $\Eval(\phi';s) \bisim_{\Prev_I X} \Eval(\phi';s^{+d})$.
          Fix an arbitrary non-empty trace $s''$.
          The difference between any pair of timestamps in $\Eval(\phi';s) \cdot s''$ is the same as in $\Eval(\phi';s^{+d}) \cdot s''$ because concatenation is defined such that the timestamps of $s''$ are interpreted as relative increments.
          Moreover, the events of the two concatenations are pointwise equal.
          Therefore, $\Eval(\phi';s) \cdot s'' \in \Sem{\Prev_I X} \iff \Eval(\phi';s^{+d}) \cdot s'' \in \Sem{\Prev_I X}$ because $\Prev_I$'s semantics depends only on the events and timestamp differences.
      \end{enumerate}
    \item $\phi \equiv \phi_1 \Since_I \phi_2$:
      \begin{enumerate}[(1)]
        \item Suppose that $\forall s' \in S.\;\Equiv(\phi;s,s')$, i.e., \[\forall s' \in S.\; \Eval(\phi_1,\phi_2;s) \bisim_{X \Since_I Y} \Eval(\phi_1,\phi_2;s') \land \Equiv(\phi_1;s,s') \land \Equiv(\phi_2;s,s').\]
          Using the induction hypothesis, it follows that $S \hypermodels \BE{\phi_1} \HConj \BE{\phi_2}$.
          It remains to show \[S \hypermodels \tforall \pi,\pi'.~\{\phi_1 \Since_{I-d} \phi_2\}_{\pi} \Imp \{\phi_1 \Since_{I-d} \phi_2\}_{\pi'}\] for all  $d \in \Re$ if $I \neq \emptyset$.
          To this end, fix $d$ and two arbitrary, non-empty traces $r,r'$ in $S$ such that $r \in \Sem{\phi_1 \Since_{I-d} \phi_2}$.
          We have
          \[
            \Eval(\phi_1,\phi_2;r) \bisim_{X \Since_I Y} \Eval(\phi_1,\phi_2;s) \bisim_{X \Since_I Y} \Eval(\phi_1,\phi_2;r').
          \]
          Using Lemma~\ref{lemma:foo:since}, this implies $\Eval(\phi_1,\phi_2;r) \in \Sem{X \Since_{I-d} Y} \iff \Eval(\phi_1,\phi_2;r') \in \Sem{X \Since_{I-d} Y}$, i.e., $r \in \Sem{\phi_1 \Since_{I-d} \phi_2} \iff r' \in \Sem{\phi_1 \Since_{I-d} \phi_2}$.
        \item Suppose that $s \in S$ and $S \hypermodels \BE{\phi}$.
          Hence $S \hypermodels \BE{\phi_1} \HConj \BE{\phi_2}$ by construction of $\BE{\phi}$, and from the induction hypothesis we get that $\forall s' \in S.\;\Equiv(\phi_1;s,s') \land \Equiv(\phi_2;s,s')$.
          It remains to show that $\forall s' \in S.\;\Eval(\phi_1,\phi_2;s) \bisim_{X \Since_I Y} \Eval(\phi_1,\phi_2;s')$.
          Fix an arbitrary $s' \in S$.
          Using Lemma~\ref{lemma:foo:empty}, it suffices to consider the case where $I \neq \emptyset$, in which case we know that \[S \hypermodels \tforall \pi,\pi'.~\{\phi_1 \Since_{I-d} \phi_2\}_{\pi} \Imp \{\phi_1 \Since_{I-d} \phi_2\}_{\pi'}\]
          for all $d \in \Re$.
          (Strictly speaking, $\BE{\phi}$ guarantees this only up to some upper bound of $d$.
          However, as discussed in Section~\ref{sec:information}, the upper bound is chosen such that the interval $I - d$ is either empty or the same for all larger $d$.)
          We again instantiate the quantifiers twice to conclude that $s \in \Sem{\phi_1 \Since_{I-d} \phi_2} \iff s' \in \Sem{\phi_1 \Since_{I-d} \phi_2}$ and hence $\Eval(\phi_1,\phi_2;s) \in \Sem{X \Since_{I-d} Y} \iff \Eval(\phi_1,\phi_2;s') \in \Sem{X \Since_{I-d} Y}$ for all $d$.
          Now apply Lemma~\ref{lemma:foo:since}.
        \item We obtain $\Equiv(\phi_1;s,s^{+d})$ and $\Equiv(\phi_2;s,s^{+d})$ from the induction hypotheses and it remains to prove that $\Eval(\phi_1,\phi_2;s) \bisim_{X \Since_I Y} \Eval(\phi_1,\phi_2;s^{+d})$.
          Fix an arbitrary non-empty trace $s''$.
          The difference between any pair of timestamps in $\Eval(\phi_1,\phi_2;s) \cdot s''$ is the same as in $\Eval(\phi_1,\phi_2;s^{+d}) \cdot s''$ because concatenation is defined such that the timestamps of $s''$ are interpreted as relative increments.
          Moreover, the events of the two concatenations are pointwise equal.
          Therefore, $\Eval(\phi_1,\phi_2;s) \cdot s'' \in \Sem{X \Since_I Y} \iff \Eval(\phi_1,\phi_2;s^{+d}) \cdot s'' \in \Sem{X \Since_I Y}$ because $\Since_I$'s semantics depends only on the events and timestamp differences.
      \end{enumerate}
  \end{itemize}
\end{proof}

\begin{lemma}
	\label{thm:rewriting1}
	For $\ell_2 > \ell_1 \geq 0$, $\| \in \{],)\}$ and $\ltimes \in \{<, \leq\}$, define
	\begin{align*}
		\Phi_{\ell_1,\ell_2,\alpha,\beta} &= \bigvee_{k \in (\ell_1,\ell_2]} \lbrace \alpha \Since_{[0,k\|} \beta\rbrace_\pi \HConj \HNeg \lbrace \alpha \Since_{[0,k\|} \beta\rbrace_{\pi'}\\
		\phi^1_{\ell_1,\ell_2,\alpha,\beta} &= \left(\HNeg\lbrace\Once_{[0,0]}\neg\alpha\rbrace_\pi \HConj \HNeg \lbrace\Once_{[0,0]} \beta\rbrace_{\pi'}\right) \HSince_{[\ell_1,\ell_2\|}  \left(\lbrace \alpha \Since_{[0,0]}\beta\rbrace_\pi \HConj \HNeg \lbrace \alpha \Since_{[0,0]} \beta\rbrace_{\pi'}\right)\\
		\phi^2_{\ell_1,\ell_2,\alpha,\beta} &= t.\left(\HNeg\lbrace\Once_{[0,0]}\neg\alpha\rbrace_\pi \HConj \HNeg \lbrace\Once_{[0,0]} \beta\rbrace_{\pi'}\right) \\
		&\quad\HSince \left(\lbrace (\neg \beta) \Since_{[0,0]} (\neg \alpha)\rbrace_{\pi'} \HConj \lbrace \alpha \Since \left(t'. \ell_1 < t -t' \ltimes \ell_2 \wedge \beta\right)\rbrace_\pi \right)
	\end{align*}
	such that $(\ltimes = <)$ iff $(\| = ))$. Then $\Phi_{\ell_1,\ell_2,\alpha,\beta} \HIff \phi^1_{\ell_1,\ell_2,\alpha,\beta} \HDisj \phi^2_{\ell_1,\ell_2,\alpha,\beta}$. 
\end{lemma}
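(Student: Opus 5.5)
The plan is to reduce both sides of the claimed equivalence to conditions on two real-valued quantities attached to the pair of traces, and then compare these conditions by a case analysis. All formulae are evaluated at the common last timestamp $u$ of the aligned set $S$.

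\emph{Step 1: a distance characterization of $\Phi$.} For a trace $r \in S$, let $d_r \in \Re \cup \{\infty\}$ be defined as follows.
\begin{itemize}
  \item Let $j$ be the latest time-point of $r$ with $\Tuple{r,j} \models \beta$ such that $\alpha$ holds at every time-point of $r$ after $j$.
  \item If such a $j$ exists, set $d_r = u - \tms_j$; otherwise set $d_r = \infty$.
\end{itemize}
By the pMTL semantics of $\Since$, $\{\alpha \Since_{[0,k\|} \beta\}_r$ holds at $u$ iff $d_r \le k$ (respectively $d_r < k$ when $\| = {)}$). This works because the truth value is monotone in $k$ and the most recent admissible $\beta$-witness is the optimal one.

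Hence $\Phi_{\ell_1,\ell_2,\alpha,\beta}$ holds for $(\pi,\pi')$ iff there is $k \in (\ell_1,\ell_2]$ with $d_\pi \ltimes' k$ and $\lnot(d_{\pi'} \ltimes' k)$, where $\ltimes'$ denotes $\le$ for $\| = {]}$ and $<$ for $\| = {)}$. Unfolding the existential over $k$, this is a simple interval condition: $d_\pi < d_{\pi'}$, $d_\pi \ltimes \ell_2$, and $d_{\pi'} > \ell_1$. The matching of $\ltimes$ with $\|$ assumed in the statement is exactly what makes these boundary cases line up.

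\emph{Step 2: unfolding $\phi^1$ and $\phi^2$.} Using the HyperpTPTL semantics together with the alignment assumption, I unfold the two disjuncts. A key point is that $\{\psi\}_\pi$ is false at timestamps where $\pi$ has no time-point, so $\HNeg\{\cdot\}_\pi$ is vacuously true there. Also, $\Once_{[0,0]}$ ranges over all time-points sharing the current timestamp.

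\begin{itemize}
  \item For $\phi^1$: it holds iff there is a timestamp $u'$ with $u - u' \in [\ell_1,\ell_2\|$ satisfying two conditions.
  \begin{itemize}
    \item At $u'$, $\pi$ has a $\beta$-witness followed by $\alpha$ within that timestamp, while $\pi'$ does not.
    \item At every timestamp in $(u',u]$, $\pi$ never violates $\alpha$ and $\pi'$ never sees $\beta$.
  \end{itemize}
  I would show that this is equivalent to $d_\pi = u - u' \in [\ell_1,\ell_2\|$ together with $d_{\pi'} > d_\pi$.
  \item For $\phi^2$: the outer freeze fixes $t = u$. The disjunct holds iff there is a timestamp $u'$ satisfying two conditions.
  \begin{itemize}
    \item At $u'$, $\pi'$ has a $\lnot\alpha$ after its last $\beta$ of that timestamp, so that all older $\pi'$-witnesses are killed. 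At the same time, $\pi$ satisfies $\alpha \Since \beta$ with a witness at distance in $(\ell_1,\ell_2\|$ from $u$.
    \item After $u'$, $\pi$ keeps $\alpha$ and $\pi'$ sees no $\beta$.
  \end{itemize}
  This should correspond to $d_{\pi'} = \infty$ or, more precisely, to $d_{\pi'}$ being smaller than the distance to $u'$, i.e. no surviving $\pi'$-witness older than $u'$. It also covers the situation where the witness for $\pi$ lies strictly inside $(\ell_1,\ell_2\|$.
\end{itemize}

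\emph{Step 3: the two implications.} For ($\Leftarrow$), each disjunct directly yields the interval condition of Step 1. I would pick $k = d_\pi$, or a value just above $d_\pi$ in the strict case, and check $\lnot\{\alpha \Since_{[0,k\|}\beta\}_{\pi'}$ from the ``no $\beta$ in $\pi'$ since $u'$'' clause.

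For ($\Rightarrow$), I split on the relative order of $d_\pi$ and the \emph{last} $\pi'$-event that is either a $\beta$ or a $\lnot\alpha$.
\begin{itemize}
  \item If $\pi'$'s most recent relevant event before $u - d_\pi$ is a $\beta$ still protected by $\alpha$, then the witness timestamp $u - d_\pi$ itself serves for $\phi^1$.
  \item If instead $\pi'$ has been reset by a $\lnot\alpha$ at a timestamp $u'$ no earlier than the $\pi$-witness, then $u'$ serves for $\phi^2$.
\end{itemize}

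The main obstacle I expect is this last case analysis, specifically the same-timestamp subtleties. Several events may share a timestamp, and the conditions $\Once_{[0,0]}$ and $\Since_{[0,0]}$ must be matched carefully against ``greatest index'' evaluation. On top of this, the endpoint cases $d_\pi = \ell_1$ versus $d_\pi \in (\ell_1,\ell_2\|$ must be routed to the correct disjunct, which depends on the matching of $\ltimes$ with $\|$. I would handle this first by a direct check on small configurations at a single shared timestamp, before writing the general argument.
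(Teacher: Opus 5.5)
Your Step~1 is correct: with $d_r$ as you define it, $\Phi_{\ell_1,\ell_2,\alpha,\beta}$ holds iff $d_\pi<d_{\pi'}$, $d_\pi\ltimes\ell_2$ and $d_{\pi'}>\ell_1$. Steps~2--3, however, cannot be completed, because the equivalence as stated is false. It fails exactly where your analysis of $\phi^2$ and your case split are loose.

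\emph{First}, in $\{(\neg\beta)\Since_{[0,0]}(\neg\alpha)\}_{\pi'}$ the anchor point only has to satisfy $\neg\alpha$, not $\neg\beta$. So the ``reset'' point of $\pi'$ can itself be a fresh $\beta$-witness. Then $\phi^2$ only yields $d_{\pi'}\ge u-u'$, while the $\pi$-witness sits at or before $u'$, so $d_\pi<d_{\pi'}$ does not follow. Take $\alpha=A$, $\beta=B$, $\ell_1=0$, $\ell_2=2$, and let both $\pi$ and $\pi'$ denote the trace with $\{B\}$ at timestamp $0$ and $\{A\}$ at timestamp $1$. At $u=1$, $\phi^2$ holds with $u'=0$, but $\Phi$ is false, since both variables denote the same trace and $d_\pi=d_{\pi'}=1$. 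So your ($\Leftarrow$) claim that each disjunct yields the interval condition fails for $\phi^2$.

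\emph{Second}, in ($\Rightarrow$), ``the reset timestamp $u'$ serves for $\phi^2$'' overlooks that $\phi^2$ also evaluates $\{\alpha\Since\cdots\}_\pi$ at $u'$. That is false unless $\pi$ has a time-point exactly there; you noticed this only for the negated conjuncts. Take $\ell_1=0$, $\ell_2=3$, $\pi$ with $\{B\}$ at $0$ and $\{A\}$ at $2$, and $\pi'$ with $\{B\}$ at $1$, $\emptyset$ at $3/2$, $\{A\}$ at $2$. Then $d_\pi=2$ and $d_{\pi'}=\infty$, so $\Phi$ holds with $k=5/2$. Yet $\phi^1$ fails: its only candidate $u'=0$ is followed by the $B$ of $\pi'$ at $1$. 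And $\phi^2$ fails: it needs an event of $\pi$ at $u'$, i.e.\ $u'\in\{0,2\}$, and $\pi'$ has no $\neg A$-event at either.

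\emph{Third}, your own Step~1 puts no lower bound on $d_\pi$. But $\phi^1$ (via $\HSince_{[\ell_1,\ell_2\|}$) and $\phi^2$ (via $\ell_1<t-t'$) both demand a $\pi$-witness at distance at least $\ell_1$. So ``$u-d_\pi$ serves for $\phi^1$'' is unjustified when $d_\pi<\ell_1$. For example, take $\ell_1=1$, $\ell_2=2$, $\pi$ a single $\{B\}$-event and $\pi'$ a single $\emptyset$-event, both at timestamp $3$. Then $\Phi$ holds with $k=3/2$, but neither disjunct does.

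The paper's proof has the same three holes, so you are not missing an idea it supplies; the statement itself must change.
\begin{itemize}
\item Its last paragraph asserts $\HNeg\{\alpha\Since_{[0,\ell_2\|}\beta\}_{\pi'}$ without argument.
\item Its first direction never checks that the chosen witness lies at distance at least $\ell_1$.
\item In its second case, it evaluates the $\pi$-conjunct at the $\pi'$-timestamp $\delta'_{j_0}$, and it concludes $\phi^1$ where $\phi^2$ is meant.
\end{itemize}

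For the only instance used later ($\ell_1=0$, in Lemma~\ref{lemma:rewriting}), your Step~1 directly proves a corrected version. Replace $\phi^2$ by $\{\alpha\Since_{[0,\ell_2\|}\beta\}_\pi\HConj\HNeg\{\alpha\Since\beta\}_{\pi'}$. If $d_{\pi'}=\infty$, this disjunct applies. Otherwise the latest valid witness of $\pi'$ lies strictly before $u-d_\pi$, and all later $\pi'$-points satisfy $\alpha$. Hence $\pi'$ has no $\beta$ after $u-d_\pi$ and no block-local witness at $u-d_\pi$, which makes $\phi^1$ true at $u'=u-d_\pi$.
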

\begin{proof}
	$\Phi_{\ell_1,\ell_2,\alpha,\beta} \Rightarrow \phi^1_{\ell_1,\ell_2,\alpha,\beta} \HDisj \phi^2_{\ell_1,\ell_2,\alpha,\beta}$. 
	Assume that $\Tuple{S,\Pi,u,v} \hypermodels \Phi_{\ell_1,\ell_2,\alpha,\beta}$ and let $k \in (\ell_1,\ell_2]$ such that $\Tuple{S,\Pi,u,v} \hypermodels\lbrace \alpha \Since_{[0,k\|} \beta\rbrace_\pi \wedge \neg \lbrace \alpha \Since_{[0,k\|} \beta\rbrace_{\pi'}$. Let $(\sigma,\tmsdiff)=\Pi(\pi)$ and $(\sigma',\tmsdiff')=\Pi(\pi')$. Let $i' \leq i$ such that $i$ is the greatest index such that $\delta_i=u$, $u - k \ltimes \delta_{i'}$, $\langle (\sigma,\delta), i', v\rangle \models \beta$, and $\langle (\sigma,\delta),i'',v\rangle \models \alpha$ for all $i' < i'' \leq i$. By assumption, $\langle(\sigma',\delta'),j,v\rangle \not\models \alpha \Since_{[0,k\|} \beta$ where $j$ is the greatest index such that $\delta'_j = u$. We distinguish between two cases:
	\begin{itemize}
		\item If, for all $j'$ such that $\delta'_j - \delta'_{j'} < u - \delta_{i'}$, we have $\langle (\sigma',\delta'),j',v\rangle \models \alpha$, then $\langle(\sigma',\delta'),j,v\rangle \not\models \alpha \Since_{[0,k\|} \beta$ implies $\langle (\sigma',\delta'),j',v\rangle \not\models \beta$ for all $j'$ with $\delta'_j - \delta'_{j'} \leq u - \delta_{i'}$ and $\langle(\sigma',\delta'),j'',v\rangle \not\models \alpha \Since_{[0,0]} \beta$ where $j''\leq j$ is the greatest index such that $\delta'_j-\delta'_{j''}=u-\delta_{i'}$, if it exists. Hence, $\langle S,\Pi,u',v\rangle \models\HNeg\lbrace\Once_{[0,0]}\neg\alpha\rbrace_\pi \HConj \HNeg \lbrace\Once_{[0,0]} \beta\rbrace_{\pi'}$ for all $\delta'_i < u' \leq u$  and $\langle S,\Pi,\delta_{i'},v\rangle \models \lbrace \alpha \Since_{[0,0]}\beta\rbrace_\pi \HConj \HNeg \lbrace \alpha \Since_{[0,0]} \beta\rbrace_{\pi'}$, and therefore $\langle S,\Pi,u,v\rangle \models \phi^1_{\ell_1,\ell_2,\alpha,\beta} $.
		\item If there exists $j'$ such that $\delta'_j - \delta'_{j'} < u - \delta_{i'}$ and $\langle (\sigma',\delta'),j',v\rangle \not\models \alpha$, let $j_0$ denote the greatest such $j'$. Since $\langle(\sigma',\delta'),j,v\rangle \not\models \alpha \Since_{[0,k\|} \beta$, we have $\langle S,\Pi,\delta'_{j_0},v \rangle \models (\neg \beta) \Since_{[0,0]} (\neg \alpha)$ and  $\langle (\sigma',\delta'),j'',v\rangle \not\models \beta$ for all $j_0 < j'' \leq j$.  Hence, $\langle S,\Pi,u',v\rangle \models\HNeg\lbrace\Once_{[0,0]}\neg\alpha\rbrace_\pi \HConj \HNeg \lbrace\Once_{[0,0]} \beta\rbrace_{\pi'}$ for all $\delta'_{j_0} < u' \leq u$  and $\langle S,\Pi,\delta'_{j_0},v\rangle \models \lbrace(\neg \beta) \Since_{[0,0]} (\neg \alpha)\rbrace_{\pi'} \HConj \lbrace \alpha \Since \left(t'. \ell_1 < t -t' \ltimes \ell_2 \wedge \beta\right)\rbrace_\pi$, and therefore $\langle S,\Pi,u,v\rangle \models \phi^1_{\ell_1,\ell_2,\alpha,\beta}$.
	\end{itemize}

	$\phi^1_{\ell_1,\ell_2,\alpha,\beta} \Rightarrow \Phi_{\ell_1,\ell_2,\alpha,\beta}$. Assume that $\Tuple{S,\Pi,u,v} \hypermodels \phi^1_{\ell_1,\ell_2,\alpha,\beta}$. Let $k_0 \in [\ell_1,\ell_2\|$ such that  $\Tuple{S,\Pi,u-k_0,v} \hypermodels \lbrace \alpha \Since_{[0,0]}\beta\rbrace_\pi \HConj \HNeg \lbrace \alpha \Since_{[0,0]} \beta\rbrace_{\pi'}$ and, for all $u-k_0 \leq u' < u$, $\Tuple{S,\Pi,u',v} \hypermodels\HNeg\lbrace\Once_{[0,0]}\neg\alpha\rbrace_\pi \HConj \HNeg\lbrace\Once_{[0,0]}\beta\rbrace_{\pi'}$. Hence, if $(\| \equiv ])$, $\Tuple{S,\Pi,u,v} \hypermodels \lbrace \alpha \Since_{[0,k_0]} \beta\rbrace_\pi \wedge \neg \lbrace \alpha \Since_{[0,k_0]} \beta\rbrace_{\pi'}$. Otherwise, $k_0 < \ell_2$ and thus there exists a sufficiently small $\epsilon > 0$ such that $k_0 + \epsilon < \ell_2$ and there exists no time-point in either $\pi$ or $\pi'$ with the timestamp $k_0 + \epsilon$. Thus, we obtain $\Tuple{S,\Pi,u,v} \hypermodels \lbrace \alpha \Since_{[0,k_0 + \epsilon)} \beta\rbrace_\pi \wedge \neg \lbrace \alpha \Since_{[0,k_0 + \epsilon)} \beta\rbrace_{\pi'}$.
	
	$\phi^2_{\ell_1,\ell_2,\alpha,\beta} \Rightarrow \Phi_{\ell_1,\ell_2,\alpha,\beta}$. Assume that $\Tuple{S,\Pi,u,v} \hypermodels \phi^2_{\ell_1,\ell_2,\alpha,\beta}$. Let $k_0$ be such that  $\Tuple{S,\Pi,u-k_0,v} \hypermodels \lbrace (\neg \beta) \Since_{[0,0]} (\neg \alpha)\rbrace_{\pi'} \HConj \lbrace \alpha \Since \left(t'. \ell_1 < t -t' \ltimes \ell_2 \wedge \beta\right)\rbrace_\pi $ and, for all $u-k_0 \leq u' < u$, $\Tuple{S,\Pi,u',v} \hypermodels\HNeg\lbrace\Once_{[0,0]}\neg\alpha\rbrace_\pi \HConj \HNeg\lbrace\Once_{[0,0]}\beta\rbrace_{\pi'}$. We obtain $k_1 \in  (\ell_1,\ell_2\|$ such that $\Tuple{\Pi(\pi), i, v} \models \beta$ for some time-point $i$ with $\tau_i = u-k_1$ and $\Tuple{\Pi(\pi),j,v} \models \alpha$ for all $j > i$. Since, $k_1 < \ell_2$, $\Tuple{S,\Pi,u,v} \hypermodels \lbrace \alpha \Since_{[0,\ell_2\|} \beta\rbrace_\pi \wedge \neg \lbrace \alpha \Since_{[0,\ell_2\|} \beta\rbrace_{\pi'}$.
\end{proof}

\begin{lemma}
	\label{thm:rewriting2}
	For $a \geq 0$ and $\ell \geq 0$, define
	\begin{align*}
		\Psi_{a,\ell,\alpha,\beta} &= \bigvee_{j\in [0,a]}\lbrace \alpha \Since_{\| j,j+\ell\|} \beta\rbrace_\pi \HConj \HNeg \lbrace \alpha \Since_{\| j,j+\ell\|} \beta\rbrace_{\pi'}\\
		\psi^1_{a,\ell,\alpha,\beta} &= \left(\HNeg\lbrace\Once_{[0,0]}\neg\alpha\rbrace_\pi \right) \HSince_{[0,a]} \left(\lbrace \alpha \Since_{\| 0,\ell\|} \beta\rbrace_\pi \HConj \HNeg \lbrace \alpha \Since_{\| 0,\ell\|} \beta\rbrace_{\pi'}\right) \\
		\psi^2_{a,\ell,\alpha,\beta} &= t. \left(\HNeg\lbrace\Once_{[0,0]}\neg\alpha\rbrace_\pi\right) \HSince_{[0,a]} \\
		&\hspace{-10pt} \left(   \lbrace \Once_{[0,0]} \neg \alpha \rbrace_{\pi'} \HConj \left( \left(\HNeg\lbrace\Once_{[0,0]}\neg\alpha\rbrace_\pi\right) \HSince t'. \left(t - t' \leq a \HConj \lbrace\alpha \Since_{\| 0,\ell\|} \beta\rbrace_{\pi'} \HConj \lbrace\alpha \Since_{\| 0,\ell\|} \beta \rbrace_{\pi}\right)\right)\right) 
	\end{align*}
	where $\| \cdot \| \in \{(\cdot ), [\cdot ], [\cdot ), (\cdot ]\}$. Then $\Psi_{a,\ell,\alpha,\beta} \HIff \psi^1_{a,\ell,\alpha,\beta} \HDisj \psi^2_{a,\ell,\alpha,\beta}$. 
\end{lemma}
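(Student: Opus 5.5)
We prove the two implications separately, following the pattern of the proof of Lemma~\ref{thm:rewriting1}. Fix $\Tuple{S,\Pi,u,v}$ and write $(\ev,\tmsdiff)=\Pi(\pi)$ and $(\ev',\tmsdiff')=\Pi(\pi')$, with last time-points $i$ and $i'$ at timestamp $u$. The key observation is that, for a fixed trace, the truth of $\alpha \Since_{\| j,j+\ell\|} \beta$ at $u$ is the truth of $\alpha \Since_{\| 0,\ell\|} \beta$ evaluated at timestamp $u-j$, provided $\alpha$ holds at every time-point of that trace with timestamp in $(u-j,u]$. Such a shift is witnessed exactly by the side condition $\HNeg\{\Once_{[0,0]}\neg\alpha\}_\pi$ along the $\HSince_{[0,a]}$. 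Here the point-based evaluation of $\{\cdot\}_\pi$ at the greatest index with a given timestamp matters, and $\Once_{[0,0]}$ is used to cover all time-points sharing a timestamp.

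\emph{Direction $\Psi \Rightarrow \psi^1\HDisj\psi^2$.} Pick $j\in[0,a]$ with $\pi$ satisfying and $\pi'$ violating $\alpha \Since_{\| j,j+\ell\|} \beta$ at $u$. Let $w$ be the witness time-point in $\pi$ for $\beta$. Then $\alpha$ holds on $\pi$ throughout $(\tmsdiff_w,u]$, and in particular on $(u-j,u]$. We split on whether $\pi'$ also satisfies $\alpha$ at all time-points with timestamp in $(u-j,u]$.
\begin{itemize}
\item If it does, then at timestamp $u-j$ the trace $\pi'$ must violate $\alpha \Since_{\| 0,\ell\|}\beta$; otherwise the $\alpha$-run would propagate that witness to $u$. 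Meanwhile $\pi$ satisfies it at $u-j$, by the shift observation. This gives $\psi^1$ with $\HSince_{[0,a]}$ anchored at $u-j$.
\item If it does not, let $u_0\in(u-j,u]$ be the latest timestamp at which $\pi'$ has a $\neg\alpha$ point, so $\{\Once_{[0,0]}\neg\alpha\}_{\pi'}$ holds at $u_0$. Freeze $t=u$. Because $\pi$ satisfies $\alpha$ on $(u_0,u]$, the outer $\HSince_{[0,a]}$ reaches $u_0$ with $u-u_0\le a$. The inner $\HSince$ is then anchored at $t'=u-j\le u_0$, where $t-t'\le a$; one must check that $\pi$ satisfies $\alpha\Since_{\|0,\ell\|}\beta$ there. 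One would expect the conjunct $\{\alpha\Since_{\|0,\ell\|}\beta\}_{\pi'}$ in $\psi^2$ to be a negation, and I will verify which polarity makes the equivalence hold. This is the first point where the formula's exact shape has to be matched.
\end{itemize}

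\emph{Direction $\psi^1 \HDisj \psi^2 \Rightarrow \Psi$.} For $\psi^1$, let $u-j$ with $j\in[0,a]$ be the anchor. Along $(u-j,u]$, $\pi$ has no $\neg\alpha$ point, so by the shift observation $\pi$ satisfies $\alpha \Since_{\| j,j+\ell\|}\beta$ at $u$. For $\pi'$ we argue by contradiction: a witness for $\alpha \Since_{\| j,j+\ell\|}\beta$ at $u$ would lie at timestamp $\le u-j$ with $\alpha$ holding up to $u$, and would therefore already witness $\alpha\Since_{\|0,\ell\|}\beta$ at $u-j$. For $\psi^2$, we take $j:=u-t'\le a$. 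The $\neg\alpha$ point of $\pi'$ at $u_0\ge t'$ blocks every witness of $\pi'$ at timestamp $\le t'$ from reaching $u$, while $\pi$'s witness at $t'$ propagates to $u$ through the $\alpha$-run.

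\emph{Main obstacle.} The delicate part is the boundary bookkeeping, which has to be carried out carefully in both directions. It concerns open versus closed endpoints $\|$, the case $u_0=u-j$ exactly, several time-points sharing one timestamp (handled by $\Once_{[0,0]}$ and the ``greatest index'' convention), and ensuring that the strict-past semantics of $\HSince$ in the hyperlogic matches the half-open range $(u-j,u]$. Where an endpoint case fails, I would perturb $j$ by a small $\epsilon$, using finiteness of the timestamps as in Lemma~\ref{thm:rewriting1}.
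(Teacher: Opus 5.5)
Your route is the paper's: shift the $\beta$-witness of $\pi$ along its $\alpha$-run, anchor $\psi^1$ at $u-j$, and use the latest $\neg\alpha$ point of $\pi'$ for $\psi^2$. However, the direction $\Psi\Rightarrow\psi^1\HDisj\psi^2$ is left open, and your doubt about the polarity of $\{\alpha\Since_{\|0,\ell\|}\beta\}_{\pi'}$ in $\psi^2$ is misplaced. The missing step is to split on whether $\pi'$ satisfies $\alpha\Since_{\|0,\ell\|}\beta$ at $u-j$, not on whether $\pi'$ has $\alpha$ throughout $(u-j,u]$. The paper does this by assuming $\HNeg\psi^1$.
\begin{itemize}
\item If $\pi'$ does not satisfy it there, $\psi^1$ holds with anchor $u-j$, regardless of any $\neg\alpha$ points of $\pi'$.
\item If $\pi'$ does satisfy it, then, since $\pi'$ violates $\alpha\Since_{\|j,j+\ell\|}\beta$ at $u$, that witness cannot survive to $u$. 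So $\pi'$ has a $\neg\alpha$ point with timestamp in $(u-j,u]$. Taking the latest one, $u_0$, gives $\psi^2$ with anchors $u_0$ and $t'=u-j$, and the conjunct on $\pi'$ is needed positively, exactly as written.
\end{itemize}
Your second case contains both sub-cases, which is why you could not close it.

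More seriously, the ``boundary bookkeeping'' you plan to settle by an $\epsilon$-perturbation is not bookkeeping. At those points the stated equivalence is false, and the paper's proof passes over the same steps without comment.

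First, your shift observation needs $\pi$ to have a time-point at timestamp exactly $u-j$, because $\{\cdot\}_\pi$ is false at timestamps where $\pi$ has none. Take $\alpha=P$, $\beta=Q$, $a=\ell=1$, closed brackets, $\pi$ with $\{Q\}$ at timestamp $2$ and $\{P\}$ at $3.5$, and $\pi'$ with a single event $\emptyset$ at $3.5$. At $u=3.5$, $\Psi$ holds (take $j=\tfrac12$). Yet $\psi^1$ fails, because the only time-point of $\pi$ in $[2.5,3.5]$ is at $3.5$, where $P\Since_{[0,1]}Q$ is false. And $\psi^2$ fails, because $\pi'$ never satisfies $P\Since_{[0,1]}Q$. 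Perturbing $j$ cannot help here. The same issue breaks your contradiction argument for $\pi'$ in $\psi^1\Rightarrow\Psi$: if $\pi'$ has no time-point at $u-j$, then $\HNeg\{\alpha\Since_{\|0,\ell\|}\beta\}_{\pi'}$ holds there vacuously, while $\pi'$ may still have witnesses just before $u-j$.

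Second, in $\psi^2\Rightarrow\Psi$, a $\neg\alpha$ point at $u_0=t'$ need not block a witness at timestamp $t'$ when the left bracket is closed, since the $\beta$-point of a $\Since$ need not satisfy $\alpha$. Let $\Pi(\pi)=\Pi(\pi')$ consist of $\{Q\}$ at timestamp $0$ and $\{P\}$ at $1$, with $u=1$ and $a=\ell=1$. Then $\psi^2$ holds via $u_0=t'=0$, whereas $\Psi$ is trivially false.

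So no completion of your sketch can prove the lemma as written; it must first be amended. For example, the $\neg\alpha$ point in $\psi^2$ could be forced to lie strictly after $t'$, and anchors at which $\pi$ or $\pi'$ has no time-point would need separate handling.
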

\begin{proof}
	$\Psi_{a,\ell,\alpha,\beta} \Rightarrow \psi^1_{a,\ell,\alpha,\beta} \HDisj \psi^2_{a,\ell,\alpha,\beta}$. Assume that $\Tuple{S,\Pi,u,v} \hypermodels \Psi_{a,\ell,\alpha,\beta} \HConj \neg\psi^1_{a, \ell, \alpha, \beta}$. It then suffices to show $\Tuple{S,\Pi,u,v} \hypermodels \psi^2_{a, \ell, \alpha, \beta}$. Let $j_0 \in [0,a]$ such that $\Tuple{S,\Pi,u,v} \hypermodels\lbrace \alpha \Since_{\| j_0,j_0+\ell\|} \beta\rbrace_\pi \HConj \neg \lbrace \alpha \Since_{\| j_0,j_0+\ell\|} \beta\rbrace_{\pi'}$. Since $\Tuple{S,\Pi,u,v} \hypermodels\lbrace \alpha \Since_{\| j_0,j_0+\ell\|} \beta\rbrace_\pi$, there exists $i$ such that $\tau_i \in \| u-j_0-\ell,u-j_0\| \subseteq \| u-a-\ell,u]$, $\Tuple{\Pi(\pi),i,v} \models \beta$, and for all $i' > i$, $\Tuple{\Pi(\pi),i',v} \models \alpha$. Consequently, $\Tuple{S,\Pi,u - j_0,v} \hypermodels\lbrace \alpha \Since_{\| 0,\ell\|} \beta\rbrace_\pi$. Since $\Tuple{S,\Pi,u,v} \hypermodels \neg\psi^1_{a, \ell, \alpha, \beta}$, it also implies $\Tuple{S,\Pi,u - j_0,v} \hypermodels\lbrace \alpha \Since_{\| 0,\ell\|} \beta\rbrace_{\pi'}$. However, since $\Tuple{S,\Pi,u,v} \hypermodels \neg \lbrace \alpha \Since_{\| j_0,j_0+\ell\|} \beta\rbrace_{\pi'}$, there must exist $j_0' \leq j_0$ such that $\Tuple{S,\Pi,u - j_0',v} \hypermodels\lbrace \neg\alpha\rbrace_{\pi'}$. Therefore, $\Tuple{S,\Pi,u,v} \hypermodels \psi^2_{a, \ell, \alpha, \beta}$.
	
	$\psi^1_{\ell_1,\ell_2,\alpha,\beta} \Rightarrow \Psi_{\ell_1,\ell_2,\alpha,\beta}$. Assume that $\Tuple{S,\Pi,u,v} \hypermodels \psi^1_{\ell_1,\ell_2,\alpha,\beta}$. Let $j_0 \in [0,a]$ such that  $\Tuple{S,\Pi,u-j_0,v} \hypermodels \lbrace \alpha \Since_{\| 0,\ell\|} \beta\rbrace_\pi \HConj \HNeg \lbrace \alpha \Since_{\| 0,\ell\|} \beta\rbrace_{\pi'}$ and, for all $u-j_0 \leq u' \leq u$, $\Tuple{S,\Pi,u',v} \hypermodels\HNeg\lbrace\Once_{[0,0]}\neg\alpha\rbrace_\pi$. Then $\Tuple{S,\Pi,u,v} \hypermodels \lbrace \alpha \Since_{\| j_0,j_0+\ell\|} \beta\rbrace_\pi \wedge \neg \lbrace \alpha \Since_{\| j_0,j_0+\ell\|} \beta\rbrace_{\pi'}$.
	
	\sloppy $\psi^2_{\ell_1,\ell_2,\alpha,\beta} \Rightarrow \Psi_{\ell_1,\ell_2,\alpha,\beta}$. Assume that $\Tuple{S,\Pi,u,v} \hypermodels \psi^2_{\ell_1,\ell_2,\alpha,\beta}$. Let $j_0\in[0,a]$ such that  $\left(   \lbrace \Once_{[0,0]} \neg \alpha \rbrace_{\pi'} \HConj \left( \HTrue \HSince t'. \left(t - t' \leq a \HConj \lbrace\alpha \Since_{\| 0,\ell\|} \beta\rbrace_{\pi'} \HConj \lbrace\alpha \Since_{\| 0,\ell\|} \beta \rbrace_{\pi}\right)\right)\right)$ and for all $u-j_0 \leq u' < u$, $\Tuple{S,\Pi,u',v} \hypermodels \HNeg\lbrace\Once_{[0,0]}\neg\alpha\rbrace_\pi$. This means that $\Tuple{S,\Pi,u - j_0,v} \hypermodels \lbrace\neg\alpha\rbrace_{\pi'}$ and there exists $j_0' \in [j_0, a]$ such that $\Tuple{S,\Pi,u - j_0,v} \hypermodels\lbrace \alpha \Since_{\| 0,\ell\|} \beta\rbrace_\pi \HConj \lbrace \alpha \Since_{\| 0,\ell\|} \beta\rbrace_{\pi'}$ and for all $u - j_0' \leq u'' \leq u - j_0$, $\Tuple{S,\Pi,u'',v} \hypermodels\HNeg\lbrace\Once_{[0,0]}\neg\alpha\rbrace_\pi$. Therefore, $\Tuple{S,\Pi,u,v} \hypermodels \lbrace \alpha \Since_{\| j_0',j_0'+\ell\|} \beta\rbrace_\pi \wedge \neg \lbrace \alpha \Since_{\| j_0',j_0'+\ell\|} \beta\rbrace_{\pi'}$.
\end{proof}

\begin{lemma}
  \label{thm:rewriting3}
  For $a > 0$, define
  \begin{align*}
    \Upsilon_{a,\alpha,\beta} &= \bigvee_{j\in[0,a]} \lbrace \alpha \Since_{\| j,\infty)} \beta\rbrace_\pi \wedge \neg \lbrace \alpha \Since_{\| j,\infty)} \beta\rbrace_{\pi'}\\
    \upsilon^1_{a,\alpha,\beta} &= \left(\HNeg\lbrace\Once_{[0,0]}\neg\alpha\rbrace_\pi \right) \HSince_{[0,a]} \left(\lbrace \alpha \Since_{\| 0,\infty)} \beta\rbrace_\pi \HConj \HNeg \lbrace \alpha \Since_{\| 0,\infty} \beta\rbrace_{\pi'}\right) \\
    \upsilon^2_{a,\alpha,\beta} &= t. \left(\HNeg\lbrace\Once_{[0,0]}\neg\alpha\rbrace_\pi\right) \HSince_{[0,a]} \\
    &\hspace{-10pt} \left(   \lbrace \Once_{[0,0]} \neg \alpha \rbrace_{\pi'} \HConj \left( \left(\HNeg\lbrace\Once_{[0,0]}\neg\alpha\rbrace_\pi\right) \HSince t'. \left(t - t' \leq a \HConj \lbrace\alpha \Since_{\| 0,\infty)} \beta\rbrace_{\pi'} \HConj \lbrace\alpha \Since_{\| 0,\infty)} \beta \rbrace_{\pi}\right)\right)\right) 
  \end{align*}
  where $\| \in \{[,(\}$. Then $\Upsilon_{a,\alpha,\beta} \HIff \upsilon^1_{a,\alpha,\beta} \vee \upsilon^2_{a,\alpha,\beta}$. 
\end{lemma}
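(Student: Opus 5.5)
The plan is to mirror the proof of Lemma~\ref{thm:rewriting2}, since $\Upsilon_{a,\alpha,\beta}$ is the limiting case $\ell=\infty$ of $\Psi_{a,\ell,\alpha,\beta}$. The structure of $\upsilon^1,\upsilon^2$ is literally $\psi^1,\psi^2$ with $\| 0,\ell\|$ replaced by $\| 0,\infty)$. We prove both directions of the equivalence directly from the semantics of $\HSince_{[0,a]}$ and $\{\cdot\}_\pi$.

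The key tool is a shifting observation. Suppose $\Tuple{S,\Pi,u,v} \hypermodels \HNeg\{\Once_{[0,0]}\neg\alpha\}_\pi$ holds at every timestamp in $(u-j,u]$. Then $\{\alpha \Since_{\| j,\infty)}\beta\}_\pi$ at $u$ is equivalent to $\{\alpha\Since_{\|0,\infty)}\beta\}_\pi$ at $u-j$. We also use monotonicity: $\alpha\Since_{\|j,\infty)}\beta$ at $u$ implies $\alpha\Since_{\|0,\infty)}\beta$ at $u-j$, by taking the same witness, because $\alpha$ holds on every point after it.

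\textbf{Forward direction.} Assume $\Upsilon$ holds with witness $j_0\in[0,a]$, and assume $\upsilon^1$ fails. We show $\upsilon^2$.
\begin{enumerate}
\item Let $i$ be the witness for $\pi$. It satisfies $\beta$, its timestamp lies in $\|u-\infty, u-j_0\|$, and $\alpha$ holds at all later points of $\pi$.
\item Hence $\{\alpha\Since_{\|0,\infty)}\beta\}_\pi$ holds at $u-j_0$, and $\neg\alpha$ never occurs in $\pi$ on $(u-j_0,u]$.
\item Since $\upsilon^1$ fails, $\{\alpha\Since_{\|0,\infty)}\beta\}_{\pi'}$ must also hold at $u-j_0$.
\item Since $\pi'$ fails $\alpha\Since_{\|j_0,\infty)}\beta$ at $u$, $\pi'$ must have a $\neg\alpha$ point at some timestamp $u-j_0'$ with $j_0'<j_0$ (or $\le j_0$, depending on the bracket).
\item Choose the $\neg\alpha$ point of $\pi'$ relevant to $\upsilon^2$ and freeze $t$ at it. Take $t'=u-j_0$, so that $t-t'\le a$.
\item In between, $\neg\alpha$ never occurs in $\pi$, which makes both nested $\HSince$ obligations of $\upsilon^2$ hold.
\end{enumerate}

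\textbf{Backward direction, from $\upsilon^1$.} Take $j_0\in[0,a]$ where the disjunct of $\upsilon^1$ holds. No $\neg\alpha$ occurs in $\pi$ on $(u-j_0,u]$, so the shifting observation transfers the $\pi$ part of the disjunct to $\alpha\Since_{\|j_0,\infty)}\beta$ at $u$. The failure for $\pi'$ at $u-j_0$ transfers by the contrapositive of monotonicity, since any witness at $u$ for $\|j_0,\infty)$ yields one at $u-j_0$.

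\textbf{Backward direction, from $\upsilon^2$.} Let $u-j_0$ be the point where $\pi'$ has $\neg\alpha$, and let $u-j_0'$ with $j_0'\in[j_0,a]$ be the point where both traces satisfy $\alpha\Since_{\|0,\infty)}\beta$. No $\neg\alpha$ occurs in $\pi$ on $(u-j_0',u]$, so by shifting $\pi$ satisfies $\alpha\Since_{\|j_0',\infty)}\beta$ at $u$. For $\pi'$, any witness satisfying $\beta$ at or before $u-j_0'$ is blocked by the $\neg\alpha$ at $u-j_0\ge u-j_0'$. Hence $\pi'$ fails the formula, and $j_0'$ witnesses $\Upsilon$.

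\textbf{Main obstacle.} The hard part is the bookkeeping at the boundaries. This includes open versus closed brackets $\|$ and the use of the \emph{greatest} time-point with a given timestamp in $\{\cdot\}_\pi$. The delicate case in the $\upsilon^2$ direction is when the $\neg\alpha$ in $\pi'$ sits at exactly the same timestamp as the witness. There we must check that $\Once_{[0,0]}$ correctly captures all points sharing that timestamp.

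For the $j_0'=j_0$ case I would first try arguing as follows. $\alpha\Since_{\|0,\infty)}\beta$ at $u-j_0$ for $\pi'$, together with $\Once_{[0,0]}\neg\alpha$ there, forces the $\beta$-witness to be at timestamp $u-j_0$ after the last $\neg\alpha$. With an open bracket such a witness is excluded. With a closed bracket we instead pick the $\neg\alpha$ point strictly later than $u-j_0'$, which $\upsilon^2$ supplies via the strictness of the outer $\HSince$.
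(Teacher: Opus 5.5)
You follow the same route as the paper, whose proof just reruns Lemma~\ref{thm:rewriting2} with $\ell:=\infty$. The argument breaks at its key tool, however, and the statement itself turns out to be false as written. Your shifting and monotonicity observations treat $\{\psi\}_\pi$ as if $\Pi(\pi)$ had an event at every timestamp. By the semantics, $\{\psi\}_\pi$ is \emph{false} at every timestamp where $\Pi(\pi)$ has no event, and likewise $\HNeg\{\psi\}_{\pi'}$ holds vacuously wherever $\Pi(\pi')$ has no event. So ``$\alpha\Since_{\|j,\infty)}\beta$ at $u$ implies $\{\alpha\Since_{\|0,\infty)}\beta\}$ at $u-j$'' fails whenever there is no event at $u-j$. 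This invalidates your forward steps 2, 3 and 5: neither $\{\cdot\}_\pi$ nor $t'$ can be placed at $u-j_0$. It also invalidates the contrapositive you use for $\upsilon^1$. Take closed brackets, $a=5$, atomic $\alpha=A$, $\beta=B$, and $u=10$.
\begin{itemize}
\item Let $\Pi(\pi)$ have events $\{A,B\}$ at $5$ and $\{A\}$ at $10$, and $\Pi(\pi')$ have $\{A,B\}$ at $0$ and $\{A\}$ at $10$. Then $\upsilon^1$ holds with $u'=5$ (vacuously for $\pi'$). Yet $\pi'$ satisfies $A\Since_{[j,\infty)}B$ at $10$ for every $j\le 5$, so $\Upsilon$ is false.
\item Let $\Pi(\pi)$ have $\{A,B\}$ at $0$ and $\{A\}$ at $10$, and $\Pi(\pi')$ have $\{A,B\}$ at $7$ and $\{A\}$ at $10$. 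Then $\Upsilon$ holds with $j=5$. But $\upsilon^1$ can only use $u'=10$, where $\pi'$ satisfies $A\Since B$. And $\upsilon^2$ needs a $\neg A$-event of $\pi'$, of which there is none. This example also works with open brackets.
\end{itemize}

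Your fix for the delicate closed-bracket case is also wrong. The strictness of $\HSince$ only concerns the range $(u',u]$ on which the left operand must hold. The inner $\HSince$ of $\upsilon^2$ may be discharged at the very timestamp $u_1$ where $\{\Once_{[0,0]}\neg\alpha\}_{\pi'}$ holds, so $j_0'=j_0$ is allowed and no later $\neg\alpha$ is supplied. For example, let $\Pi(\pi)$ have $\{A,B\}$ at $7$ and $\{A\}$ at $10$. Let $\Pi(\pi')$ have two events at time $7$, first $\emptyset$ and then $\{A,B\}$, followed by $\{A\}$ at $10$. Then $\upsilon^2$ holds at $10$ with $u_1=u_2=7$. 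But both traces satisfy $A\Since_{[j,\infty)}B$ at $10$ exactly when $j\le 3$, so $\Upsilon$ is false, even though both traces have events at all relevant timestamps.

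The paper's proof has the same holes. The argument of Lemma~\ref{thm:rewriting2} asserts $\{\cdot\}_\pi$ at $u-j_0$, transfers $\HNeg\{\cdot\}_{\pi'}$ from $u-j_0$ to $u$ without justification, and permits $j_0'=j_0$. What is needed is a corrected statement rather than more careful bookkeeping. For instance, it could place witnesses only at event timestamps of the traces concerned. It could also express the failure of $\pi'$ directly (every sufficiently early $\beta$-event of $\pi'$ is followed by a $\neg\alpha$-event of $\pi'$), instead of through $\HNeg\{\cdot\}_{\pi'}$ at a single timestamp.
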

\begin{proof}
  As in Lemma~\ref{thm:rewriting2}, we set $\ell := \infty$.
\end{proof}

\begin{customlemma}{\ref{lemma:rewriting}}
  For $\ell_2 > \ell_1 \geq 0$, $a > 0$, define $(\phi^i)_{i \in \{1,2\}}$, $(\psi^i)_{i \in \{1,2\}}$, $(\upsilon^i)_{i \in \{1,2\}}$ as above. Let $I = \|_l a,b\|_r$ and $I' = \|_l a,\infty)$, for $\|_l \in \{(, [\}$ and $\|_r\in \{),]\}$. Then
  \begin{align*}
    &\bigwedge_{d\in[0,b\|_r} \left(\lbrace \alpha \Since_{I-d} \beta\rbrace_\pi \HImp \lbrace \alpha \Since_{I-d} \beta\rbrace_{\pi'}\right)\\
    \HIff~&\neg\left(\phi^1_{0,b-a,\alpha,\beta} \vee \phi^2_{0,b-a,\alpha,\beta}  \vee \psi^1_{a,b-a,\alpha,\beta}\vee \psi^2_{a,b-a,\alpha,\beta}\right) \\
    &\bigwedge_{d\in[0,a]} \left(\lbrace \alpha \Since_{I'-d} \beta\rbrace_\pi \HImp \lbrace \alpha \Since_{I'-d} \beta\rbrace_{\pi'}\right)\\
    \HIff~& \neg (\upsilon^1_{a,\alpha,\beta} \vee \upsilon^2_{a,\alpha,\beta})
  \end{align*}
  In particular, the size of the rewritten formulae is $O(|\alpha|+|\beta|)$.
\end{customlemma}
\begin{proof}
  For the first equivalence, we observe that
  \begin{align*}
    &\bigvee_{d\in [0,b\|_r} \lbrace \alpha \Since_{I-d} \beta\rbrace_\pi \HConj \HNeg \lbrace \alpha \Since_{I-d} \beta\rbrace_{\pi'} \\
    \HIff~&\bigvee_{k\in(0,a-b]} \lbrace \alpha \Since_{[0,k\|_r} \beta\rbrace_\pi \HConj \HNeg \lbrace \alpha \Since_{[0,k\|_r} \beta\rbrace_{\pi'} \\
    \HDisj ~&\bigvee_{j\in[0,a]} \lbrace \alpha \Since_{\|_lj,j+b-a\|_r} \beta\rbrace_\pi \HConj \HNeg \lbrace \alpha \Since_{\|_lj,j+b-a\|_r} \beta\rbrace_{\pi'}
  \end{align*}
  and apply Lemmata~\ref{thm:rewriting1} and~\ref{thm:rewriting2}.
  For the second equivalence, we rewrite
  \begin{align*}
    &\bigvee_{d\in[0,a]} \lbrace \alpha \Since_{I'-d} \beta\rbrace_\pi \wedge \neg \lbrace \alpha \Since_{I'-d} \beta\rbrace_{\pi'}\HIff\bigvee_{j\in[0,a]} \lbrace \alpha \Since_{\|_lj,\infty)} \beta\rbrace_\pi \wedge \neg \lbrace \alpha \Since_{\|_lj,\infty)} \beta\rbrace_{\pi'}
  \end{align*}
  and conclude using Lemma~\ref{thm:rewriting3}.
\end{proof}

\subsection{Decision Procedure}\label{subsecappx:decision procedure}

Let $\chi$ be a HyperpTPTL formula of the form $\texists \pi_1,\dots,\pi_n.\; \tforall \pi'_1,\dots,\pi'_m.\; \chi'$, where $n \geq 1$, $m \geq 1$, and $\chi'$ does not contain any trace quantifiers.
Recall that $\tilde\chi$ is constructed as follows.
\[
  \tilde\chi \equiv \bigwedge_{i_1 \in \{1,\dots,n\}} \cdots \bigwedge_{i_m \in \{1,\dots,n\}} \chi'[\pi'_1 \mapsto \pi_{i_1}]\cdots[\pi'_m \mapsto \pi_{i_m}],
\]
For an assignment $\Pi$ to variables $X$, let $\img(\Pi)$ denote the image of $X$ under $\Pi$.

\begin{lemma}\label{lemma:hsat ea iff}
  $\chi$ is satisfiable iff there exists an assignment $\Pi$ of traces to the trace variables $\{\pi_1,\dots,\pi_n\}$ such that $\img(\Pi)$ is aligned and \[\Tuple{\img(\Pi),\Pi,\tms(\img(\Pi)),\emptyset} \hypermodels \tilde\chi.\]
\end{lemma}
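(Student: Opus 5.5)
We prove the two directions separately. In both, the key observation is that a quantifier-free HyperpTPTL formula evaluated under an assignment $\Pi$ depends only on the traces $\Pi(\pi)$ for the trace variables free in it. It does not depend on the ambient set $S$, except through the anchoring timestamp $\tms(S)$. We would first state and check this \emph{locality} fact by a routine induction on quantifier-free $\chi$: the cases $\HLift{\phi}{\pi}$, $\HNeg$, $\HDisj$, $\HSince$, $t.\,\chi$ and $t_1-t_2\leq c$ never consult $S$ itself. The only caveat is that $\tms(S)$ must agree, which is why alignment matters.

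($\Longrightarrow$) Suppose $S \hypermodels \chi$. Unfolding the semantics of $\texists$ gives traces $s_1,\dots,s_n \in S$ such that, with $\Pi = [\pi_i \mapsto s_i]$, every extension $\Pi[\pi'_j \mapsto r_j]$ with $r_j \in S$ satisfies $\chi'$ at timestamp $\tms(S)$. Now instantiate each $r_j$ with $s_{i_j}$, for every choice $(i_1,\dots,i_m) \in \{1,\dots,n\}^m$. Then $\chi'$ holds under $\Pi[\pi'_1\mapsto s_{i_1},\dots]$, which is equivalent to $\chi'[\pi'_1\mapsto \pi_{i_1}]\cdots$ holding under $\Pi$. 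This uses a substitution lemma, again by straightforward induction. Hence every conjunct of $\tilde\chi$ holds under $\Pi$ in $S$. The set $\img(\Pi) \subseteq S$ is non-empty and aligned, and it has the same last timestamp as $S$. Provided some trace in $S$ is non-empty this is clear, and if all are empty both timestamps are $0$. By locality we may therefore replace $S$ by $\img(\Pi)$.

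($\Longleftarrow$) Set $S := \img(\Pi)$, which is non-empty because $n\geq 1$ and aligned by assumption. To show $S \hypermodels \chi$, we witness $\pi_i$ by $\Pi(\pi_i)$. Every universal choice of $\pi'_j$ ranges over $S = \img(\Pi)$, so it equals $\Pi(\pi_{i_j})$ for some index $i_j$. Then $\chi'$ under the extended assignment is equivalent, by the substitution lemma, to the conjunct of $\tilde\chi$ indexed by $(i_1,\dots,i_m)$, and this conjunct holds by assumption.

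The main obstacle is being careful about the anchoring timestamp in the locality step. The hyper-level $\HSince$ quantifies over all reals $u'\le u$, not over positions, so it is genuinely independent of $S$. However, $S \hypermodels \chi$ is evaluated at $\tms(S)$, so we must check that $\tms(\img(\Pi)) = \tms(S)$. This follows from alignment: all non-empty traces share a last timestamp. The corner case of empty traces needs the convention that $\tms$ is $0$ for a set of only empty traces. Consequently, if $S$ contains a non-empty trace while the chosen $s_i$ are all empty, the timestamps could differ. In that case we would add one non-empty trace of $S$ to the image; this is harmless because $\tilde\chi$ only mentions $\pi_1,\dots,\pi_n$. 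Alternatively, we would argue that this situation can be avoided because the statement only requires existence of some aligned assignment.
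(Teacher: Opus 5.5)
Your main argument is the paper's: take the existential witnesses $\Pi$, shrink the model to $\img(\Pi)$ (quantifier-free formulae never consult the ambient set), and expand the universal quantifiers over the finite image by substitution; the converse reverses these steps. The paper restricts the universals to $\img(\Pi)$ before expanding rather than after, which makes no difference.

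The gap is the corner case you flag yourself. Neither of your patches works, and no patch can. Adding a non-empty trace of $S$ ``to the image'' does not give a witness of the required form. The statement evaluates $\tilde\chi$ on exactly $\img(\Pi)$ at $\tms(\img(\Pi))$, and $\Pi$ is defined only on $\pi_1,\dots,\pi_n$. Enlarging the set therefore means reassigning some $\pi_i$, which can change the truth of $\tilde\chi$.

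Nor can the situation always be avoided. Write $\mathbf{S}$ for the hyper-level since, take $n=m=1$, and set $\chi' \equiv \HNeg\HLift{\True}{\pi_1} \HConj \theta$ with $\theta \equiv t.\,\bigl(\HTrue \mathbin{\mathbf{S}} (t'.\;t-t'\geq 1)\bigr)$. Since $u'$ ranges over $[0,u]$, $\theta$ holds at anchor $u$ iff $u\geq 1$. Let $S=\{\varepsilon,s\}$, where $s$ is any non-empty trace ending at timestamp $1$. Then $S\hypermodels\chi$ via $\pi_1\mapsto\varepsilon$. But for every $\Pi$:
\begin{itemize}
\item if $\Pi(\pi_1)$ is empty, then $\tms(\img(\Pi))=0$ and $\theta$ fails;
\item if $\Pi(\pi_1)$ is non-empty, it has a last time-point at $\tms(\img(\Pi))$, so $\HLift{\True}{\pi_1}$ holds there and the first conjunct fails.
\end{itemize}
So the lemma is false as stated, because satisfiability admits sets containing the empty trace.

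The paper's proof has the same hole: it infers $\tms(\img(\Pi))=\tms(S)$ from $\img(\Pi)\subseteq S$ without comment. Your suspicion is therefore justified. The right resolution is to strengthen the hypothesis, for example by restricting to sets of non-empty traces, so that every trace of $\img(\Pi)$ ends at $\tms(S)$. Under that hypothesis your argument goes through unchanged.
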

\begin{proof}
  \fbox{$\Longrightarrow$}
  Suppose that $S \hypermodels \chi$ for some non-empty set $S$ of aligned traces.
  Then there exists an assignment $\Pi$ to $\pi_1,\dots,\pi_n$ such that $\Tuple{S,\Pi,\tms(S),\emptyset} \hypermodels \tforall \pi'_1,\dots,\pi'_m.\;\chi'$.
  Note that $\img(\Pi) \subseteq S$ and hence $\tms(\img(\Pi)) = \tms(S)$.
  It is easy to see that $\Tuple{\img(\Pi),\Pi,\tms(\img(\Pi)),\emptyset} \hypermodels \tforall \pi'_1,\dots,\pi'_m.\;\chi'$.
  Since $\img(\Pi) = \{\Pi(\pi_1),\dots,\Pi(\pi_n)\}$ is a finite set, we can expand the universal quantifiers by enumerating all possible assignments of traces to the variables $\pi'_j$, and each available trace is identified by one of the variables $\pi_i$.
  This expansion results in $\tilde\chi$ and hence $\Tuple{\img(\Pi),\Pi,\tms(\img(\Pi)),\emptyset} \hypermodels \tilde\chi$.

  \fbox{$\Longleftarrow$}
  Suppose that $\Tuple{\img(\Pi),\Pi,\tms(\img(\Pi)),\emptyset} \hypermodels \tilde\chi$.
  By the converse of the argument given above, we have $\Tuple{\img(\Pi),\Pi,\tms(\img(\Pi)),\emptyset} \hypermodels \tforall \pi'_1,\dots,\pi'_m.\;\chi'$
  and hence $\Tuple{\img(\Pi),\emptyset,\tms(\img(\Pi)),\emptyset} \hypermodels \chi$, i.e., $\img(\Pi) \hypermodels \chi$.
\end{proof}

\begin{definition}\label{def:interleaving}
  Let $S$ be a \emph{finite}, non-empty set of aligned traces.
  Let $\{\pi_1,\dots,\pi_n\}$ be a set of trace variables, where $n = |S|$.
  We say that the non-empty trace $s$ is an \emph{interleaving} of $S$ if the following conditions are satisfied.
  \begin{compactenum}[(1)]
    \item The first timestamp in $s$ is $0$ and the last timestamp is equal to $\tms(S)$.
    \item The difference between consecutive timestamps in $s$ is at most $1$.
    \item A pair $(\ev_i,\tmsdiff_i)$ in $s$ satisfies $\Tick \in \ev_i$ iff either $i = |s| - 1$ or $\tmsdiff_{i+1} \geq 1$.
    \item The set $S$ is equal to the set of $s$'s \emph{projections} under $\{\pi_1,\dots,\pi_n\}$.
      The projection $\mathrm{prj}_\pi(s)$ under $\pi$ is defined as the trace that contains exactly those pairs $(\ev_i,\tmsdiff_i)$ in $s$ that satisfy $\TP_\pi \in \ev_i$, and where $\ev_i$ is replaced by $\{P \mid P_\pi \in \ev_i, P \neq \TP\}$.
  \end{compactenum}
\end{definition}

Recall the pTPTL formula $\zeta$:
\begin{multline*}
  \zeta \equiv \Tick \land \Once (t.\;t \leq 0) \land \HAlways\bigl(t.\;\neg\Prev(t'.\;\neg(t-t'\leq 1)) \land \left(\Prev(t'.\;t-t'\geq 1) \Iff \Prev \Tick\right)\bigr) \land{} \\
  \bigwedge_{i\in\{1,\dots,n\}} (\Once \TP_{\pi_i} \Imp \Once_{[0,0]} \TP_{\pi_i})
\end{multline*}

\begin{lemma}\label{lemma:zeta}
  If the non-empty trace $s$ satisfies $\zeta$, it is an interleaving of the set of $s$' projections under $\{\pi_1,\dots,\pi_n\}$.
\end{lemma}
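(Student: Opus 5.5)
We check the four conditions of Definition~\ref{def:interleaving} one by one for the set $S$ of $s$'s projections under $\{\pi_1,\dots,\pi_n\}$. Each conjunct of $\zeta$ is aimed at one of these conditions. Since $\zeta$ is evaluated at the last time-point $|s|-1$, the $\HAlways$ conjunct yields a statement about every time-point $i \leq |s|-1$. Likewise, the past operators $\Once$ and $\Once_{[0,0]}$ range over positions $j \leq |s|-1$.

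\emph{Conditions (1)--(3).} The conjunct $\Once(t.\;t\leq 0)$ gives a time-point with timestamp $\leq 0$. Timestamps are non-negative and non-decreasing, so the first timestamp $\tms_0$ is $0$. For (2), we unfold the first half of the $\HAlways$ body at a time-point $i>0$: $\neg\Prev(t'.\;\neg(t-t'\leq 1))$ states exactly $\tms_i-\tms_{i-1}\leq 1$, i.e.\ $\tmsdiff_i\leq 1$. For (3), the second half reads $\tmsdiff_i \geq 1 \iff \Tick\in\ev_{i-1}$ for every $i>0$. Together with the top-level conjunct $\Tick$ at position $|s|-1$, this shows that $\Tick\in\ev_j$ iff $j=|s|-1$ or $\tmsdiff_{j+1}\geq 1$. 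The last-timestamp part of (1) will be handled together with (4).

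\emph{Condition (4) and alignment.} By definition, $S=\{\mathrm{prj}_{\pi_i}(s)\}_i$, so (4) holds trivially once $S$ is a legitimate set of aligned traces. The remaining content is therefore that every non-empty projection ends at the timestamp $\tms(s)$. This is exactly what the conjuncts $\Once\TP_{\pi_i}\Imp\Once_{[0,0]}\TP_{\pi_i}$ provide. If $\pi_i$ occurs anywhere in $s$, then some $j\leq |s|-1$ with $\tms_j=\tms_{|s|-1}$ carries $\TP_{\pi_i}$, so the last event of $\mathrm{prj}_{\pi_i}(s)$ has timestamp $\tms(s)$. Hence all non-empty projections are aligned, and $\tms(S)=\tms(s)$ (empty projections are harmless by the convention for $\tms(S)$), which completes (1).

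\emph{Main obstacle.} The main difficulty is bookkeeping rather than mathematics. First, the projections inherit their timestamps from $s$, but they are stored as delays; the delays must therefore be recomputed as differences between consecutive retained timestamps, and the first delay of a projection must equal its absolute timestamp. With $\tms_0=0$, this ensures that absolute timestamps are preserved. Second, if $n$ differs from the number of distinct projections (duplicates or empty ones), the definition's requirement $n=|S|$ needs a brief comment; I would treat $S$ as the image set and note that the later use in Lemma~\ref{lemma:HyperpTPTL} only needs $\img(\Pi)$.
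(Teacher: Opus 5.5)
Your proof follows the paper's own argument: each conjunct of $\zeta$ is matched to conditions (1)--(4) of Definition~\ref{def:interleaving} in the same way. Your bookkeeping remarks point to real imprecisions that the paper's proof passes over: the delays of a projection must be recomputed, and $n=|S|$ can fail when projections coincide.

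One step is wrong, however, and the paper's proof has the same blind spot. Your parenthetical claim that empty projections are harmless for $\tms(S)=\tms(s)$ holds only if \emph{some} projection is non-empty. Suppose no $\TP_{\pi_i}$ occurs anywhere in $s$. Then $S$ consists of the empty trace alone, and $\tms(S)=0$ by convention. But $\zeta$ does not force $\tms(s)=0$. For example, take $n=1$ and $s=(\{\Tick\},0)\,(\{\Tick\},1)$, with delays $0$ and $1$. This $s$ satisfies every conjunct of $\zeta$: $\Tick$ holds at the end, timestamp $0$ occurs, consecutive timestamps differ by $1$ with $\Tick$ at the preceding point, and the $\TP_{\pi_1}$ conjunct holds vacuously. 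Yet its last timestamp is $1\neq\tms(S)$, so condition (1) fails.

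This case therefore cannot be argued away. The statement needs an extra hypothesis, for example that some $\TP_{\pi_i}$ occurs in $s$ or that $\tms(s)=0$. Alternatively, the convention for $\tms(S)$ must be changed. The issue is not cosmetic: the $\Longleftarrow$ direction of Lemma~\ref{lemma:HyperpTPTL} relies on exactly the equality $\tms(s)=\tms(\img(\Pi))$ that fails here.
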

\begin{proof}
  We prove the conditions of Definition~\ref{def:interleaving}:
  \begin{compactenum}[(1)]
    \item
      Because of the conjunct $\Once (t.\;t \leq 0)$, the first timestamp in $s$, which is a natural number, must be equal to zero.
      Because of the conjunct $\bigwedge_{i\in\{1,\dots,n\}} (\Once \TP_{\pi_i} \Imp \Once_{[0,0]} \TP_{\pi_i})$, all non-empty projections of $s$ (which correspond to those $\pi_i$ for which there is any event satisfying $\TP_{\pi_i}$ in $s$) end with the largest timestamp in $s$.
      In particular, the set of projections is aligned.
    \item $\zeta$ implies $\HAlways\bigl(t.\;\neg\Prev(t'.\;\neg (t-t'\leq 1))\bigr)$, which ensures that consecutive timestamps in $s$ differ by at most $1$.
    \item The last event in $s$ satisfies the predicate $\Tick$ because of the corresponding conjunct in $\zeta$.
      Moreover, $\zeta$ implies $\HAlways\bigl(\Prev(t'.\;t-t'\geq 1) \Iff \Prev \Tick\bigr)$, which states precisely that for all $i < |s|-1$, the event at time-point $i$ satisfies $\Tick$ iff $\tms_{i+1} - \tms_i \geq 1 \iff \tmsdiff_{i+1} \geq 1$.
    \item Trivial.
  \end{compactenum}
\end{proof}

\newcommand*{\Interleave}[1]{#1^\propto}
\begin{lemma}\label{lemma:ex interleaving}
  For every finite, non-empty set of aligned traces $S$, there exists an interleaving $\Interleave{S}$ of $S$ satisfying $\zeta$.
\end{lemma}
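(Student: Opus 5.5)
Given a finite, non-empty, aligned set $S = \{s_1,\dots,s_n\}$, I construct $\Interleave{S}$ explicitly and then check each conjunct of $\zeta$, together with the conditions of Definition~\ref{def:interleaving}. I assign $\pi_i \mapsto s_i$. Let $U$ be the finite set of absolute timestamps occurring in any $s_i$. I add to $U$ the value $0$, the value $\tms(S)$, and all \emph{filler} timestamps $\tms(S) - k$ for $k \in \Nat$ with $0 \le \tms(S)-k$. The last addition guarantees that consecutive elements of $U$ differ by at most $1$; adding the points $\lfloor u \rfloor$ for $u \in U$ would also do.

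I then list the elements of $U$ in increasing order. For each $u \in U$ I emit a block of events, all with absolute timestamp $u$:
\begin{itemize}
\item For each $i$, in order of increasing $i$, and for each time-point $j$ of $s_i$ with $\tms_j = u$ (in the order of $s_i$), one event with set $\{P_{\pi_i} \mid P \in \ev_j\} \cup \{\TP_{\pi_i}\}$.
\item If no trace has a time-point at $u$ (a filler timestamp), a single event with empty set.
\end{itemize}
Finally, I add $\Tick$ to the last event of a block exactly when the next block's timestamp is at least $u+1$, or when the block is the final one.

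Delays are the timestamp differences between consecutive events, so they are $0$ inside a block. The first timestamp is $0$ and the last is $\tms(S)$. Projecting onto $\TP_{\pi_i}$ returns exactly $s_i$: the order within $s_i$ is preserved, and the delays recompute correctly because absolute timestamps are preserved. This settles conditions (1), (2) and (4) of Definition~\ref{def:interleaving}.

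For $\zeta$, I go through the conjuncts at the last time-point:
\begin{itemize}
\item $\Tick$ holds by construction.
\item $\Once(t.\;t\le 0)$ is witnessed by the first event.
\item Consecutive timestamps differ by at most $1$, so $\neg\Prev(t'.\;\neg(t-t'\le1))$ holds everywhere.
\item $\Once\TP_{\pi_i}\Imp\Once_{[0,0]}\TP_{\pi_i}$ holds because every non-empty $s_i$ ends at $\tms(S)$, by alignment, so it has an event in the final block.
\end{itemize}

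The delicate conjunct is $\Prev(t'.\;t-t'\ge1)\Iff\Prev\Tick$, checked at every position $k$. If $k$ starts a new block, then $\Prev\Tick$ holds iff the delay to the previous block is at least $1$, which matches the construction. If $k$ is inside a block, the delay is $0$ and the previous event is not block-final, so it carries no $\Tick$; both sides are false. At position $0$ both sides are false.

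The main obstacle is the ambiguity in the definition of $\Tick$ in Definition~\ref{def:interleaving}(3) versus its informal reading ``last of a same-timestamp group''. The formula $\zeta$ only forces $\Tick$ where the next delay is $\ge 1$. So I must make sure the construction sets $\Tick$ exactly per condition (3) and not at every block end, and I will verify the boundary cases (the final event, and position $0$) carefully.
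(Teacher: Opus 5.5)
Your proof is correct. It follows the paper's plan: merge the traces block by block in order of timestamp, mark $\Tick$, and check the conditions. The layout of the blocks and the way $\zeta$ is discharged differ.

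\textbf{The paper's version.} It lets $u$ range over $[0,\tms(S)]$. Every block gets a leading empty event with delay $1$ (delay $0$ for the first block) and a trailing dedicated event $(\{\Tick\},0)$. It then simply observes that \emph{every} interleaving satisfies $\zeta$, which is essentially the converse of Lemma~\ref{lemma:zeta}. This construction is only coherent when timestamps are natural numbers: otherwise the range of $u$ is uncountable, and the unit delays would not reproduce the traces' actual timestamps. In that integer reading it gives a stronger invariant, namely a block for every integer in $[0,\tms(S)]$, each ending in $\Tick$. The $\HSince$ case of Lemma~\ref{lemma:red} later relies on this.

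\textbf{Your version.} You use the finitely many occurring timestamps plus unit-spaced fillers counted down from $\tms(S)$, take actual timestamp differences as delays, and place $\Tick$ exactly where condition (3) prescribes. This works verbatim for real-valued timestamps. On non-integer inputs some blocks carry no $\Tick$, which is fine here: the lemma only asks for an interleaving in the sense of the definition that satisfies $\zeta$. Your direct check of each conjunct of $\zeta$ is sound, including the position-$0$ and in-block cases of the $\Iff$ conjunct. You could shorten it by noting, as the paper does, that conditions (1)--(4) already imply $\zeta$.

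\textbf{One slip.} Your parenthetical alternative of adding $\lfloor u\rfloor$ for $u \in U$ does not bound the gaps. A single trace with one event at time $5.5$ gives $U=\{0,5,5.5\}$, with a gap of $5$. Keep the countdown fillers $\tms(S)-k$.
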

\begin{proof}
  Fix some enumeration $\{s^1 = (\ev^1_i,\tmsdiff^1_i),\dots,s^n = (\ev^n_i,\tmsdiff^n_i)\}$ of $S$.
  For every $u$ in the interval $[0,\tms(S)]$ and every $k \in \{1,\dots,n\}$, define the trace
  \[
    \Interleave{S}_{u,k} := (\{\TP_{\pi_k}\} \cup \{P_{\pi_k} \mid P \in \ev^k_x\},0)\cdots(\{\TP_{\pi_k}\} \cup \{P_{\pi_k} \mid P \in \ev^k_y\},0),
  \]
  where $[x,y]$ is the range of time-points in $s^k$ with absolute timestamp $u$.
  If there is no such time-point, we let $\Interleave{S}_{u,k}$ be the empty trace.
  Now define
  \[
    \Interleave{S}_u := (\{\}, \tmsdiff_u) \cdot \Interleave{S}_{u,1} \cdot \Interleave{S}_{u,2} \cdots \Interleave{S}_{u,n} \cdot (\{\Tick\},0),
  \]
  where $\tmsdiff_u = 0$ if $u = 0$ and $\tmsdiff_u = 1$ otherwise.
  Finally, $\Interleave{S}$ is the concatenation of all $\Interleave{S}_u$, where $u$ ranges from $0$ to $\tms(S)$.
  Clearly, this is a non-empty trace.

  It is not difficult to see that $\Interleave{S}$ fulfills the conditions of Definition~\ref{def:interleaving}.
  Moreover, every interleaving satisfies $\zeta$.
\end{proof}

\begin{lemma}\label{lemma:red aux}
  Let $s$ be an interleaving of $S$.
  Then $\Tuple{\mathrm{prj}_\pi(s),i,v} \models \phi$ iff $\Tuple{s,j,v} \models \Red_\pi(\phi)$, where $j$ is the time-point of the $i$-th event in $s$ satisfying $\TP_\pi$.
\end{lemma}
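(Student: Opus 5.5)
We prove the statement by structural induction on the pTPTL formula $\phi$, simultaneously for all $i$ and all valuations $v$. Throughout, we fix the monotone bijection $j(\cdot)$ that maps the $i$-th time-point of $\mathrm{prj}_\pi(s)$ to the time-point $j(i)$ of $s$ carrying the $i$-th occurrence of $\TP_\pi$. Two facts about interleavings are used repeatedly. First, by Definition~\ref{def:interleaving}(4) the event at $j(i)$ is the projected event, with $P \in \ev^{\pi}_i$ iff $P_\pi \in \ev_{j(i)}$. Second, the absolute timestamp of time-point $i$ in $\mathrm{prj}_\pi(s)$ equals $\tms_{j(i)}$ in $s$. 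The second fact holds because projection keeps exactly the $\TP_\pi$-events with their positions in time, so the relative delays of the projection sum to the same absolute times. This is worth stating explicitly as a preliminary observation, since the projection is defined on delays.

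The propositional cases are immediate. For $\True$ and $P$, the first fact gives $P\in\ev^\pi_i$ iff $P_\pi\in\ev_{j(i)}$. For $\lnot$ and $\lor$, the induction hypothesis applies directly. We read $\Red_\pi(\phi_1\lor\phi_2)$ as $\Red_\pi(\phi_1)\lor\Red_\pi(\phi_2)$, since the $\Red$ in the definition is evidently a typo. The clock constraint $t_1-t_2\le c$ does not depend on the position. For the freeze quantifier $t.\,\phi$, the second fact shows that $v(t\mapsto\tms_i)$ on the projection coincides with $v(t\mapsto\tms_{j(i)})$ on $s$, so the induction hypothesis applies with the same updated valuation.

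The temporal cases carry the real content. For $\Prev\phi$: the formula $\Red_\pi(\Prev\phi)$ holds at $j(i)$ iff there is a predecessor position $j(i)-1$ at which $\lnot\TP_\pi\Since(\TP_\pi\land\Red_\pi(\phi))$ holds. That in turn holds iff the most recent $\TP_\pi$-position at or before $j(i)-1$ exists and satisfies $\Red_\pi(\phi)$. This position is exactly $j(i-1)$ when $i>0$, and it does not exist when $i=0$, because $j$ enumerates all $\TP_\pi$ positions in order. The induction hypothesis at $i-1$ then finishes this case. For $\phi_1\Since\phi_2$: a witness $i'\le i$ in the projection maps to the witness $j(i')\le j(i)$ in $s$. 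Non-$\TP_\pi$ positions strictly between $j(i')$ and $j(i)$ satisfy $\TP_\pi\Imp\Red_\pi(\phi_1)$ vacuously, and $\TP_\pi$ positions there are of the form $j(k)$ with $i'<k\le i$, which satisfy $\Red_\pi(\phi_1)$ by the induction hypothesis. Conversely, a witness position in $s$ must carry $\TP_\pi$, so it equals $j(i')$ for some $i'$, and the same argument runs backwards.

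The main obstacle is not the induction itself but the bookkeeping of the second fact. We must make sure that freezing $t$ at $j(i)$ in $s$ yields the same absolute time as in the projection, despite the delays being recomputed. We must also account for the extra non-$\TP_\pi$ events of $s$, such as the $\Tick$ and padding events, which may share timestamps. We would settle this first with a short lemma, $\tms^{\mathrm{prj}_\pi(s)}_i=\tms^{s}_{j(i)}$, proved by induction on $i$ and using that the delay of a projected event is the sum of the delays of the skipped events in $s$. With this lemma in place, every case is routine.
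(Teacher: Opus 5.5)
Your proof is correct and is essentially the paper's: structural induction on $\phi$ generalizing over $i$ and $v$, handling $\Prev$ via the most recent $\mathit{tp}_\pi$-position before $j(i)$, and handling $\Since$ via the monotone correspondence between projected time-points and $\mathit{tp}_\pi$-marked time-points of $s$. The timestamp lemma $\tms^{\mathrm{prj}_\pi(s)}_i=\tms^{s}_{j(i)}$ you isolate is exactly what the paper uses tacitly in the freeze case (``$\tms_j = \tms^\pi_i$ by choice of $j$''), and you are right that it requires reading the projection as preserving absolute timestamps rather than copying the delays $\tmsdiff_i$ verbatim.
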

\begin{proof}
  Let $s = (\ev,\tmsdiff)$ and $\mathrm{prj}_\pi(s) = (\ev^\pi,\tmsdiff^\pi)$. 
  By induction on the structure of $\phi$, generalizing $i$ and $v$.
  \begin{itemize}
    \item $\phi \equiv \Red_\pi(\phi) \equiv \True$: Trivial.
    \item $\phi \equiv \Red_\pi(\phi) \equiv t_1 - t_2 \leq c$: Trivial.
    \item $\phi \equiv \Red_\pi(\phi) \equiv t \leq c$: Trivial.
    \item $\phi \equiv P$, hence $\Red_\pi(\phi) \equiv P_\pi$:
      Observe that $\ev^\pi_i = \{Q \mid Q_\pi \in \ev_j, Q \neq \TP\}$.
      Hence
      \[
        \Tuple{\mathrm{prj}_\pi(s),i,v} \models P \iff P \in \ev^\pi_i \iff P_\pi \in \ev_j \iff \Tuple{s,j,v} \models P_\pi.
      \]
    \item $\phi \equiv \lnot \phi'$, hence $\Red_\pi(\phi) \equiv \lnot \Red_\pi(\phi)$:
      Follows directly from the induction hypothesis (I.H.).
    \item $\phi \equiv \phi_1 \lor \phi_2$, hence $\Red_\pi(\phi) \equiv \Red_\pi(\phi_1) \lor \Red_\pi(\phi_2)$:
      Follows directly from the I.H.
    \item $\phi \equiv \Prev \phi'$, hence $\Red_\pi(\phi) \equiv \Prev \left( \lnot \TP_\pi \Since (\TP_\pi \land \Red_\pi(\phi'))\right)$:
      Suppose that $\Tuple{\mathrm{prj}_\pi(s),i,v} \models \Prev\phi'$.
      Therefore, $i > 0$ and $\Tuple{\mathrm{prj}_\pi(s),i-1,v} \models \phi'$.
      Using the I.H., we have $\Tuple{s,j',v} \models \Red_\pi(\phi')$, where $j'$ is the time-point of the $(i-1)$-th event in $s$ satisfying $\TP_\pi$.
      It follows that $j' < j$, $\Tuple{s,j',v} \models \TP_\pi$, and $\Tuple{s,k,v} \not\models \TP_\pi$ for all $k \in \{j'+1,\dots,j-1\}$.
      Hence $\Tuple{s,j,v} \models \Red_\pi(\phi)$.

      Now consider the converse direction.
      If $\Tuple{s,j,v} \models \Red_\pi(\phi)$, we know that there exists $j' < j$ where $\Tuple{s,j',v} \models \TP_\pi \land \Red_\pi(\phi')$ and, for all $k \in \{j'+1,\dots,j-1\}$, $\Tuple{s,k,v} \not\models \TP_\pi$.
      Clearly, $j'$ is the time-point of the $(i-1)$-th event in $s$ satisfying $\TP_\pi$.
      In particular, $j$ must be positive.
      Using the I.H., we have $\Tuple{s,i-1,v} \models \phi'$, which concludes the case.
    \item $\phi \equiv \phi_1 \Since \phi_2$, hence $\Red_\pi(\phi) \equiv (\TP_\pi \Imp \Red_\pi(\phi_1)) \Since (\TP_\pi \land \Red_\pi(\phi_2))$:
      Let us write $\pi(i)$ for the time-point in $s$ corresponding to time-point $i$ in $\mathrm{prj}_\pi(s)$, which is a monotone map.
      We reason
      \begin{align*}
        &\Tuple{\mathrm{prj}_\pi(s),i,v} \models \phi_1 \Since \phi_2 \\
        \iff &\exists i'\leq i.\; \Tuple{\mathrm{prj}_\pi(s),i',v} \models \phi_2 \land \forall k\in\{i'+1,\dots,i\}.\;\Tuple{\mathrm{prj}_\pi(s),k,v} \models \phi_1 \\
        \stackrel{\text{I.H.}}{\iff} &\exists i'\leq i.\; \Tuple{s,\pi(i'),v} \models \Red_\pi(\phi_2) \land \forall k\in\{i'+1,\dots,i\}.\;\Tuple{s,\pi(k),v} \models \Red_\pi(\phi_1) \\
        \stackrel{(*)}{\iff} &\begin{aligned}[t]
          \exists i'\leq i.\; &\Tuple{s,\pi(i'),v} \models \Red_\pi(\phi_2) \land{}\\
          &\forall k\in\{\pi(i')+1,\dots,j\}.\;\Tuple{s,k,v} \models (\TP_\pi \Imp \Red_\pi(\phi_1))
        \end{aligned}\\
        \stackrel{(**)}{\iff} &\begin{aligned}[t]
          \exists j'\leq j.\; &\Tuple{s,j',v} \models (\TP_\pi \land \Red_\pi(\phi_2)) \land{}\\
          &\forall k\in\{j'+1,\dots,j\}.\;\Tuple{s,k,v} \models (\TP_\pi \Imp \Red_\pi(\phi_1))
        \end{aligned}\\
        \iff &\Tuple{s,j,v} \models (\TP_\pi \Imp \Red_\pi(\phi_1)) \Since (\TP_\pi \land \Red_\pi(\phi_2)).
      \end{align*}
      The step $(*)$ is justified because the set $\{\pi(k) \mid k \in \{i'+1,\dots,i\}\}$ is equal to the set $\{k \in \{\pi(i')+1,\dots,\pi(i)\} \mid \TP_\pi \in \ev_k\}$, and $\pi(i) = j$.
      The step $(**)$ is similarly justified because $\TP_\pi \in \ev_{\pi(i')}$, and whenever $\TP_\pi \in \ev_{j'}$ for some $j' \leq \pi(i)$, there exists some $i' \leq i$ with $j' = \pi(i')$.
    \item $\phi \equiv t.\;\phi'$, hence $\Red_\pi(\phi) \equiv t.\;\Red_\pi(\phi)$:
      Follows from the I.H.\ by instantiating $v$ with $v(t \mapsto \tms_j)$, where $\tms_j = \tms^\pi_i$ by choice of $j$.
  \end{itemize}
\end{proof}

\begin{lemma}\label{lemma:red}
  Let $\chi$ be a HyperpTPTL formula over trace variables $\{\pi_1,\dots,\pi_n\}$ and without trace quantifiers.
  Let $s = (\ev,\tmsdiff)$ be an interleaving of $\img(\Pi) = \{\Pi(\pi_1),\dots,\Pi(\pi_n)\}$.
  Moreover, assume that $\Tuple{s,i,v} \models \Tick$.
  Then $\Tuple{\img(\Pi),\Pi,\tms_i,v} \hypermodels \chi$ iff $\Tuple{s,i,v} \models \Red(\chi)$.
\end{lemma}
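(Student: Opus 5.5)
We prove the statement by structural induction on $\chi$, generalizing over $i$ (restricted to time-points satisfying $\Tick$) and over $v$. The key structural fact is that in an interleaving $s$, the $\Tick$-marked time-points are in bijection with the timestamps $u \in [0,\tms(\img(\Pi))]$ at which the outer semantics can change. Each $\Tick$ point is the last point of its timestamp block. By condition (2) of Definition~\ref{def:interleaving} consecutive timestamps differ by at most $1$, and by condition (3) a new block starts exactly when the delay is at least $1$. In the construction of Lemma~\ref{lemma:ex interleaving}, every integer timestamp $u$ gets its own block ending in $\Tick$. In general, however, real timestamps may appear. We will therefore argue only about the timestamps at which some time-point of $s$ lies, and these are exactly the blocks.

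The base and Boolean cases are immediate: $\HTrue$, negation, disjunction, and constraints $t_1 - t_2 \leq c$ all commute with $\Red$. For $t.\;\chi$ we instantiate the induction hypothesis with $v(t \mapsto \tms_i)$, noting that the outer semantics binds $t$ to $u = \tms_i$.

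For $\HLift{\phi}{\pi}$, the outer semantics demands that $\Pi(\pi)$ has a greatest time-point $i'$ with timestamp $\tms_i$ and that $\phi$ holds there. Since all points with timestamp $\tms_i$ form a contiguous block ending at the $\Tick$-point $i$, the formula $(\lnot\TP_\pi\land\lnot\Tick)\Since(\TP_\pi\land\Red_\pi(\phi))$ evaluated at $i$ picks out exactly the last $\TP_\pi$-point $j$ in that block. The $\lnot\Tick$ guard prevents looking into earlier blocks; note the $\Since$ allows $j=i$ only if $i$ itself carries $\TP_\pi$, which is harmless. That point $j$ corresponds, via the projection of condition (4), to $i'$. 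Lemma~\ref{lemma:red aux} then transfers satisfaction of $\phi$ from $\mathrm{prj}_\pi(s)$ at $i'$ to $\Red_\pi(\phi)$ at $j$. If no such $i'$ exists, the $\Since$ fails as well.

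The $\HSince$ case is the main obstacle. The outer semantics quantifies over all reals $u' \leq u$ and $u''\in(u',u]$, while $\Red$ only inspects $\Tick$-points. I would first prove an auxiliary claim: for any $u'$ strictly between two consecutive block timestamps, or any $u'$ not occurring in $s$, the truth of every quantifier-free $\chi$ at $u'$ coincides with its truth at the preceding occurring timestamp $\tms$, up to the time-variable bindings. This claim would be proved by induction on $\chi$; the $\HLift{\cdot}{\pi}$ atoms are false at non-occurring timestamps. That observation suggests the claim needs refining: a non-occurring $u'$ makes all lifted atoms false but can change the constraints from $t.$-bindings. I would therefore handle it by showing that witnesses $u'$ for $\chi_2$ can be chosen among occurring timestamps, and that universal requirements over $(u',u]$ at non-occurring points are implied by, or irrelevant relative to, those at occurring points. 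This is where I expect genuine difficulty, and I would check whether the paper's intended interleavings (timestamps restricted so that each unit step carries a $\Tick$) make every relevant $u$ occur. With that in hand, $(\Tick\Imp\Red(\chi_1))\Since(\Tick\land\Red(\chi_2))$ at $i$ matches the outer $\HSince$ via the induction hypothesis applied at each $\Tick$-point.
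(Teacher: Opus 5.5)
Your induction skeleton (generalizing $i$ over $\mathit{ck}$-points and $v$, the Boolean and freeze cases, and Lemma~\ref{lemma:red aux} for lifted atoms) matches the paper's. However, the case $\chi_1\mathbin{\textsf{\bfseries S}}\chi_2$, the only one with real content, is left open, and the route you sketch cannot be completed. If the outer $u'$ and $u''$ range over all reals, the equivalence fails. So no argument that witnesses can be moved to occurring timestamps can exist.

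For instance, let $\Pi(\pi)$ be the trace with two empty events at timestamps $0$ and $1$, and let $s=(\{\mathit{tp}_\pi,\mathit{ck}\},0)(\{\mathit{tp}_\pi,\mathit{ck}\},1)$, which meets all four conditions of Definition~\ref{def:interleaving}. Let $\chi=t.\,(\HTrue\mathbin{\textsf{\bfseries S}}\,t'.\,t-t'\in(0,1))$. At $u=1$ the outer semantics is satisfied via $u'=1/2$. In contrast, $\rho(\chi)$ at the last point of $s$ can only use $\mathit{ck}$-points, whose timestamps are $0$ and $1$, and is false. Your claim that, by condition (3), a new block starts exactly when the delay is at least $1$ also fails once delays in $(0,1)$ occur. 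Then $\mathit{ck}$-points need not correspond to blocks.

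The missing idea is the discretisation on which the paper's proof rests. Timestamps are natural numbers (as the proofs of Lemmas~\ref{lemma:zeta} and~\ref{lemma:ex interleaving} also assume), and the outer since quantifies over $u'\in\{u+1,\dots,\tms_i\}$. Then:
\begin{itemize}
\item conditions (1)--(3) force every $u\in\{0,\dots,\tms_i\}$ to occur in $s$;
\item the last point $j(u)$ with timestamp $u$ is exactly a $\mathit{ck}$-point;
\item $u\mapsto j(u)$ is a monotone bijection onto the $\mathit{ck}$-points $\le i$, with $j(\tms_i)=i$.
\end{itemize}
Applying the induction hypothesis at each $j(u)$ turns the outer since into $(\mathit{ck}\Imp\rho(\chi_1))\Since(\mathit{ck}\land\rho(\chi_2))$, since the $\mathit{ck}$-guards make all other points irrelevant. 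Your final sentence gestures at this, but you must commit to that reading and carry out the bijection argument.

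Separately, your $\{\phi\}_\pi$ case claims the $\lnot\mathit{ck}$ guard merely stops the search at the previous block, and this is wrong. The left operand of $\Since$ must also hold at $k=i$, which carries $\mathit{ck}$. So the formula holds only with witness $j=i$, and it is always false at the $\mathit{ck}$-points of the interleaving built in Lemma~\ref{lemma:ex interleaving}, where $\mathit{ck}$ sits on a separate event. The paper's write-up of this case has the same slip. It is repaired by translating $\{\phi\}_\pi$ to $(\lnot\mathit{tp}_\pi\land\lnot\Prev\mathit{ck})\Since(\mathit{tp}_\pi\land\rho_\pi(\phi))$.
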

\begin{proof}
  We write $\Tuple{\Pi,u,v} \hypermodels \chi$ instead of $\Tuple{\img(\Pi),\Pi,u,v} \hypermodels \chi$ for brevity.
  Proof by induction on the structure of $\chi$, generalizing $i$, $j$, and $v$.
  \begin{itemize}
    \item $\chi \equiv \Red(\chi) \equiv \HTrue$: Trivial.
    \item $\chi \equiv \Red(\chi) \equiv t_1 - t_2 \leq c$: Trivial.
    \item $\chi \equiv \HLift{\phi}{\pi}$, hence $\Red(\chi) \equiv (\lnot \TP_\pi \land \lnot \Tick) \Since (\TP_\pi \land \Red_\pi(\phi))$:
      Suppose that $\Tuple{\Pi,\tms_i,v} \hypermodels \HLift{\phi}{\pi}$.
      Thus $\Tuple{\Pi(\pi),j,v} \models \phi$, where $j$ is the greatest time-point in $\Pi(\pi)$ with $\tms_j = \tms_i$.
      We have $\Pi(\pi) = \mathrm{prj}_\pi(s)$ because $s$ is an interleaving of $\img(\Pi)$.
      Using Lemma~\ref{lemma:red aux}, it follows that $\Tuple{s,k,v} \models \Red_\pi(\phi)$, where $k$ is the $j$-th time-point in $s$ satisfying $\TP_\pi$.
      Note that $\tms_k = \tms_j$.
      Moreover, there cannot be any $k' > k$ with the same absolute timestamp and satisfying $\TP_\pi$ or $\Tick$, because this would contradict the fact that $j$ was maximal in $\mathrm{prj}_\pi(s)$.
      Since $i$ is the greatest time-point in $s$ with timestamp $\tms_i = \tms_k$ (we assumed $\Tuple{s,i,v} \models \Tick$), we have shown that $\Tuple{s,i,v} \models (\lnot \TP_\pi \land \lnot \Tick) \Since (\TP_\pi \land \Red_\pi(\phi))$.

      For the converse direction, suppose that $\Tuple{s,i,v} \models \Red(\chi)$.
      Hence we obtain $j \leq i$ where $\tms_j = \tms_i$, $\Tuple{s,j,v} \models \TP_\pi \land \Red_\pi(\phi)$, and there is no time-point $k \in \{j+1,\dots,i\}$ satisfying $\TP_\pi$ or $\Tick$.
      Let $\pi(j)$ be the time-point in $\mathrm{prj}_\pi(s) = \Pi(\pi)$ corresponding to $j$.
      By a similar argument to the one before, this is the greatest time-point in the projection with absolute timestamp $\tms_i$.
      Moreover, Lemma~\ref{lemma:red aux} implies $\Tuple{\mathrm{prj}_\pi(s),\pi(j),v} \models \phi$.
      Hence $\Tuple{\Pi,\tms_i,v} \hypermodels \HLift{\phi}{\pi}$.
    \item $\chi \equiv \HNeg \chi'$, hence $\Red(\chi) \equiv \lnot \Red(\chi)$:
      Follows directly from the induction hypothesis (I.H.).
    \item $\chi \equiv \chi_1 \HDisj \chi_2$, hence $\Red(\chi) \equiv \Red(\chi_1) \lor \Red(\chi_2)$:
      Follows directly from the I.H.
    \item $\chi \equiv \chi_1 \HSince \chi_2$, hence $\Red(\chi) \equiv (\Tick \Imp \Red(\chi_1)) \Since (\Tick \land \Red(\chi_2))$:
      We write $j(u)$ for the greatest time-point in $s$ where $\tms_{j(u)} = u$.
      By the definition of interleavings, such a time-point exists for all $u \leq \tms_i$.
      Specifically, $j$ is a monotone map, and $j(\tms_i) = i$ since $\Tuple{s,i,v} \models \Tick$.
      Conversely, $\Tuple{s,j(u),v} \models\Tick$ for all $u \leq \tms_i$.

      We reason
      \begin{align*}
        &\Tuple{\Pi,\tms_i,v} \hypermodels \chi_1 \HSince \chi_2 \\
        \iff &\exists u\leq \tms_i.\; \Tuple{\Pi,u,v} \hypermodels \chi_2 \land \forall u'\in\{u+1,\dots,\tms_i\}.\;\Tuple{\Pi,u',v} \hypermodels \chi_1 \\
        \iff&
          \exists u\leq \tms_i.\; \Tuple{\Pi,\tms_{j(u)},v} \hypermodels \chi_2 \land
          \forall u'\in\{u+1,\dots,\tms_i\}.\;\Tuple{\Pi,\tms_{j(u')},v} \hypermodels \chi_1
        \\
        \stackrel{\text{I.H.}}{\iff} &\exists u\leq \tms_i.\; \Tuple{s,j(u),v} \models \Red(\chi_2) \land \forall u'\in\{u+1,\dots,\tms_i\}.\;\Tuple{s,j(u'),v} \models \Red(\chi_1) \\
        \iff &\begin{aligned}[t]
          \exists u\leq \tms_i.\; &\Tuple{s,j(u),v} \models \Red(\chi_2) \land{}\\
          &\forall k\in\{j(u)+1,\dots,i\}.\;\Tuple{s,k,v} \models (\Tick \Imp \Red(\chi_1))
        \end{aligned}\\
        \iff &\begin{aligned}[t]
          \exists i'\leq i.\; &\Tuple{s,i',v} \models (\Tick \land \Red(\chi_2)) \land{}\\
          &\forall k\in\{i'+1,\dots,i\}.\;\Tuple{s,k,v} \models (\Tick \Imp \Red(\chi_1))
        \end{aligned}\\
        \iff &\Tuple{s,i,v} \models (\Tick \Imp \Red(\chi_1)) \Since (\Tick \land \Red(\chi_2)).
      \end{align*}
    \item $\chi \equiv t.\;\chi'$, hence $\Red(\chi) \equiv t.\;\Red(\chi)$:
      Follows from the I.H.\ by instantiating $v$ with $v(t \mapsto \tms_i)$.
  \end{itemize}
\end{proof}

\begin{customlemma}{\ref{lemma:HyperpTPTL}}
  $\chi$ is satisfiable iff $\zeta \land \Red(\tilde\chi)$ is satisfiable. 
  The size of $\zeta \land \Red(\tilde\chi)$ is $O(n^m \cdot |\chi'|)$.
\end{customlemma}
\begin{proof}
  \fbox{$\Longrightarrow$}
  Suppose that $S \hypermodels \chi$.
  Using Lemma~\ref{lemma:hsat ea iff}, obtain an assignment $\Pi$ of traces to the trace variables $\{\pi_1,\dots,\pi_n\}$ such that $\img(\Pi)$ is aligned and \[\Tuple{\img(\Pi),\Pi,\tms(\img(\Pi)),\emptyset} \hypermodels \tilde\chi.\]
  Since $\img(\Pi)$ is finite, we can use Lemma~\ref{lemma:ex interleaving} to obtain an interleaving $\Interleave{\img(\Pi)} = (\ev,\tmsdiff)$ satisfying $\zeta$.
  Let $\ell = |\Interleave{\img(\Pi)}|-1$; note that $\tms_\ell = \tms(\img(\Pi))$ and $\Tuple{\Interleave{\img(\Pi)},\ell,\emptyset} \models \Tick$ by Definition~\ref{def:interleaving}.
  Using Lemma~\ref{lemma:red},
  \[
    \Tuple{\img(\Pi),\Pi,\tms(\img(\Pi)),\emptyset} \hypermodels \tilde\chi
    \iff \Tuple{\Interleave{\img(\Pi)},\ell,\emptyset} \models \Red(\tilde\chi),
  \]
  i.e., $\Interleave{\img(\Pi)}$ satisfies $\Red(\tilde\chi)$ as well.

  \fbox{$\Longleftarrow$}
  Suppose that $s$ is a non-empty trace such that $\Tuple{s,|s|-1,\emptyset} \models \zeta \land \Red(\tilde\chi)$.
  By Lemma~\ref{lemma:zeta}, the trace is an interleaving of its projections under $\{\pi_1,\dots,\pi_n\}$.
  Let $\Pi$ map $\pi_i$ to $\mathrm{prj}_{\pi_i}(s)$ for all $i \in \{1,\dots,n\}$.
  Note that all of these traces are aligned by Definition~\ref{def:interleaving}.
  As before, the last timestamp in $s$ is equal to $\tms(\img(\Pi))$ and $\Tuple{s,|s|-1,\emptyset} \models \Tick$.
  Hence $\Tuple{\img(\Pi),\Pi,\tms(\img(\Pi)),\emptyset} \hypermodels \tilde\chi$ and $\chi$ is satisfiable using Lemmata~\ref{lemma:red} and~\ref{lemma:hsat ea iff}.

  Size:
  The size of $\tilde\chi$ is $O(n^m \cdot |\chi'|)$.
  Observe that the translation $\Red$ adds a bounded number of symbols for every symbol in $\tilde\chi$.
  Hence, $|\Red(\tilde\chi)|$ is linear in $|\tilde\chi|$ and hence also in $O(n^m \cdot |\chi'|)$.
  Finally, the size of $\zeta$ is $O(n)$.
\end{proof}

\begin{customlemma}{\ref{lemma:1-HyperpTPTL}}
  If $\chi$ is in 1-HyperpTPTL, then $\zeta \land \Red(\tilde\chi)$ is in 1-pTPTL.
\end{customlemma}
\begin{proof}
  Assume that $\chi$ is in 1-HyperpTPTL, i.e., $\chi$ uses only the time variables $t$ and $t'$.
  Clearly, $\tilde{\chi}$ is in 1-HyperpTPTL too, since it contains exactly the same time
  variables as $\chi$. The transformation $\rho$ does not introduce new time variables,
  and $\zeta$ uses only $t$ and $t'$, hence the conclusion.
\end{proof}

\begin{customcorollary}{\ref{corollary:HyperpLTL sat}}
  Suppose that $\chi$ is a HyperpLTL formula without $\HSince$ operators.
  Then $\chi$ is satisfiable iff $\Red'(\tilde\chi)$, which is a pLTL formula of size $O(n^m \cdot |\chi'|)$, is satisfiable.
\end{customcorollary}
\begin{proof}
  A~straightforward induction shows that $\Red'(\tilde\chi)$ is in pLTL.
  The size bound follows by the same argument as in the proof of Lemma~\ref{lemma:HyperpTPTL}.

  HyperpLTL formulae without $\HSince$ are invariant under transformations of the traces' timestamps, as long as monotonicity per trace and the alignment of all traces are preserved.
  We therefore assume without loss of generality that all timestamps are $0$ in the remainder of this proof.
  (The semantics of HyperpLTL is formally defined on timestamps, not time-points, and hence we continue to assume that traces are equipped with timestamps.)
  We also omit the mapping $v$ of time variables to timestamps in the semantics as it is not used by the fragment.

  It remains to show the correctness of the reduction.
  To this end, we introduce a simplified version of interleaving that omits all constraints related to timestamps, as these are irrelevant for non-metric formulae.
  Specifically, we call $s$ a \emph{non-metric interleaving} of $S$ if it satisfies condition (4) of Definition~\ref{def:interleaving}.
  We now justify the analogues of the preceding lemmata:
  \begin{itemize}
    \item Every non-empty trace is trivially a non-metric interleaving of its projections (cf.\ Lemma~\ref{lemma:zeta}).
    \item For every finite, non-empty set $S$ of traces, there exists a non-metric interleaving $\Interleave{S}$.
      This is a direct corollary of Lemma~\ref{lemma:ex interleaving}.
    \item Let $s$ be a non-metric interleaving of $S$.
      Then $\Tuple{\mathrm{prj}_\pi(s),i} \models \phi$ iff $\Tuple{s,j} \models \Red_\pi(\phi)$, where $j$ is the time-point of the $i$-th event in $s$ satisfying $\TP_\pi$.
      All relevant cases of Lemma~\ref{lemma:red aux}'s proof (i.e., excluding atoms $t_1 - t_2 \leq c$ and $t \leq c$, as well as freeze quantifiers) apply to non-metric interleavings as they only depend on the relationship between the interleaving and its projections.
    \item 
      Let $\chi$ be a HyperpLTL formula over trace variables $\{\pi_1,\dots,\pi_n\}$ and without trace quantifiers or $\HSince$.
      Let $s$ be a non-metric interleaving of $\img(\Pi)$.
      Then $\Tuple{\img(\Pi),\Pi,0} \hypermodels \chi$ iff $\Tuple{s,|s|-1} \models \Red'(\chi)$.
      The only case out of the relevant ones that is different from Lemma~\ref{lemma:red} is the one for $\chi \equiv \HLift{\phi}{\pi}$.

      Suppose that $\Tuple{\img(\Pi),\Pi,0} \hypermodels \HLift{\phi}{\pi}$.
      Thus $\Tuple{\Pi(\pi),j} \models \phi$, where $j$ is the last time-point in $\Pi(\pi)$.
      We have $\Pi(\pi) = \mathrm{prj}_\pi(s)$ because $s$ is a (non-metric) interleaving of $\img(\Pi)$.
      Using the modified Lemma~\ref{lemma:red aux} from above, it follows that $\Tuple{s,k} \models \Red_\pi(\phi)$, where $k$ is the $j$-th time-point in $s$ satisfying $\TP_\pi$.
      Moreover, there cannot be any $k' > k$ satisfying $\TP_\pi$, because this would contradict the fact that $j$ was maximal in $\mathrm{prj}_\pi(s)$.
      It follows that $\Tuple{s,|s|-1} \models \lnot \TP_\pi \Since (\TP_\pi \land \Red_\pi(\phi))$.

      For the converse direction, suppose that $\Tuple{s,|s|-1} \models \Red'(\chi)$.
      Hence we obtain $j \leq |s|-1$ where $\Tuple{s,j} \models \TP_\pi \land \Red_\pi(\phi)$, and there is no time-point $k \in \{j+1,\dots,|s|-1\}$ satisfying $\TP_\pi$.
      Let $\pi(j)$ be the time-point in $\mathrm{prj}_\pi(s) = \Pi(\pi)$ corresponding to $j$.
      By a similar argument as before, this is the last time-point in the projection, and $\Tuple{\mathrm{prj}_\pi(s),\pi(j)} \models \phi$.
      Hence $\Tuple{\img(\Pi),\Pi,0} \hypermodels \HLift{\phi}{\pi}$.
  \end{itemize}

  The rest of the proof is similar to Lemma~\ref{lemma:HyperpTPTL}, except that $\zeta$ and $\Tick$ are not needed for non-metric interleavings.
\end{proof}

\begin{customtheorem}{\ref{theorem:pc pmtl real-time lb}}
  \textsc{PolicyChange} for pMTL over real-time semantics has a non-primitive recursive lower bound.
\end{customtheorem}
\begin{proof}
  
  Assume an arbitrary pMTL formula $\varphi$ as the instance for the satisfiability problem. We then consider the \textsc{PolicyChange} instance to be $\Tuple{ \Prev\varphi , \Prev(\varphi \wedge A)}$ where $A$ is a fresh proposition not present in $\varphi$. As established in Theorem \ref{thm:upper}, \textsc{PolicyChange} yields $1$ iff
  \begin{align*}
    \BE{(\Prev\varphi)} \Imp \BE{ (\Prev(\varphi \wedge A))} &\equiv (\BE{\varphi} \wedge \tforall\pi,\pi '.~\HLift{\varphi}{\pi}\Imp\HLift{\varphi}{\pi '}) \\
                                                                                        &\Imp (\BE{\varphi} \wedge \BE{A} \wedge \tforall\pi,\pi '.~\HLift{\varphi \wedge A}{\pi}\Imp\HLift{\varphi \wedge A}{\pi '})\\
                                                                                        &\equiv (\BE{\varphi} \wedge \tforall\pi,\pi '.~\neg\HLift{\varphi}{\pi}\vee\HLift{\varphi}{\pi '})\Imp \tforall\pi,\pi '.~\neg\HLift{\varphi \wedge A}{\pi}\vee\HLift{\varphi \wedge A}{\pi '}
  \end{align*}
  is valid.
  If $\varphi$ is unsatisfiable, then the above formula would be semantically equivalent to $\BE{\varphi}\Imp\True$, which is valid. If $\varphi$ is satisfiable, we must show that
  \begin{align*}
    \neg(\BE{(\Prev\varphi)} \Imp \BE{ (\Prev(\varphi \wedge A))}) &\equiv \BE{\varphi} \wedge (\tforall \pi ,\pi ' .~\neg\{\varphi\}_{\pi} \vee \{\varphi\}_{\pi '})\\
                                                                                              &\wedge (\exists\pi ,\pi ' .~ \{\varphi\wedge A\}_{\pi} \wedge \neg\{\varphi \wedge A\}_{\pi '})\\
                                                                                              &:= \chi_{\varphi}
  \end{align*}
  is satisfiable.
  Consider the trace $s_0 = (\sigma,\tmsdiff)$ such that $\Tuple{s_0, |s_0|-1} \models \varphi$. Without loss of generality, we may assume that $s_0$ only contains predicates occurring in $\varphi$. We construct the trace $s_0' = (\sigma',\tmsdiff)$ where $\sigma' = \sigma_0\ldots\sigma_{|s|-1}(\sigma_{|s|}\cup \{A\})$. Since $A$ is a fresh variable not occurring in $\varphi$, it immediately follows that for any arbitrary pMTL formula $\varphi'$ which does not contain the proposition $A$,
  \begin{align}
    s_0\models \varphi' ~\text{iff}~ s_0'\models \varphi' \label{eq:equisat}
  \end{align}
  or equivalently, $\{s_0,s_0'\}\models \tforall\pi,\pi' .~\{\varphi'\}_{\pi}\HIff\{\varphi'\}_{\pi'}$.

  It remains to show that $\{s_0,s_0'\}\models \BE{\varphi}$.
  We proceed by structural induction on $\varphi$.
  \begin{itemize}
  \item $\varphi \equiv \True$ or $\varphi \equiv P$: It trivially holds.
  \item $\varphi \equiv \neg\varphi_1$ or $\varphi \equiv \BE{(\Prev_{\emptyset}\phi)}$: Since $\BE{\varphi} = \BE{\varphi_1}$, it follows from our induction hypothesis.
  \item $\varphi \equiv \varphi_1 \vee \varphi_2$ or $\varphi \equiv \BE{(\phi_1 \Since_{\emptyset} \phi_2)}$: By our hypothesis, $\{s_0,s_0'\}\models \BE{\varphi_1}$ and $\{s_0,s_0'\}\hypermodels \BE{\varphi_2}$. Hence, $\{s_0,s_0'\}\models \BE{\varphi_1} \wedge \BE{\varphi_2}$.
  \item $\varphi \equiv \Prev_I \varphi_1$ if $I\neq \emptyset$: Since $\varphi_1$ is a subformula of $\varphi$, using Equation \ref{eq:equisat} and our hypothesis, it follows that $\{s_0,s_0'\}\models \BE{\varphi_1} \wedge  \tforall\pi,\pi' .~\{\varphi_1\}_{\pi}\HIff\{\varphi_1\}_{\pi'}$.
  \item $\varphi \equiv \BE{(\phi_1 \Since_{[a,b]} \phi_2)}$: Since the formula $\varphi_1\Since_{[a,b]-d}\varphi_2$ does not contain $A$, for $d \in \{0,\ldots,b\}$, once again using Equation \ref{eq:equisat} and our hypothesis, it follows that
    \begin{align*}
      \{s_0,s_0'\} &\models \begin{aligned}[t]&\BE{\phi_1} \HConj \BE{\phi_2} \HConj{}\\[-\jot] &\quad\tforall \pi,\pi'.\bigwedge_{d\in[0,b]} \left(\{\phi_1 \Since_{[a,b]-d} \phi_2\}_\pi \HIff \{\phi_1 \Since_{[a,b]-d} \phi_2\}_{\pi'}\right)\end{aligned}\\
      &\equiv \BE{\phi_1} \HConj \BE{\phi_2} \HConj \Phi_{\varphi_1, \varphi_2, [0, b]} \quad(\text{\Cref{lemma:rewriting}})
    \end{align*}
  \item $\varphi \equiv \BE{(\phi_1 \Since_{[a,\infty)} \phi_2)}$: Similar.
  \end{itemize}
\end{proof}

\section{Policy Change for Non-Treelike Monitors}

\begin{customtheorem}{\ref{thm:pcstar pltl}}
  \textsc{PolicyChange}* for pLTL is in EXPSPACE.
\end{customtheorem}
\begin{proof}
  Let $M$ be a generic monitor.
  Without loss of generality, each $M^\phi$ can be constructed from a Büchi automaton
  $\mathcal{A}_\phi=\langle Q^\phi, q_0^\phi, \mu^\phi, F^\phi\rangle$ recognizing $\phi$
  by defining $y^\phi$ from the set of accepting states $F^\phi$ as
  $y^\phi(q, s) = \text{if }\mu^\phi(q, s) \in F^\phi\text{ then }1\text{ else }0$.
  Still without loss of generality, such an automaton can be assumed to have $O(2^{|\phi|})$ states~\cite{tsay2022linear}.
  A straightforward induction on the length of $s$ in \Cref{def:pcstar} shows that there exists $C$ implementing policy
  change from $\phi_1$ to $\phi_2$ iff there exists a morphism $K:Q^{\phi_1} \to Q^{\phi_2}$ such that
  \begin{align*}
    K(q_0^{\phi_1}) &= q_0^{\phi_2} \\
    \forall q \in Q^{\phi_1}, x \in 2^\Props \times \Re.~\mu^{\phi_2}(K(q), x) &= K(\mu^{\phi_1}(q, x)).
  \end{align*}
  Functions from $Q^{\phi_1}$ to $Q^{\phi_2}$ can be enumerated using
  $O(|Q^{\phi_1}|\log(|Q^{\phi_2}|))=O(e^{|\phi_1|}\cdot|\phi_2|)$ space and the two properties of $K$ checked with the same space complexity. Hence, \textsc{PolicyChange*} is in EXPSPACE.
\end{proof}

\bibliography{policy_change}

@article{journals/jacm/BasinKMZ15,
  author    = {David Basin and Felix Klaedtke and Samuel M{\"{u}}ller and Eugen Z{\u{a}}linescu},
  title     = {Monitoring Metric First-Order Temporal Properties},
  journal   = {J. {ACM}},
  volume    = {62},
  number    = {2},
  pages     = {15:1--15:45},
  year      = {2015},
  doi       = {10.1145/2699444},
}

@article{journals/rts/Koymans90,
  author    = {Ron Koymans},
  title     = {Specifying Real-Time Properties with Metric Temporal Logic},
  journal   = {Real Time Syst.},
  volume    = {2},
  number    = {4},
  pages     = {255--299},
  year      = {1990},
  doi       = {10.1007/BF01995674},
}

@article{journals/iandc/AlurH93,
  author    = {Rajeev Alur and Thomas A. Henzinger},
  title     = {Real-Time Logics: Complexity and Expressiveness},
  journal   = {Inf. Comput.},
  volume    = {104},
  number    = {1},
  pages     = {35--77},
  year      = {1993},
  doi       = {10.1006/inco.1993.1025},
}

@article{journals/jcs/ClarksonS10,
  author       = {Michael R. Clarkson and
    Fred B. Schneider},
  title        = {Hyperproperties},
  journal      = {J. Comput. Secur.},
  volume       = {18},
  number       = {6},
  pages        = {1157--1210},
  year         = {2010},
  doi          = {10.3233/JCS-2009-0393},
}

@inproceedings{conf/post/ClarksonFKMRS14,
  author       = {Michael R. Clarkson and
    Bernd Finkbeiner and
      Masoud Koleini and
      Kristopher K. Micinski and
      Markus N. Rabe and
      C{\'{e}}sar S{\'{a}}nchez},
  editor       = {Mart{\'{\i}}n Abadi and
    Steve Kremer},
  title        = {Temporal Logics for Hyperproperties},
  booktitle    = {3rd Int. Conf. on Principles of Security and Trust ({POST} 2014)},
  series       = {LNCS},
  volume       = {8414},
  pages        = {265--284},
  publisher    = {Springer},
  year         = {2014},
  doi          = {10.1007/978-3-642-54792-8\_15},
}

@inproceedings{conf/isola/BonakdarpourDP18,
  author       = {Borzoo Bonakdarpour and
    Jyotirmoy V. Deshmukh and
      Miroslav Pajic},
  editor       = {Tiziana Margaria and
    Bernhard Steffen},
  title        = {Opportunities and Challenges in Monitoring Cyber-Physical Systems
    Security},
  booktitle    = {8th Int. Symp. on Leveraging Applications of
    Formal Methods, Verification and Validation ({ISoLA} 2018), Part {IV}},
  series       = {LNCS},
  volume       = {11247},
  pages        = {9--18},
  publisher    = {Springer},
  year         = {2018},
  doi          = {10.1007/978-3-030-03427-6\_2},
}

@article{journals/ic/HoZJ21,
  author       = {Hsi{-}Ming Ho and
    Ruoyu Zhou and
      Timothy M. Jones},
  title        = {Timed hyperproperties},
  journal      = {Inf. Comput.},
  volume       = {280},
  pages        = {104639},
  year         = {2021},
  doi          = {10.1016/j.ic.2020.104639},
}

@inproceedings{conf/concur/FinkbeinerH16,
  author       = {Bernd Finkbeiner and
    Christopher Hahn},
  editor       = {Jos{\'{e}}e Desharnais and
    Radha Jagadeesan},
  title        = {Deciding Hyperproperties},
  booktitle    = {27th Int. Conf. on Concurrency Theory ({CONCUR} 2016)},
  series       = {LIPIcs},
  volume       = {59},
  pages        = {13:1--13:14},
  publisher    = {Schloss Dagstuhl -- Leibniz-Zentrum f{\"{u}}r Informatik},
  year         = {2016},
  doi          = {10.4230/LIPIcs.CONCUR.2016.13},
}

@article{fmsd/HavelundPU20,
  author       = {Klaus Havelund and Doron Peled and Dogan Ulus},
  title        = {First-order temporal logic monitoring with {BDDs}},
  journal      = {Formal Methods Syst. Des.},
  volume       = {56},
  number       = {1},
  pages        = {1--21},
  year         = {2020},
  doi          = {10.1007/s10703-018-00327-4},
}

@inproceedings{journals/entcs/ThatiR05,
  author       = {Prasanna Thati and
    Grigore Ro{\c{s}}u},
  editor       = {Klaus Havelund and
    Grigore Ro{\c{s}}u},
  title        = {Monitoring Algorithms for Metric Temporal Logic Specifications},
  booktitle    = {4th Workshop on Runtime Verification ({RV@ETAPS} 2004)},
  series       = {Electronic Notes in Theoretical Computer Science},
  volume       = {113},
  pages        = {145--162},
  publisher    = {Elsevier},
  year         = {2004},
  doi          = {10.1016/j.entcs.2004.01.029},
}

@article{journals/jacm/AlurH94,
  author       = {Rajeev Alur and Thomas A. Henzinger},
  title        = {A Really Temporal Logic},
  journal      = {J. {ACM}},
  volume       = {41},
  number       = {1},
  pages        = {181--204},
  year         = {1994},
  doi          = {10.1145/174644.174651},
}

@inproceedings{conf/seams/CarwehlVRG23,
  author       = {Marc Carwehl and
    Thomas Vogel and
      Nunes Rodrigues, Gena{\'{\i}}na and
      Lars Grunske},
  title        = {Runtime Verification of Self-Adaptive Systems with Changing Requirements},
  booktitle    = {18th {IEEE/ACM} Symp. on Software Engineering for Adaptive and
    Self-Managing Systems ({SEAMS} 2023)},
  pages        = {104--114},
  publisher    = {{IEEE}},
  year         = {2023},
  doi          = {10.1109/SEAMS59076.2023.00024},
}

@inproceedings{conf/icse/GhezziGM12,
  author       = {Carlo Ghezzi and
    Joel Greenyer and
      Valerio Panzica La Manna},
  editor       = {Hausi A. M{\"{u}}ller and
    Luciano Baresi},
  title        = {Synthesizing dynamically updating controllers from changes in scenario-based
    specifications},
  booktitle    = {7th Int. Symp. on Software Engineering for Adaptive and
    Self-Managing Systems ({SEAMS} 2012)},
  pages        = {145--154},
  publisher    = {{IEEE} Computer Society},
  year         = {2012},
  doi          = {10.1109/SEAMS.2012.6224401},
}

@inproceedings{conf/iwssd/FeatherFLP98,
  author       = {Martin S. Feather and
    Stephen Fickas and
      Axel van Lamsweerde and
      Christophe Ponsard},
  title        = {Reconciling System Requirements and Runtime Behavior},
  booktitle    = {9th Int. Workshop on Software Specification
    and Design ({IWSSD} 1998)},
  pages        = {50--59},
  publisher    = {{IEEE} Computer Society},
  year         = {1998},
  doi          = {10.5555/857205.858297},
}

@inproceedings{conf/fm/PnueliZ06,
  author       = {Amir Pnueli and
    Aleksandr Zaks},
  editor       = {Jayadev Misra and
    Tobias Nipkow and
      Emil Sekerinski},
  title        = {{PSL} Model Checking and Run-Time Verification Via Testers},
  booktitle    = {14th Int. Symp. on Formal
    Methods ({FM} 2006)},
  series       = {LNCS},
  volume       = {4085},
  pages        = {573--586},
  publisher    = {Springer},
  year         = {2006},
  doi          = {10.1007/11813040\_38},
}

@inproceedings{conf/ijcai/DeGiacomo13,
  author       = {Giuseppe De Giacomo and Moshe Y. Vardi},
  editor       = {Francesca Rossi},
  title        = {Linear Temporal Logic and Linear Dynamic Logic on Finite Traces},
  booktitle    = {23rd Int. Joint Conf. on Artificial Intelligence ({IJCAI} 2013)},
  pages        = {854--860},
  publisher    = {{IJCAI/AAAI}},
  year         = {2013},
}

@mastersthesis{Yuan22,
  author={Simon Yuan},
  title={Policy change in {MonPoly}},
  school={{ETH} Zurich, Switzerland},
  year={2022},
}

@article{journals/fmsd/BasinBKT19,
  author       = {David Basin and
                  Bhargav Nagaraja Bhatt and
                  Sr{\dj}an Krsti{\'c} and
                  Dmitriy Traytel},
  title        = {Almost event-rate independent monitoring},
  journal      = {Formal Methods Syst. Des.},
  volume       = {54},
  number       = {3},
  pages        = {449--478},
  year         = {2019},
  doi          = {10.1007/s10703-018-00328-3},
}

@inproceedings{conf/atva/RaszykBKT19,
  author       = {Martin Raszyk and
                  David Basin and
                  Sr{\dj}an Krsti{\'c} and
                  Dmitriy Traytel},
  editor       = {Yu{-}Fang Chen and
                  Chih{-}Hong Cheng and
                  Javier Esparza},
  title        = {Multi-head Monitoring of Metric Temporal Logic},
  booktitle    = {17th Int. Symp. on Automated Technology for Verification and Analysis (ATVA 2019)},
  series       = {LNCS},
  volume       = {11781},
  pages        = {151--170},
  publisher    = {Springer},
  year         = {2019},
  doi          = {10.1007/978-3-030-31784-3\_9},
}

@inproceedings{conf/tacas/LimaHRTY23,
  author       = {Leonardo Lima and
                  Andrei Herasimau and
                  Martin Raszyk and
                  Dmitriy Traytel and
                  Simon Yuan},
  editor       = {Sriram Sankaranarayanan and
                  Natasha Sharygina},
  title        = {Explainable Online Monitoring of Metric Temporal Logic},
  booktitle    = {29th Int. Conf. on Tools and Algorithms for the Construction and Analysis of Systems (TACAS 2023), Part II},
  series       = {LNCS},
  volume       = {13994},
  pages        = {473--491},
  publisher    = {Springer},
  year         = {2023},
  doi          = {10.1007/978-3-031-30820-8\_28},
}

@Inbook{book/Haase10,
author="Haase, Christoph
and Ouaknine, Jo{\"e}l
and Worrell, James",
title="On Process-Algebraic Extensions of Metric Temporal Logic",
bookTitle="Reflections on the Work of C.A.R. Hoare",
year="2010",
publisher="Springer London",
address="London",
pages="283--300",
isbn="978-1-84882-912-1",
doi="10.1007/978-1-84882-912-1_13",
url="https://doi.org/10.1007/978-1-84882-912-1_13"
}

@article{journal/lmcs/ouaknine2007decidability,
	title={On the decidability and complexity of metric temporal logic over finite words},
	author={Ouaknine, Jo{\"e}l and Worrell, James},
	journal={Logical Methods in Computer Science},
	volume={3},
	year={2007},
	publisher={Episciences. org}
}

@article{laroussinie2006efficient,
  title={Efficient timed model checking for discrete-time systems},
  author={Laroussinie, Fran{\c{c}}ois and Markey, Nicolas and Schnoebelen, Ph},
  journal={Theoretical Computer Science},
  volume={353},
  number={1-3},
  pages={249--271},
  year={2006},
  publisher={Elsevier}
}

@incollection{tsay2022linear,
  title={From linear temporal logics to B{\"u}chi automata: the early and simple principle},
  author={Tsay, Yih-Kuen and Vardi, Moshe Y},
  booktitle={Model Checking, Synthesis, and Learning: Essays Dedicated to Bengt Jonsson on The Occasion of His 60th Birthday},
  pages={8--40},
  year={2022},
  publisher={Springer}
}

\end{document}